\documentclass[11pt]{article}

\usepackage[T1]{fontenc}
\usepackage{verbatim}
\usepackage{amsmath,amsthm}
\usepackage{xspace}
\usepackage{pifont}
\usepackage{amssymb}
\usepackage{dsfont}
\usepackage{makeidx}
\usepackage{mathrsfs}
\usepackage{xcolor, colortbl}
\usepackage{subcaption}
\usepackage[numbers]{natbib}
\usepackage{enumitem}
\usepackage{afterpage}
\usepackage{mathtools}
\usepackage{graphics}
\usepackage[export]{adjustbox}
\usepackage{tcolorbox}
\usepackage{authblk}
\usepackage[normalem]{ulem}
\usepackage{xurl} 
\usepackage[colorlinks,citecolor=blue,urlcolor=blue]{hyperref}
\mathtoolsset{showonlyrefs}
\usepackage{centernot}
\usepackage{algorithm}
\usepackage{algpseudocode}
\usepackage{tikz}
\usetikzlibrary{arrows.meta,calc,positioning,fit}

\def\eps{\varepsilon}

\def\Q{\mathbb{Q}}
\def\P{\mathbb{P}}

\DeclareMathOperator*{\argmax}{arg\,max}

\newcommand{\bd}{\begin{displaymath}}
\newcommand{\ed}{\end{displaymath}}
\newcommand{\be}{\begin{equation}}
\newcommand{\ee}{\end{equation}}
\newcommand{\bq}{\begin{eqnarray}}
\newcommand{\eq}{\end{eqnarray}}
\newcommand{\bn}{\begin{eqnarray*}}
\newcommand{\en}{\end{eqnarray*}}

\makeatletter
\renewcommand{\ALG@name}{Algorithm}
\makeatother

\newtheorem{theorem}{Theorem}[section]
\newtheorem{lemma}[theorem]{Lemma}
\newtheorem{proposition}[theorem]{Proposition}

\newtheorem{remark}[theorem]{Remark}

\newtheorem{definition}[theorem]{Definition}
\newtheorem{assumption}[theorem]{Assumption}
\numberwithin{equation}{section}
\begin{document}

\title{Quantum-Enhanced Sampling of Schrödinger Bridges }
\author[1]{Tom Lollier}
\author[2]{Eyal Neuman}
\affil[1]{École Normale Supérieure Paris-Saclay}
\affil[2]{Department of Mathematics, Imperial College London}
\renewcommand\Authands{ and }
\date{\today}
\maketitle

\begin{abstract}
We consider the dynamic Schrödinger bridge problem on a finite state space and exploit its Markov structure to decompose the problem into an endpoint coupling and a collection of conditional Markov bridges. To sample from the conditional bridges, we develop a quantum Gibbs sampler based on quantum walks on the bridge path space. Under uniformly positive transition probabilities, we derive a spectral-gap bound and provide an explicit procedure for preparing the initial state. Compared with the update bounds of the corresponding classical Gibbs sampler, our quantum walk-query complexity improves the dependence on the time horizon $T$ from quadratic to linear. For the endpoint coupling, we adapt a quantum box-constrained Newton method to compute the Schrödinger potentials, improving the matrix-scaling complexity, at fixed accuracy, from quadratic to \(N^{3/2}\) in the state-space size \(N\). Finally, we show that exponential reweighting extends both the Schrödinger bridge formulation and the Gibbs spectral-gap estimate to models with additive running costs.
\end{abstract}

\begin{description}
\item[Mathematics Subject Classification (2020):] 81P68, 60J10, 60J22, 65C40, 90C25
\item[Keywords:] Schrödinger Bridge, quantum walk, Sinkhorn's algorithm, quantum matrix scaling, Markov bridge, quantum Gibbs sampler, quantum speedup. 
\end{description}

\section{Introduction}
In 1931, Schr\"odinger asked how a large system of independently diffusing particles would most likely evolve between two observed empirical distributions when the final observation differs from the prediction of the diffusion \cite{Schrodinger1931}. This question gave rise to the Schr\"odinger bridge problem and prompted work on time reversal and reciprocal processes \cite{leonard2014survey}. In its modern formulation, the problem is to minimise relative entropy with respect to a reference path law, subject to prescribed initial and terminal distributions \cite{tang2026foundationsschrodingerbridgesgenerative}.

When a finite-entropy minimiser exists, it is unique and preserves the reference dynamics conditional on the endpoints \cite{leonard2014survey}. Computing the bridge therefore reduces to an optimisation over endpoint couplings. Sampling a trajectory then requires sampling its endpoints from the optimal coupling and the reference process conditional on those endpoints. For Brownian reference dynamics in Euclidean space, the coupling problem is an entropically regularised optimal transport problem, with quadratic optimal transport recovered in suitable small-noise limits \cite{mikami2004zero,leonard2014survey}. For diffusion processes, a stochastic control formulation identifies relative entropy, under suitable assumptions, with a quadratic cost for modifying the drift to attain the prescribed marginals \cite{daipra1991control,chen2016relation}. Applications to Monte Carlo sampling include the bridge samplers of Bernton et al.\ \cite{bernton2019schrodingerbridgesamplers}, developed for estimating expectations and normalising constants.

Generative modelling provides a further motivation for computing Schr\"odinger bridges and sampling their trajectories. Diffusion-based models generate samples by learning to reverse a process that gradually adds noise to data \cite{sohl2015deep,song2021score}. In this setting, the bridge formulation prescribes the data distribution and a simple noise distribution as the endpoints of a stochastic evolution over a fixed time interval. De Bortoli et al.\ \cite{debortoli2021diffusion} develop an iterative method that learns the dynamics in both directions, progressively correcting discrepancies with these prescribed distributions. New samples are then generated by starting from noise and following the learned dynamics towards the data distribution. Other applications include image restoration \cite{liu2023i2sb}, modelling cellular development \cite{lavenant2024trajectory}, and learning samplers for complex distributions \cite{liu2025adjoint}. Shi et al.\ \cite{shi2026schromind} also apply approximate Schr\"odinger bridges to multimodal large language models that answer questions about images. Their method adjusts internal numerical representations towards those associated with correct answers during text generation, aiming to reduce statements unsupported by the image. Such applications extend bridge methods from generating data to improving model responses; see \cite[Section~8]{tang2026foundationsschrodingerbridgesgenerative} for a broader survey of generative applications.

In mathematical finance, martingale Schr\"odinger bridges select models that match option prices while remaining close to a reference law in relative entropy. Henry-Labord\`ere \cite{henrylabordere2019martingale} applies this approach to stochastic volatility calibration. Nutz, Wiesel, and Zhao \cite{nutz2023martingale} establish a duality with exponential utility maximisation over semistatic portfolios in a two-period market, while Guyon \cite{guyon2024joint} addresses joint SPX--VIX smile calibration through a dispersion-constrained martingale Schr\"odinger problem. These formulations combine calibration requirements with martingale constraints.

Schr\"odinger bridges also provide methods for generating financial time series. Hamdouche, Henry-Labord\`ere, and Pham \cite{hamdouche2023timeseries} estimate a path-dependent drift from observed trajectories and use generated samples to train deep hedging strategies. Alouadi et al.\ \cite{alouadi2025robust} evaluate this approach on synthetic and stock-market data, examining temporal dependence and robustness. These applications require generating whole trajectories that reproduce dependence across time, as well as distributions at individual dates.

Motivated by these applications, we analyse quantum algorithms for Schr\"odinger bridges with finite-state Markov reference dynamics. Our approach combines quantum matrix scaling for computing the endpoint coupling with quantum walks for conditional-path sampling. The results below concern positive finite-state kernels under explicit access assumptions. 

To explain this decomposition, let $\mathbb{P}$ denote the reference law of a Markov trajectory $(X_t)_{t=0}^{T}$ on a finite state space $\chi$, and let $\mu_0$ and $\mu_T$ be the prescribed initial and terminal distributions. The Schr\"odinger bridge $\mathbb{Q}^\ast$ minimises the relative entropy, or Kullback--Leibler divergence,
$\operatorname{KL}(\mathbb{Q}\,\|\,\mathbb{P})$,
over path laws $\mathbb{Q}$ with these endpoint marginals. Writing
$\mathbb{P}_{0,T}(i,j)=\mathbb{P}(X_0=i,X_T=j)$
for the reference joint endpoint law, and assuming that this law and the prescribed marginals are strictly positive, the optimal endpoint coupling has the form
\begin{equation}\label{eq-dec}
    \mathbb{Q}_{0,T}^\ast(i,j)
    =\varphi^\circ(i)\mathbb{P}_{0,T}(i,j)\varphi^\ast(j),
    \qquad i,j\in\chi.
\end{equation}
Here, $\mathbb{Q}_{0,T}^\ast$ is the joint endpoint law under $\mathbb{Q}^\ast$, and
$\varphi^\circ,\varphi^\ast:\chi\to(0,\infty)$
are the Schr\"odinger potentials, which scale the rows and columns to have sums $\mu_0$ and $\mu_T$, respectively.

To sample from $\mathbb{Q}^\ast$, we first draw an initial state $i$ from $\mu_0$, followed by a terminal state $j$ with conditional probability
$\mathbb{Q}_{0,T}^\ast(i,j)/\mu_0(i)$.
Given this endpoint pair, we sample the intermediate states from the reference Markov bridge
$\mathbb{P}(\,\cdot\mid X_0=i,X_T=j)$.
This procedure follows from the factorisation
\[
    \mathbb{Q}^\ast(x)
    =\mathbb{Q}_{0,T}^\ast(x_0,x_T)\,
      \mathbb{P}(x\mid X_0=x_0,X_T=x_T),
    \qquad x=(x_0,\ldots,x_T)\in\chi^{T+1}.
\]
Thus, computing and sampling the endpoint coupling can be treated separately from conditional path sampling. We briefly review the quantum tools for these two tasks below.

\paragraph{Quantum walks and quantum Gibbs samplers.}
Classical random walks evolve probability distributions through stochastic transitions, whereas quantum walks evolve complex probability amplitudes through unitary operations. Amplitudes associated with different paths to the same outcome can interfere constructively or destructively, altering the probability of observing that outcome \cite{aharonov1993quantum}. Watrous \cite{watrous2001simulations} established an early algorithmic connection by showing how classical random walks on undirected graphs can be simulated by unitary quantum computations.

Szegedy's construction associates a quantum walk with a classical reversible Markov chain \cite{Szegedy}. In quantum sampling, the objective is to prepare a coherent encoding of the stationary distribution,
\[
    |\pi\rangle=\sum_x\sqrt{\pi(x)}\,|x\rangle,
\]
whose measurement yields a sample from $\pi$. A key feature of this construction is its spectral-gap dependence: for a finite, irreducible, lazy reversible chain with spectral gap $\delta$, the associated quantum walk has a phase gap of order $\sqrt{\delta}$ on the relevant invariant subspace. This permits approximate reflections about the stationary state using order $\delta^{-1/2}$ walk queries, up to accuracy-dependent factors. Quantum fast-forwarding provides access to finite-time Markov-chain evolution in a quantum encoding \cite{apers2019quantumfastforwardingmarkovchains} and can be formulated within the more general framework of quantum singular value transformation \cite{gilyen2019qsvt}. Such coherent access does not, by itself, provide classical trajectory samples satisfying the marginal constraints of a Schr\"odinger bridge.

Metropolis--Hastings (MH) samples from a target distribution $\pi$ by constructing a Markov chain with stationary distribution $\pi$. At each step, it proposes a new state and either accepts it or remains at the current state. The acceptance probability is chosen to make the chain reversible with respect to $\pi$ and requires only ratios of unnormalised target weights, together with the proposal probabilities \cite{metropolis1953equation,hastings1970monte}. Once the chain has mixed, its state provides an approximate sample from $\pi$. Gibbs sampling is a special case of MH: it selects a coordinate at random and draws its new value from the conditional distribution given all other coordinates. Because the proposal already follows this conditional distribution, every proposal is accepted and there is no rejection step.

In a quantum implementation, proposal and acceptance operations must act coherently, without measuring their outcomes. Lemieux et al.\ \cite{lemieux2020efficient} construct explicit circuits for Metropolis--Hastings updates in local Ising models, including both acceptance and rejection. An auxiliary coin encodes these alternatives: the accepting branch applies the proposed move, while the rejecting branch leaves the state unchanged. Their circuit implements the full quantum walk without the costly preparation of the complete transition state. Claudon et al.\ \cite{claudon2026quantumcircuitsmetropolishastingsalgorithm} provide a more general construction based on a dual Markov chain on pairs of states. This also accommodates rejection, but retains proposal information in the enlarged state space, avoiding explicit computation of the total rejection probability. Each walk step uses a constant number of calls to coherent proposal and acceptance routines. See Section \ref{sec-q-gibbs} for a detailed explanation of the method. 

\paragraph{Matrix scaling and the endpoint problem.}
Computing the endpoint coupling in \eqref{eq-dec} amounts to finding scaling factors that give the reference matrix $\mathbb{P}_{0,T}$ the prescribed row and column sums. Sinkhorn's algorithm does this by alternately adjusting the rows and columns. Its foundations lie in the work of Sinkhorn \cite{sinkhorn1964relationship,sinkhorn1967diagonal} and Sinkhorn and Knopp \cite{sinkhornKnopp1967}. Cuturi \cite{cuturi2013sinkhorn} subsequently popularised this approach for computing entropically regularised optimal transport.

In logarithmic scaling variables, the problem admits a convex formulation, and Sinkhorn iteration becomes alternating minimisation over the row and column variables. This perspective also leads to second-order algorithms, including the box-constrained Newton and interior-point methods of Cohen, M\k{a}dry, Tsipras, and Vladu \cite{cohen2017matrixscalingbalancingbox}.

Quantum algorithms have been developed for both approaches. Van Apeldoorn et al.\ \cite{vanapeldoorn2020quantumalgorithmsmatrixscaling} use amplitude estimation to accelerate Sinkhorn updates. Gribling and Nieuwboer \cite{Improved-quantum-lower-and-upper-bounds-for-matrix-scaling} combine amplitude estimation and quantum graph sparsification within a box-constrained Newton method. Their bounds depend on accuracy and the assumed quantum access to the input matrix. Section~\ref{sec:quantum-sinkhorn-alg} applies the latter method to compute an implicit representation of the endpoint coupling through its scaling factors. Drawing an endpoint pair remains an additional computational task.

We present our results on Markov bridge sampling using a quantum Gibbs sampler and on computing the Schrödinger potentials using a quantum matrix scaling method. These methods improve the state-size dependence of the endpoint computation and provide a square-root improvement in the dependence on the inverse spectral gap for the local conditional sampler.

Our analysis distinguishes the number of calls to transition and state-preparation primitives from the total computational time. Throughout this overview, \(0<\varepsilon<1\) denotes the required total variation accuracy: the probability of any event differs from its target probability by at most \(\varepsilon\).

\textbf{Markov bridge sampling.} For fixed endpoints $a$ and $b$ with $P^T(a,b)>0$, the challenge is to sample an entire trajectory from the reference Markov bridge $\mathbb{P}_{a,b}$, given by
\[
    \mathbb{P}_{a,b}(x_1,\ldots,x_{T-1})
    =\frac{\prod_{t=0}^{T-1}P(x_t,x_{t+1})}{P^T(a,b)},
    \qquad x_0=a,\quad x_T=b.
\]
Although the number of possible interior paths is $N^{T-1}$, the Markov structure permits local updates: a Gibbs sampler selects an intermediate time and resamples the state using only its two neighbours. These updates preserve the endpoints and do not require evaluating the probability of the endpoint event. The main analytical challenge is to determine how rapidly the sampler converges to the conditional distribution of complete trajectories.

Theorem~\ref{cor:upper-bound-spectral-gap} addresses this difficulty by establishing a lower bound on the spectral gap of the Gibbs sampler on path space:
\[
    \delta \geq \frac{\omega_-^4}{T^2N^3},
\]
where $\omega_-$ is the uniform positivity parameter for the reference kernel $P$, defined in \eqref{om}. The proof uses the canonical paths method. This yields a polynomial bound on the sampler's relaxation time despite the exponentially large path space and provides the quantitative input for its quantum counterpart.

Building on the quantum-walk framework of \citet{claudon2026quantumcircuitsmetropolishastingsalgorithm}, Theorem~\ref{th:complexity-gibbs-sampler} establishes a quantum sampler for $\mathbb{P}_{a,b}$. Conditional on a detectable success event with probability bounded below by a universal constant, measuring the prepared state yields a trajectory whose distribution is within total variation distance $\varepsilon$ of $\mathbb{P}_{a,b}$. The algorithm uses
\[
    O\!\left(
        \frac{TN^2}{\omega_-^{5/2}}
        \log\frac{N}{\omega_-\varepsilon}
    \right)
\]
calls to the bridge-walk primitives: the coherent Gibbs-update oracle $O_K$, its adjoint $O_K^*$, and the register-swap operator $S$, defined in Section~\ref{sec-q-gibbs}. It also uses $O(T\sqrt{N/\omega_-})$ calls to coherent row preparation for the reference kernel $P$ and its inverse.

The initial state is prepared by running the reference dynamics for
$T-1$ steps from $a$ and appending $b$. Its overlap with the target
coherent bridge state is at least $\sqrt{\omega_-/N}$. Combining this
bound with the quantum walk's square-root improvement in spectral-gap
dependence and an error bound for the measured path distribution yields
the stated sampling guarantee.

Using the same initial path distribution, we obtain a classical bound of
\[
    O\!\left(
        \frac{T^2N^3}{\omega_-^4}
        \log\frac{N}{\omega_-\varepsilon}
    \right)
\]
Gibbs updates (see Remark~\ref{rem:classical-gibbs-comparison}). Up to the displayed logarithmic factor, the quantum bound is linear in $T$, whereas the classical Gibbs bound is quadratic. These bounds concern the specified procedures and do not establish optimal classical or quantum complexities. Indeed, direct classical rejection sampling from the reference chain started at $a$ already requires at most $TN/\omega_-$ forward transitions in expectation. The contribution is therefore a bridge-specific analysis of the spectral gap, initialisation and sampling error for local Gibbs and quantum-walk methods; it does not establish an unconditional quadratic speedup in the horizon.

\paragraph{Computing the endpoint coupling.}
Theorem~\ref{th:quantum-static-bridge} addresses the complementary task
of computing the Schr\"odinger potentials. Using the quantum
box-constrained Newton method of
\citet{Improved-quantum-lower-and-upper-bounds-for-matrix-scaling},
we compute approximate Schr\"odinger potentials
$\varphi^\circ$ and $\varphi^\ast$ in \eqref{eq-dec}, which
implicitly specify an endpoint coupling. Our analysis translates the method's optimisation
error into a bound on the total variation distance from the exact
coupling, ensuring accuracy for the full joint endpoint distribution.

With probability at least $2/3$, the algorithm returns two scaling
vectors that specify an approximate endpoint coupling within total
variation distance $\varepsilon$ of the exact coupling. The quantum
running time is
\[
    \widetilde O\!\left(
        \frac{B^2N^{3/2}}{\varepsilon^2}
    \right),
\]
where $B$ bounds the magnitude of the logarithmic potentials and
$\widetilde O$ suppresses logarithmic factors.
Remark~\ref{rem:static-uniform-positivity} shows that, when the transition
probabilities and both prescribed marginals satisfy the stated uniform
positivity condition, one can take $B=O(\log(N/\omega_-))$.
For fixed $\omega_-$, the complexity therefore simplifies to
$\widetilde O(N^{3/2}/\varepsilon^2)$.

Remark~\ref{rem:static-speedup} compares this result with the
$\widetilde O(BN^2)$ bound for the classical box-constrained Newton method. At fixed accuracy and with $B$ growing at most polylogarithmically, the state-size dependence improves from $N^2$ to $N^{3/2}$. The advantage is precision-dependent: comparison of the bounds gives an improvement when
$\varepsilon^2\gg B/\sqrt N$.  

These endpoint-computation bounds assume the coherent input access
and computational model of Assumption~\ref{assump-box}; constructing
the endpoint kernel $P^T$ and its oracle representation incurs
additional costs. Once the scaling vectors have been computed,
an endpoint pair can be sampled using $O(N)$ additional entry
evaluations and arithmetic operations. As detailed in
Section~\ref{rem:static-output}, running the endpoint algorithm with
accuracy $\varepsilon/2$ ensures that the resulting sample has
total variation error at most $\varepsilon$.

Combining this procedure with the quantum conditional-path sampler
yields a complete quantum-enhanced sampling framework for the dynamic
Schr\"odinger bridge. The two quantum improvements have distinct scopes:
the endpoint routine improves the state-size dependence of matrix
scaling under Assumption~\ref{assump-box}, while the conditional
routine provides a square-root improvement in spectral-gap dependence
for the specified local Gibbs dynamics. Figure~\ref{fig:sb-algorithm-overview} summarises the endpoint and Markov bridge components of the problem.

\begin{figure}[!htb]
  \centering
  \captionsetup{font=small}
\begingroup
\definecolor{SBink}{HTML}{203040}
\definecolor{SBline}{HTML}{637484}
\definecolor{SBblue}{HTML}{245B83}
\definecolor{SBbluefill}{HTML}{EFF5FA}
\definecolor{SBteal}{HTML}{176B63}
\definecolor{SBtealfill}{HTML}{EEF7F4}
\definecolor{SBamber}{HTML}{8B652B}
\begin{tikzpicture}[
  font=\fontsize{9}{11}\selectfont, text=SBink,
  box/.style={draw=SBline,line width=.6pt,rounded corners=2pt,
    align=center,inner xsep=6pt,inner ysep=6pt,text width=6.45cm},
  wide/.style={box,text width=14cm,fill=black!2},
  endpoint/.style={box,draw=SBblue,fill=SBbluefill},
  bridge/.style={box,draw=SBteal,fill=SBtealfill},
  extra/.style={box,draw=SBamber,fill=yellow!3,dashed},
  arrow/.style={-{Latex[length=2mm,width=1.4mm]},
    draw=SBline,line width=.7pt},
  row/.style={fit=#1,inner sep=0pt,outer sep=0pt}
]
\node[wide] (decomp) {
  \textbf{Schr\"odinger bridge: endpoint law + conditional path}
  \\[3pt]
  $\mathbb Q^*(x_{0:T})=\mathbb Q^*_{0,T}(a,b)\,
    \mathbb P_{a,b}(x_{0:T})$
  \quad Lemma~\ref{lem:dynamic-bridge-to-static-bridge}
};

\coordinate (fork) at ($(decomp.south)+(0,-.25)$);
\node[endpoint,anchor=north,minimum height=3.4cm] (static)
  at ($(fork)+(-3.78,-.25)$) {
  \textbf{Compute the endpoint coupling}\\[3pt]
  Quantum scaling: Theorem~\ref{th:quantum-static-bridge}
  \\[4pt]
  $A_{ij}=\mu_0(i)P^T(i,j)$
  \\[3pt]
  $\overline{\mathbb Q}_{0,T}(i,j)
    \propto A_{ij}e^{\bar x_i+\bar y_j}$
  \\[3pt]
  $d_{\mathrm{TV}}(\overline{\mathbb Q}_{0,T},
    \mathbb Q^*_{0,T})\le\varepsilon_s$
  \\[4pt]
  $\widetilde O(B^2N^{3/2}/\varepsilon_s^2)$
  \ quantum time
};

\node[bridge,anchor=north,minimum height=3.4cm] (conditional)
  at ($(fork)+(3.78,-.25)$) {
  \textbf{Construct the conditional sampler}\\[3pt]
  Local Gibbs updates: Theorem~\ref{cor:upper-bound-spectral-gap}
  \\[3pt]
  $\delta\ge\omega_-^4/(T^2N^3)$
  \\[4pt]
  Quantum walk: Theorem~\ref{th:complexity-gibbs-sampler}
  \\[3pt]
  $d_{\mathrm{TV}}(\rho_{a,b},\mathbb P_{a,b})\le\varepsilon_b$
  \\[4pt]
  {$O\!\left(\frac{TN^2}{\omega_-^{5/2}}
    \log\frac{N}{\omega_-\varepsilon_b}\right)$}
  \\[2pt] calls to $O_K$, $O_K^*$ and $S$
  \\[2pt] {plus $O(T\sqrt{N/\omega_-})$ calls to $U_P,U_P^*$}
};

\node[row={(static)(conditional)}] (methods) {};
\node[extra,anchor=north,minimum height=1.05cm] (endpoints)
  at ($(methods.south)+(-3.78,-.38)$) {
  \textbf{Draw endpoints $(a,b)$}\\[2pt]
  $O(N)$ extra work: Section~\ref{rem:static-output}
};
\node[bridge,anchor=north,minimum height=1.05cm] (sample)
  at ($(methods.south)+(3.78,-.38)$) {
  \textbf{Draw a path given $(a,b)$}\\[2pt]
  Herald success, then measure the path
};

\node[row={(endpoints)(sample)}] (sampling) {};
\node[wide,below=3.8mm of sampling] (output) {
  \textbf{Trajectory law:}\quad
  $\widetilde{\mathbb Q}(x_{0:T})
    =\widetilde{\mathbb Q}_{0,T}(a,b)\,\rho_{a,b}(x_{0:T})$
  \\[3pt]
  $d_{\mathrm{TV}}(\widetilde{\mathbb Q},\mathbb Q^*)
    \le 2\varepsilon_s+\varepsilon_b$
  on a successful scaling run
};

\draw[draw=SBline,line width=.7pt] (decomp.south) -- (fork);
\draw[arrow] (fork) -| (static.north);
\draw[arrow] (fork) -| (conditional.north);
\draw[arrow,dashed,draw=SBamber] (static.south) -- (endpoints.north);
\draw[arrow] (conditional.south) -- (sample.north);
\draw[arrow,dashed,draw=SBamber] (endpoints.east) -- (sample.west);
\draw[arrow] (sample.south) -- ++(0,-.19) -| (output.north);
\end{tikzpicture}
\endgroup
  \caption{Roadmap of the sampling procedure.
The dashed step samples $\widetilde{\mathbb Q}_{0,T}$
using the row procedure in Section~\ref{rem:static-output},
with additional work beyond computing the scaling vectors.
The path-error bound assumes a successful scaling run
and ideal endpoint sampling. The conditional sampler
requires coherent row preparation for $P$ and the bridge
Gibbs kernel. After failure, the entire conditional
procedure is repeated independently with the endpoints
fixed, requiring a constant expected number of attempts.}
  \label{fig:sb-algorithm-overview}
\end{figure}

\paragraph{Bridges with additive running costs.}
In Section~\ref{sec:extension-with-additive-cost} we extend the analysis to trajectories that must meet the endpoint constraints while also favouring states with lower running costs. For a finite function $g:\chi\to[0,\infty)$, the objective adds the expected accumulated cost $\mathbb E_{\mathbb Q}[\sum_{t=0}^{T-1}g(X_t)]$ to the relative entropy. Proposition~\ref{prop-costs} rewrites this problem as a Schr\"odinger bridge with the normalised reference law
\[
    \widehat{\mathbb V}(x)
    =\frac{1}{Z_T}\mathbb P(x)
      \exp\!\left(-\sum_{t=0}^{T-1}g(x_t)\right),
\]
where $Z_T$ makes the total mass equal to one. This preserves the endpoint--conditional-path decomposition, but both components now involve the reweighted reference. The endpoint matrix of the reweighted reference law is obtained
from powers of $D_gP$, where
$D_g=\operatorname{diag}(e^{-g(i)})$. Applying the endpoint algorithm
requires checking that this new matrix satisfies the algorithm's
input-access and precision assumptions.

The main quantitative result is a gap estimate for local Gibbs updates under the reweighted conditional law $\widehat{\mathbb V}_{a,b}$. Let
$\Delta_g=\max_{i\in\chi}g(i)-\min_{i\in\chi}g(i)$.
Proposition~\ref{prop-rev-g} establishes reversibility and a nonnegative spectrum, and Theorem~\ref{cor:upper-bound-spectral-gap-g} proves
\[
    \delta_g\ge
    \frac{\omega_-^4}{T^2N^3}e^{-\Delta_g}.
\]
The weighted canonical-path argument controls the cost penalty through the range of a single-step cost, rather than the full accumulated range $T\Delta_g$. Consequently, for fixed $\Delta_g$, $N$, and $\omega_-$, the resulting relaxation-time bound remains quadratic in $T$. This gives quantitative control of local sampling even though the trajectory weights can differ exponentially with the horizon.

Theorem~\ref{th:complexity-gibbs-sampler-g} then gives a quantum
conditional-path sampler, provided the initial-state preparation $A_g$
can be implemented and reversed, the costed Gibbs updates are available
coherently, and the initial state has a known overlap of at least
$c_g>0$ with the target bridge state. The algorithm signals success
with probability bounded below by a universal constant. Conditional
on success, its output distribution has total variation error at most
$\varepsilon$. The resulting complexity bound is

\[
    O\!\left(
    \frac{TN^{3/2}e^{\Delta_g/2}}{c_g\omega_-^2}
    \log\frac{2}{c_g\varepsilon}
    \right)
\]
calls to the costed quantum-walk primitives, together with $O(1/c_g)$ calls to $A_g$ and its inverse. For fixed $N$, $\omega_-$, $\Delta_g$, and $c_g$, the displayed walk-query bound is linear in $T$. An overall running-time claim must also account for constructing and implementing the preparation and update routines.

A favourable spectral gap does not by itself guarantee an effective
quantum sampler: the initial state must also have sufficient overlap
with the target bridge state. Section~\ref{sec:extension-with-additive-cost} gives a two-state example
showing why initialisation matters. The usual initial state is prepared
without accounting for the running cost, whereas the target bridge
favours paths with lower accumulated cost. As the horizon grows, the
overlap between these two states decreases exponentially, even though
the range $\Delta_g$ of the single-step cost remains fixed. This small
overlap increases the cost of the quantum sampler. An initial state
that incorporates the running cost could avoid this difficulty, but
its preparation cost and overlap with the target would need to be
analysed separately.

Together with the classical bounds in
Remark~\ref{rem:cost-classical-comparison}, this example shows that
the running cost affects both the mixing of the Gibbs sampler
and the quality of the initial state. The quantum method retains
its square-root improvement in inverse spectral-gap dependence,
but this alone does not guarantee a speedup for arbitrary running costs.

\paragraph{Organisation of the paper.}
Section~\ref{sec:markov-chains-basics} reviews Markov-chain preliminaries.
Section~\ref{sec:setup-gibbs-sampler} introduces the classical Gibbs
sampler for Markov bridges and bounds its spectral gap.
Section~\ref{sec-q-gibbs} develops the quantum sampler and analyses its
complexity. Section~\ref{sec:quantum-sinkhorn-alg} addresses quantum
computation of the endpoint coupling, while
Section~\ref{sec:extension-with-additive-cost} treats the Schrödinger bridges with additive running costs. Sections~\ref{proof-d-s}--\ref{sec-pr-additive}
contain the proofs of the main results. Additional details on the numerical implementation of endpoint sampling are given in Appendix \ref{app:endpoint-precision}.

\section{Schrödinger Bridges} \label{sec:model}
\paragraph{Static Schrödinger Bridges.} 
Let $E$ be a finite state space equipped with its canonical $\sigma$-algebra.
For probability measures $\mathbb P$ and $\mathbb Q$ on $E$, the relative
entropy, or Kullback--Leibler (KL) divergence, is
\be \label{KL}
\operatorname{KL}(\mathbb Q\,\|\,\mathbb P)
=\begin{cases}
\displaystyle \mathbb E_{\mathbb Q}\!\left[
\log\frac{\mathrm d\mathbb Q}{\mathrm d\mathbb P}\right]
=\sum_{x\in E:\,\mathbb Q(x)>0}\mathbb Q(x)
\log\!\left(\frac{\mathbb Q(x)}{\mathbb P(x)}\right),
&\mathbb Q\ll\mathbb P,\\[4pt]
+\infty,&\mathbb Q\not\ll\mathbb P.
\end{cases}
\ee
Thus $\operatorname{KL}(\mathbb Q\,\|\,\mathbb P)$ measures the discrepancy
of $\mathbb Q$ from the reference law $\mathbb P$.

Let $\chi=\{1,\dots,N\}$.  Let $\mu$ and $\nu$ be probability measures on
$\chi$, let $p$ be a reference probability measure on $\chi^2$, and set
\[
\Pi(\mu,\nu)=\{q\in\mathcal P(\chi^2):q_0=\mu,\ q_1=\nu\}.
\]
We assume the finite-feasibility condition $\Pi(\mu,\nu)\cap\{q:q\ll p\}\ne\varnothing$.
The static Schrödinger bridge is then defined by 
\be \label{stat-br}
q^*=\underset{q\in\Pi(\mu,\nu)}{\arg\min}
\operatorname{KL}(q\,\|\,p).
\ee
The feasible set with finite KL divergence is compact, the KL divergence is
continuous on its supporting face, and it is strictly convex there; hence
$q^*$ exists and is unique.  Without the
finite-feasibility condition, all feasible couplings may have infinite KL
divergence, in which case the displayed optimisation does not specify a
unique bridge.

We write $p\sim\mu\otimes\nu$ for mutual absolute continuity.  The following
standard result identifies $q^*$ through Schrödinger potentials.

\begin{proposition}[cf.\ Proposition 1.8 in
\cite{tang2026foundationsschrodingerbridgesgenerative}]
\label{prop:static-potentials}
Let $\mu$ and $\nu$ be probability measures on the finite set
$\chi$, and let $p$ be a probability measure on $\chi\times\chi$
such that $p\sim\mu\otimes\nu$. Set
\[
    A=\operatorname{supp}(\mu),
    \qquad B=\operatorname{supp}(\nu).
\]
Then, there exist strictly positive functions
$\varphi^\circ:A\to(0,\infty)$ and
$\varphi^*:B\to(0,\infty)$ satisfying
\begin{equation}\label{eq:schrodinger-equation}
\begin{aligned}
\mu(x)
    &=\varphi^\circ(x)\sum_{z\in B}p(x,z)\varphi^*(z),
    && x\in A,\\
\nu(y)
    &=\varphi^*(y)\sum_{z\in A}p(z,y)\varphi^\circ(z),
    && y\in B.
\end{aligned}
\end{equation}
These functions are unique up to the reciprocal rescaling
\[
    (\varphi^\circ,\varphi^*)
    \mapsto(c\varphi^\circ,c^{-1}\varphi^*),
    \qquad c>0.
\]
Moreover, the unique minimiser of $\operatorname{KL}(\cdot\,\|\,p)$ in \eqref{stat-br} is given by 
\[
q^*(x,y)=
\begin{cases}
\varphi^\circ(x)p(x,y)\varphi^*(y),
    & (x,y)\in A\times B,\\
0,  & (x,y)\notin A\times B.
\end{cases}
\]
\end{proposition}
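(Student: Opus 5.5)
Since $p\sim\mu\otimes\nu$ on the finite set $\chi$, we have $p(x,y)>0$ exactly when $(x,y)\in A\times B$. Every $q\in\Pi(\mu,\nu)$ is supported in $A\times B$, so $q\ll p$ holds automatically. The finite-feasibility condition is therefore met, for instance by $\mu\otimes\nu$, and the existence and uniqueness of $q^*$ noted above applies. The plan is to reduce everything to the positive matrix $K=(p(x,y))_{x\in A,y\in B}$ with all entries strictly positive. We then prove existence of the potentials through a strictly convex dual problem, and identify $q^*$ by a direct KL computation that avoids Lagrange multipliers.

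\emph{Existence.} Consider on $\mathbb R^A\times\mathbb R^B$ the function
\[
\Phi(f,g)=\sum_{x\in A,y\in B}p(x,y)e^{f(x)+g(y)}-\sum_{x\in A}\mu(x)f(x)-\sum_{y\in B}\nu(y)g(y).
\]
Setting $\varphi^\circ=e^f$ and $\varphi^*=e^g$, the critical-point equations $\nabla\Phi=0$ are exactly \eqref{eq:schrodinger-equation}. The function $\Phi$ is convex, and it is invariant along the direction $(c,-c)$ because $\mu$ and $\nu$ both have total mass one. We therefore restrict to the slice $\sum_{y}g(y)=0$ and show that $\Phi$ is coercive there.

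Coercivity is the main obstacle. Along a ray $(f,g)=s(u,v)$ with $s\to\infty$, suppose first that $u(x)+v(y)>0$ for some pair $(x,y)$. Then the exponential term grows like $e^{cs}$ for some $c>0$, while the linear terms grow only linearly, so $\Phi\to\infty$. Otherwise $u(x)+v(y)\le0$ for all pairs. In that case
\[
\sum_x\mu(x)u(x)+\sum_y\nu(y)v(y)=\sum_{x,y}\mu(x)\nu(y)\bigl(u(x)+v(y)\bigr)\le0,
\]
with equality only if $u(x)+v(y)=0$ for all $x,y$. Equality forces $u\equiv c$ and $v\equiv -c$, and the slice normalisation then gives $c=0$. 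Hence for any nonzero direction in the slice the linear part tends to $+\infty$, and $\Phi\to\infty$ along that ray. A compactness argument on the unit sphere of the slice upgrades this ray-wise growth to genuine coercivity. A minimiser therefore exists, and it solves \eqref{eq:schrodinger-equation} with strictly positive potentials.

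\emph{Uniqueness.} On the slice, the Hessian of $\Phi$ is the matrix $\sum_{x,y}p(x,y)e^{f(x)+g(y)}(e_x+e_y)(e_x+e_y)^\top$. Its kernel consists of the vectors $(u,v)$ with $u(x)+v(y)=0$ for all pairs, which by the argument above is exactly the direction $(c,-c)$. So $\Phi$ is strictly convex modulo $(c,-c)$, and the critical point is unique up to that direction. This is precisely the stated reciprocal rescaling $(c\varphi^\circ,c^{-1}\varphi^*)$.

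\emph{Identification of $q^*$.} Define $\bar q=\varphi^\circ p\varphi^*$ on $A\times B$ and $\bar q=0$ elsewhere. By \eqref{eq:schrodinger-equation}, $\bar q\in\Pi(\mu,\nu)$. Take any $q\in\Pi(\mu,\nu)$; it is supported in $A\times B$. Writing $\log(q/p)=\log(q/\bar q)+\log\varphi^\circ(x)+\log\varphi^*(y)$ gives
\[
\operatorname{KL}(q\|p)=\operatorname{KL}(q\|\bar q)+\sum_x\mu(x)\log\varphi^\circ(x)+\sum_y\nu(y)\log\varphi^*(y).
\]
The two sums depend only on the marginals of $q$, which are fixed at $\mu$ and $\nu$. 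Since $\operatorname{KL}(q\|\bar q)\ge0$ with equality iff $q=\bar q$, it follows that $\bar q$ is the unique minimiser, so $\bar q=q^*$.
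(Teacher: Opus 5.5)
Your argument is correct and complete. The paper gives no proof of this proposition; it imports it from the cited reference. Your route nonetheless fits the paper's own machinery closely. $\Phi$ is the matrix-scaling potential $f$ of \eqref{eq:static-convex-potential} with the variables restricted to $A$ and $B$, so $\Phi=f$ in the entrywise-positive setting of Section~\ref{sec:quantum-sinkhorn-alg}. Your identification step is the same exact-decomposition computation the paper uses in \eqref{eq:static-objective-KL-identity} to turn optimisation error into a KL bound.

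You run the logic in the opposite direction from the paper. In the proof of Lemma~\ref{lem:static-scaling-bound}, the paper obtains a minimiser of $f$ by invoking this proposition. You instead derive the proposition from a minimiser produced by coercivity, which would make that section self-contained.

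Compared with the usual primal argument (first-order conditions for $\operatorname{KL}(\cdot\,\|\,p)$ on $\Pi(\mu,\nu)$), your dual route avoids one step. The primal route must first show that $q^*$ charges every point of $A\times B$ before it can differentiate. You do not need this, because the identity $\operatorname{KL}(q\|p)=\operatorname{KL}(q\|\bar q)+\langle\mu,\log\varphi^\circ\rangle+\langle\nu,\log\varphi^*\rangle$ certifies optimality and uniqueness of $\bar q$ directly. The trade-off is that coercivity of the dual is a finite-dimensional compactness argument, so your route is tied to the finite state space, which is all the paper needs.

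Two minor points.

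First, the passage from divergence along each ray to coercivity is where convexity is actually used, and it is worth saying so. Alternatively, your dichotomy is already uniform. The map $(u,v)\mapsto\max\{\max_{x,y}(u(x)+v(y)),\,-\langle\mu,u\rangle-\langle\nu,v\rangle\}$ is continuous and positive on the unit sphere of the slice, so it is at least some $m>0$ there. This gives $\Phi(s(u,v))\ge\min\{\underline{p}\,e^{sm}-2s,\ sm\}$ with $\underline{p}=\min_{A\times B}p$.

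Second, the Hessian computation is dispensable. Suppose $(\varphi^\circ,\varphi^*)$ and $(\psi^\circ,\psi^*)$ both solve \eqref{eq:schrodinger-equation}. Your identification step shows that both products equal $q^*$, so $\varphi^\circ(x)\varphi^*(y)=\psi^\circ(x)\psi^*(y)$ on $A\times B$. This forces $\varphi^\circ/\psi^\circ$ and $\psi^*/\varphi^*$ to be the same positive constant, which is exactly the reciprocal-rescaling uniqueness.
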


Consequently, computing the static bridge reduces to solving the
Schrödinger equations in \eqref{eq:schrodinger-equation}.

\paragraph{Dynamic Schrödinger Bridges.} 
Let $T\ge1$, let
$\Omega=\chi^{T+1}$, and write a path as
$x_{0:T}=(x_0,\dots,x_T)$.  Let $P$ be a Markov transition matrix on $\chi$
and let $\mu_0$ be the initial probability measure. We define $\{X_t\}_{t\geq0}$ as the Markov chain on $\chi$ with the transition matrix $P$. The reference path measure is defined as follows: 
\begin{equation}\label{eq:definition-probability-P}
\mathbb P(x_{0:T})=\mu_0(x_0)\prod_{t=0}^{T-1}P(x_t,x_{t+1}), \quad x_{0:T} \in \Omega. 
\end{equation}
Let $\mu_T$ be the prescribed final law and define
\be \label{lam-def}
\Lambda(\mu_0,\mu_T)
=\{\mathbb Q\in\mathcal P(\Omega):\mathbb Q_0=\mu_0,
\ \mathbb Q_T=\mu_T\}, 
\ee
where $\mathcal P(\Omega)$ is the set of probability measures on $\Omega$. 

We assume that there exists $\mathbb Q\in\Lambda(\mu_0,\mu_T)$ with
$\mathbb Q\ll\mathbb P$.  
Under this assumption the dynamic
Schrödinger bridge is the unique finite-KL minimiser
\be \label{dyn-br}
\mathbb Q^*=\underset{\mathbb Q\in\Lambda(\mu_0,\mu_T)}{\arg\min}
\operatorname{KL}(\mathbb Q\,\|\,\mathbb P).
\ee
For $a,b\in\chi$, define the path space, 
\be\label{eq:definition-Omega-a-b}
\Omega_{a,b}=\{x\in\Omega:x_0=a,\ x_T=b\}.
\ee
The endpoint law of the reference kernel is defined as follows:
\be \label{ref-stat}
\mathbb P_{0,T}(a,b)=\mathbb P(X_0=a,X_T=b)
=\mu_0(a)P^T(a,b).
\ee
In the following lemma, we establish the connection between the dynamic Schrödinger bridge, the Markov bridge, and the solution to the static Schrödinger bridge problem. For pairs with $\mathbb P_{0,T}(a,b)>0$, define the Markov bridge law, 
\be \label{m-bridge-def}
\mathbb P_{a,b}(\cdot)=\mathbb P(\,\cdot\mid X_0=a,X_T=b).  
\ee
\begin{lemma}[Factorisation of the Schrödinger bridge]\label{lem:dynamic-bridge-to-static-bridge}
Let $\mathbb P$ be any reference probability law on $\Omega$, and assume
that some $\mathbb Q\in\Lambda(\mu_0,\mu_T)$ satisfies
$\mathbb Q\ll\mathbb P$.  Let $\mathbb Q^*$ be the dynamic bridge in
\eqref{dyn-br}, and let $\mathbb Q^*_{0,T}$ be the static bridge with
marginals $(\mu_0,\mu_T)$ and reference law $\mathbb P_{0,T}$. Then, the following hold. 
\begin{itemize}
\item[\textbf{(i)}]
For every $(a,b)$ with $\mathbb Q^*_{0,T}(a,b)>0$,
\be \label{m-bridge}
\mathbb Q^*(\,\cdot\mid X_0=a,X_T=b)=\mathbb P_{a,b}(\cdot).
\ee
Consequently, for every $x\in\Omega$ whose endpoint pair is in the support
of $\mathbb P_{0,T}$,
\[
\mathbb Q^*(x)=\mathbb Q^*_{0,T}(x_0,x_T)\mathbb P_{x_0,x_T}(x).
\]
When $\mathbb P_{0,T}(x_0,x_T)=0$, the conditional law is not defined and the
displayed product is interpreted as zero; both sides are then zero.
\item[\textbf{(ii)}]
If, in addition, $\mathbb P_{0,T}\sim\mu_0\otimes\mu_T$, then
\[
\mathbb Q^*_{0,T}(a,b)=\varphi^\circ(a)\mathbb P_{0,T}(a,b)\varphi^*(b),
\qquad a,b\in\chi,
\]
where the potentials satisfy \eqref{eq:schrodinger-equation} on the
respective marginal supports.
Here each potential is extended by $1$ outside its
respective marginal support.
\item[\textbf{(iii)}]
Under the same assumptions in (ii),
\[
\mathbb Q^*(x)=\varphi^\circ(x_0)\mathbb P(x)\varphi^*(x_T),
\qquad x\in\Omega.
\]
\end{itemize}
\end{lemma}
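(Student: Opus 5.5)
The plan is to use the additive decomposition (chain rule) of relative entropy with respect to the endpoint map $(X_0,X_T)$. For any $\mathbb Q\ll\mathbb P$ on the finite set $\Omega$, disintegrate both laws along $(X_0,X_T)$. Write $\mathbb Q^{a,b}=\mathbb Q(\cdot\mid X_0=a,X_T=b)$ whenever $\mathbb Q_{0,T}(a,b)>0$. Absolute continuity gives $\mathbb Q_{0,T}\ll\mathbb P_{0,T}$, so on such pairs $\mathbb P_{0,T}(a,b)>0$ and $\mathbb P_{a,b}$ is defined. Expanding $\log(\mathbb Q(x)/\mathbb P(x))$ as $\log(\mathbb Q_{0,T}(x_0,x_T)/\mathbb P_{0,T}(x_0,x_T))+\log(\mathbb Q^{x_0,x_T}(x)/\mathbb P_{x_0,x_T}(x))$ and summing yields
\[
\operatorname{KL}(\mathbb Q\,\|\,\mathbb P)=\operatorname{KL}(\mathbb Q_{0,T}\,\|\,\mathbb P_{0,T})+\sum_{(a,b):\,\mathbb Q_{0,T}(a,b)>0}\mathbb Q_{0,T}(a,b)\,\operatorname{KL}(\mathbb Q^{a,b}\,\|\,\mathbb P_{a,b}).
\]
The constraint $\mathbb Q\in\Lambda(\mu_0,\mu_T)$ involves only $\mathbb Q_{0,T}$, which must lie in $\Pi(\mu_0,\mu_T)$. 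Hence the first term is at least $\operatorname{KL}(\mathbb Q^*_{0,T}\,\|\,\mathbb P_{0,T})$ and the second is nonnegative.

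Next I will define $\widehat{\mathbb Q}(x)=\mathbb Q^*_{0,T}(x_0,x_T)\mathbb P_{x_0,x_T}(x)$, with the product read as zero when $\mathbb P_{0,T}(x_0,x_T)=0$. The static bridge has finite KL, so $\mathbb Q^*_{0,T}\ll\mathbb P_{0,T}$ and this convention loses no mass. The finite-feasibility hypothesis ensures that $\mathbb Q^*_{0,T}$ is well defined: pushing the given $\mathbb Q\ll\mathbb P$ forward gives a coupling in $\Pi(\mu_0,\mu_T)$ absolutely continuous with respect to $\mathbb P_{0,T}$. Then $\widehat{\mathbb Q}\in\Lambda(\mu_0,\mu_T)$ and $\widehat{\mathbb Q}\ll\mathbb P$, and $\widehat{\mathbb Q}$ attains the lower bound. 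So $\widehat{\mathbb Q}$ is a minimiser, and by the uniqueness of the finite-KL minimiser in \eqref{dyn-br} we get $\mathbb Q^*=\widehat{\mathbb Q}$. Conditioning $\widehat{\mathbb Q}$ on $\{X_0=a,X_T=b\}$ with $\mathbb Q^*_{0,T}(a,b)>0$ gives \eqref{m-bridge}, which proves (i).

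An alternative to invoking uniqueness runs as follows. Any minimiser $\mathbb Q^*$ must make both inequalities tight. Uniqueness of the static bridge then forces $(\mathbb Q^*)_{0,T}=\mathbb Q^*_{0,T}$, and vanishing of the conditional KLs forces $\mathbb Q^{*,a,b}=\mathbb P_{a,b}$ on the support. I will mention this route to make the identification of the two objects denoted $\mathbb Q^*_{0,T}$ explicit.

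For (ii), I will apply Proposition~\ref{prop:static-potentials} with $p=\mathbb P_{0,T}$, $\mu=\mu_0$ and $\nu=\mu_T$, using the hypothesis $\mathbb P_{0,T}\sim\mu_0\otimes\mu_T$. The proposition gives $\mathbb Q^*_{0,T}=\varphi^\circ\mathbb P_{0,T}\varphi^*$ on $A\times B$ and zero elsewhere. Outside $A\times B$, mutual absolute continuity makes $\mathbb P_{0,T}(a,b)=0$. Extending the potentials by $1$ therefore makes the product formula hold on all of $\chi^2$.

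For (iii), I will combine (i) and (ii). When $\mathbb P_{0,T}(x_0,x_T)>0$, we have $\mathbb P_{x_0,x_T}(x)=\mathbb P(x)/\mathbb P_{0,T}(x_0,x_T)$ for $x\in\Omega_{x_0,x_T}$. Substituting, the factor $\mathbb P_{0,T}$ cancels and gives $\varphi^\circ(x_0)\mathbb P(x)\varphi^*(x_T)$. When $\mathbb P_{0,T}(x_0,x_T)=0$, we have $\mathbb P(x)=0$ and both sides vanish.

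The main obstacle is the bookkeeping in the chain-rule identity on null sets. The sums must run only over pairs with $\mathbb Q_{0,T}(a,b)>0$, and the case $\mathbb Q\not\ll\mathbb P$ should be dismissed at the start, since both sides are then infinite or the law is excluded. Beyond that, the argument is a direct finite computation.
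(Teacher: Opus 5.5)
Your proposal is correct and follows essentially the same route as the paper. Both arguments use the chain rule for relative entropy along $(X_0,X_T)$ and lift the static bridge via $q(x_0,x_T)\,\mathbb P(x\mid X_0=x_0,X_T=x_T)$, so that the minimisation is governed by the endpoint term alone. Both then apply Proposition~\ref{prop:static-potentials}, with the potentials extended by $1$, for (ii), and cancel $\mathbb P_{0,T}$ for (iii). Your explicit appeal to uniqueness, or to tightness of both inequalities, simply spells out a step the paper states in one line.
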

The proof is given in Section~\ref{proof-d-s}.

\paragraph{The sampling scheme.}
Lemma~\ref{lem:dynamic-bridge-to-static-bridge} separates trajectory
sampling into an endpoint draw from $\mathbb Q^*_{0,T}$ and a conditional
draw from $\mathbb P_{a,b}$.  Section~\ref{sec-q-gibbs} develops a quantum
conditional-path sampler.  Section~\ref{sec:quantum-sinkhorn-alg} develops a
quantum procedure that computes an implicit approximation of the endpoint
coupling through its scaling potentials.  Producing an endpoint sample from
that implicit coupling requires an additional sampling routine and is not a
consequence of the endpoint-computation theorem alone. This routine is described in Section \ref{rem:static-output}. 
If the  Markov bridge quantum sampler reports failure, keep the sampled
endpoints fixed and repeat only the conditional Markov bridge sampler until success. Redrawing
endpoints after a failure could bias their distribution; the uniform
success bound gives constant expected repetition overhead.
  
\section{Essentials on Markov Chains} \label{sec:markov-chains-basics}
We recall some essential facts about finite-state Markov chains that will be used throughout this paper (see e.g. 
\cite{levin2017markov}).  Throughout this section, $P$ denotes a Markov
kernel on a finite state space $E$.

\begin{definition}{\textbf{(\cite{levin2017markov}, Section 1.3)}}{\label{def:irreducible-and-ergodic}}
The kernel $P$ is
\begin{enumerate}
\item[\textbf{(i)}] \emph{irreducible} if for every $x,y\in E$ there is an integer $t\ge1$
such that $P^t(x,y)>0$;
\item [\textbf{(ii)}] \emph{ergodic} if it is irreducible and $1$ is the only eigenvalue on
the unit circle, with algebraic multiplicity one.
\end{enumerate}
\end{definition}
For a finite irreducible chain, the second condition is equivalent to
aperiodicity.  An irreducible kernel has a unique stationary distribution
$\pi$, satisfying $\pi P=\pi$.

A Markov kernel $P$ is said to be reversible with respect to $\pi$ if it satisfies 
\be \label{dbc}
\pi(x)P(x,y)=\pi(y)P(y,x),\qquad \textrm{for all } x,y\in E.
\ee
Note that reversibility implies stationarity.

\begin{remark}{\label{rk:inner-product}}
One can define the following inner product for any functions $f,g:E\to\mathbb C$, 
\[
\langle f,g\rangle_\pi=\sum_{x\in E}\pi(x)\overline{f(x)}g(x).
\]
Then reversibility is equivalent to self-adjointness of $P$ with respect to this inner
product.
\end{remark}
We define the total variation distance for any two probability measures $\mu,\nu$ on $E$, 
\[
d_{\mathrm{TV}}(\mu,\nu)=\frac12\sum_{x\in E}|\mu(x)-\nu(x)|.
\]
If $P$ is ergodic then its iterates converge geometrically to its stationary measure $\pi$.
\begin{theorem}\label{th:convergence-stationnary-distrib}
There exist $C>0$ and $\alpha\in(0,1)$ such that, for every $t\ge1$,
\[
\max_{x\in E}d_{\mathrm{TV}}(P^t(x,\cdot),\pi)\le C\alpha^t.
\]
\end{theorem}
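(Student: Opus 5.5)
We prove the statement by the standard Perron--Frobenius/coupling route for an ergodic kernel $P$ on the finite set $E$. The overall plan is to find a power of $P$ with a uniform Doeblin minorisation, then iterate a one-step contraction of total variation.

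\emph{Step 1: a strictly positive power.} We first show that some power $P^{r}$ has all entries strictly positive. By irreducibility, for each pair $(x,y)$ the set $\{t\ge1:P^t(x,y)>0\}$ is nonempty. The ergodicity condition (no eigenvalue other than $1$ on the unit circle, i.e.\ aperiodicity) gives that $\gcd\{t:P^t(x,x)>0\}=1$ for every $x$. This return-time set is closed under addition, since $P^{s+t}(x,x)\ge P^s(x,x)P^t(x,x)$. A semigroup of positive integers with gcd one contains all sufficiently large integers, by the elementary number-theoretic lemma. Combining this with a path from $x$ to $y$ of some fixed length $t_{xy}$ gives $P^t(x,y)>0$ for all $t\ge t_{xy}+s_x$. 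Since $E$ is finite, we can take $r=\max_{x,y}(t_{xy}+s_x)$. We expect this step to be the main obstacle, since it is where the spectral form of ergodicity in Definition~\ref{def:irreducible-and-ergodic} must be converted into aperiodicity. The paper asserts this equivalence just after that definition, and we will take it as given, citing \cite{levin2017markov}.

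\emph{Step 2: Doeblin minorisation.} Set $\theta=\min_{x,y}P^r(x,y)/\pi(y)$, which is positive because $\pi>0$ for an irreducible chain. Then $P^r(x,\cdot)\ge\theta\,\pi(\cdot)$ with $\theta\in(0,1]$. If $\theta=1$, then $P^r(x,\cdot)=\pi$ and the bound is immediate. Otherwise we write
\[
P^r=\theta\,\Pi+(1-\theta)\,Q,
\]
where $\Pi$ has all rows equal to $\pi$ and $Q$ is a stochastic matrix.

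\emph{Step 3: geometric contraction.} Using $\pi P=\pi$ and $M\Pi=\Pi$ for any stochastic $M$, induction gives $P^{rk}-\Pi=(1-\theta)^k(Q^k-\Pi)$. Since $d_{\mathrm{TV}}$ between probability vectors is at most $1$, this yields $\max_x d_{\mathrm{TV}}(P^{rk}(x,\cdot),\pi)\le(1-\theta)^k$.

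For general $t=rk+j$ with $0\le j<r$, we use $P^{t}=P^j P^{rk}$. The map $\mu\mapsto\mu P^j$ does not increase total variation, and $\pi P^j=\pi$, so
\[
\max_x d_{\mathrm{TV}}(P^t(x,\cdot),\pi)\le(1-\theta)^{\lfloor t/r\rfloor}.
\]
We then take $\alpha=(1-\theta)^{1/r}\in(0,1)$ and $C=(1-\theta)^{-1}$, which gives the claimed bound $C\alpha^t$ for every $t\ge1$.
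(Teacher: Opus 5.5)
Your proof is correct. The paper gives no argument of its own; it simply recalls this as the Convergence Theorem of \cite{levin2017markov} (Theorem~4.9 there), and your route---a strictly positive power $P^r$, the Doeblin splitting $P^r=\theta\Pi+(1-\theta)Q$, and the induction $P^{rk}-\Pi=(1-\theta)^k(Q^k-\Pi)$---is the standard proof behind that citation, with the spectral-to-aperiodicity equivalence you take on trust likewise asserted without proof in the paper right after Definition~\ref{def:irreducible-and-ergodic}.
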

The $\varepsilon$-mixing time of $P$ quantifies the number of steps needed to get close to $\pi$.
\begin{definition}
For $\varepsilon\in(0,1)$, the $\varepsilon$-mixing time of $P$ is given by 
\[
\tau(\varepsilon)=\min\left\{t\in\{0,1,\dots\}:
\max_{x\in E}d_{\mathrm{TV}}(P^t(x,\cdot),\pi)\le\varepsilon\right\}.
\]
\end{definition}
The spectrum of $P$ often helps to derive bounds on the mixing time. 
In particular for a reversible ergodic kernel on $|E|\ge2$ states, order its real
eigenvalues as follows: 
\be \label{eigen-ord}
1=\lambda_1>\lambda_2\ge\cdots\ge\lambda_{|E|}>-1.
\ee

The spectral gap of $P$ is known as $1-\lambda_2$, but in this work we will use mostly the \emph{absolute spectral gap}, which is defined as, 
\[
\delta=1-\max\{|\lambda|:\lambda\in\sigma(P),\ \lambda\ne1\},
\]
and hence
\be \label{sp-rev}
\delta=1-(\lambda_2\vee|\lambda_{|E|}|).
\ee
The following theorem derives bounds on the mixing time in terms of the spectral gap. 
\begin{theorem}[\cite{levin2017markov}, Theorems~12.4 and~12.5] \label{thm-mix}
Let $P$ be reversible and ergodic, with stationary distribution $\pi$, and
let $\pi_{\min}=\min_{x\in E}\pi(x)$.  Then
\[
\left(\frac1\delta-1\right)\log\frac1{2\varepsilon}
\le\tau(\varepsilon)
\le\frac1\delta\log\frac1{\varepsilon\pi_{\min}}.
\]
\end{theorem}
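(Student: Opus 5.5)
The proof has two halves: the lower bound comes from a lower bound on the distance to stationarity, the upper bound from an upper bound on it. Both rest on the spectral decomposition of $P$ in $L^2(\pi)$. By Remark~\ref{rk:inner-product}, reversibility makes $P$ self-adjoint for $\langle\cdot,\cdot\rangle_\pi$. So there is a $\pi$-orthonormal eigenbasis $f_1\equiv 1, f_2,\dots,f_{|E|}$ with real eigenvalues ordered as in \eqref{eigen-ord}. Ergodicity gives $\lambda_{|E|}>-1$ and simplicity of $\lambda_1=1$, so $\delta>0$. I will first derive the kernel expansion
\[
\frac{P^t(x,y)}{\pi(y)}=\sum_{j=1}^{|E|}\lambda_j^t f_j(x)f_j(y),
\]
which follows by expanding $\delta_y/\pi(y)$ in the eigenbasis.

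\textbf{Upper bound.} Subtract the $j=1$ term and apply Cauchy--Schwarz, using $|\lambda_j|\le 1-\delta$ for $j\ge2$:
\[
\Bigl|\frac{P^t(x,y)}{\pi(y)}-1\Bigr|\le(1-\delta)^t\Bigl(\sum_{j\ge2}f_j(x)^2\Bigr)^{1/2}\Bigl(\sum_{j\ge2}f_j(y)^2\Bigr)^{1/2}.
\]
Applying the expansion at $t=0$ gives $\sum_j f_j(x)^2=1/\pi(x)$, so the right-hand side is at most $(1-\delta)^t/\sqrt{\pi(x)\pi(y)}\le (1-\delta)^t/\pi_{\min}$. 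Multiplying by $\pi(y)$ and summing over $y$ gives
\[
d_{\mathrm{TV}}(P^t(x,\cdot),\pi)\le\tfrac12\cdot\frac{(1-\delta)^t}{\pi_{\min}}\cdot\sum_y\pi(y)\le \frac{e^{-\delta t}}{\pi_{\min}},
\]
which is the uniform bound over $x$. Choosing $t\ge\delta^{-1}\log(1/(\varepsilon\pi_{\min}))$ makes this at most $\varepsilon$, which yields the claimed upper bound on $\tau(\varepsilon)$.

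\textbf{Lower bound.} Take an eigenfunction $f$ with eigenvalue $\lambda\ne1$ and $|\lambda|=1-\delta$. Orthogonality to constants gives $\pi(f)=0$. Let $x^*$ maximise $|f|$. Then
\[
|\lambda|^t|f(x^*)|=|P^tf(x^*)-\pi(f)|=\Bigl|\sum_y\bigl(P^t(x^*,y)-\pi(y)\bigr)f(y)\Bigr|\le 2\|f\|_\infty\, d_{\mathrm{TV}}(P^t(x^*,\cdot),\pi).
\]
At $t=\tau(\varepsilon)$ this gives $(1-\delta)^{t}\le 2\varepsilon$, hence $t\log\frac{1}{1-\delta}\ge\log\frac1{2\varepsilon}$. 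The last step uses the elementary inequality $\log\frac1{1-\delta}\le\frac{\delta}{1-\delta}$, i.e.\ $\log(1+u)\le u$ with $u=\delta/(1-\delta)$. It gives $t\ge\frac{1-\delta}{\delta}\log\frac1{2\varepsilon}=(\delta^{-1}-1)\log\frac{1}{2\varepsilon}$. When $\delta=1$ the bound is trivially zero.

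The only delicate step is the bookkeeping in the upper bound, namely the identity $\sum_j f_j(x)^2=1/\pi(x)$ and the summation against $\pi(y)$. The rest is routine; in particular, the lower bound is immediate once the maximiser $x^*$ of the eigenfunction is chosen.
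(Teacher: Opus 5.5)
Your argument is the standard Levin--Peres--Wilmer proof that the paper cites, and the lower-bound half is complete as written. The gap is at the very end of the upper bound. Since $\tau(\varepsilon)$ is an integer, ``every $t\ge t_0:=\delta^{-1}\log\frac{1}{\varepsilon\pi_{\min}}$ works'' only gives $\tau(\varepsilon)\le\lceil t_0\rceil$. The statement, however, claims $\tau(\varepsilon)\le t_0$, which is equivalent to $\tau(\varepsilon)\le\lfloor t_0\rfloor$. Your final estimate $e^{-\delta t}/\pi_{\min}$ does not give this. At $t=\lfloor t_0\rfloor\ge t_0-1$ it only yields a distance of at most $e^{\delta}\varepsilon$, which does not certify $\le\varepsilon$.

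The fix is to bound total variation in $L^2(\pi)$ rather than pointwise. Set $h_t(y)=P^t(x,y)/\pi(y)-1=\sum_{j\ge2}\lambda_j^t f_j(x)f_j(y)$. Cauchy--Schwarz, orthonormality and your identity $\sum_j f_j(x)^2=1/\pi(x)$ give
\[
2\,d_{\mathrm{TV}}(P^t(x,\cdot),\pi)=\sum_y\pi(y)\,|h_t(y)|\le\|h_t\|_{L^2(\pi)}=\Bigl(\sum_{j\ge2}\lambda_j^{2t}f_j(x)^2\Bigr)^{1/2}\le(1-\delta)^t\sqrt{\pi_{\min}^{-1}-1}.
\]
Take $n=\lfloor t_0\rfloor$, so that $(1-\delta)^n\le e^{-\delta n}\le e^{-\delta(t_0-1)}=e^{\delta}\varepsilon\pi_{\min}$. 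This yields
\[
\max_x d_{\mathrm{TV}}(P^n(x,\cdot),\pi)\le\tfrac{e^{\delta}}{2}\,\varepsilon\sqrt{\pi_{\min}(1-\pi_{\min})}\le\tfrac{e}{4}\,\varepsilon<\varepsilon,
\]
using $\delta\le1$ and $\pi_{\min}(1-\pi_{\min})\le\tfrac14$. Hence $\tau(\varepsilon)\le\lfloor t_0\rfloor\le t_0$, which is the inequality as stated. The rest of your proof can stay as it is.
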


The Metropolis--Hastings (MH) algorithm constructs a Markov kernel reversible with respect to a prescribed target distribution $\pi$ by combining a proposal kernel with an acceptance step. Let $\pi$ be strictly positive and let $K$ be a proposal kernel with symmetric support:
$K(x,y)>0$ if and only if $K(y,x)>0$. For $x\ne y$ with $K(x,y)>0$, set
\[
A(x,y)=\min\left\{1,
\frac{\pi(y)K(y,x)}{\pi(x)K(x,y)}\right\}.
\]
Set $A(x,y)=0$ when $x\ne y$ and $K(x,y)=0$.
The Metropolis--Hastings kernel is defined as follows: 
\[
Q(x,y)=
\begin{cases}
K(x,y)A(x,y),&x\ne y,\\[2pt]
\displaystyle 1-\sum_{z\ne x}K(x,z)A(x,z),&x=y.
\end{cases}
\]
This definition makes $Q$ stochastic and gives
$\pi(x)Q(x,y)=\pi(y)Q(y,x)$, so $Q$ is reversible with stationary distribution $\pi$.

A single-coordinate Gibbs update is a special case of MH: it proposes a new value for one coordinate from its full conditional distribution under $\pi$, keeping the other coordinates fixed. For this proposal, the MH ratio equals one, so every proposal is accepted. Choosing the coordinate at random with fixed probabilities gives a reversible Gibbs kernel. 
\section{Gibbs sampler for Markov bridges} \label{sec:setup-gibbs-sampler}
Fix $N\ge2$, $T\ge2$, and endpoints $a,b\in\chi$. Let $P$ be an ergodic Markov kernel on $\chi$ with initial distribution $\mu_0$. 
\begin{assumption} \label{assump-spec} 
We assume throughout this section that $a\in\operatorname{supp}(\mu_0)$ and that, for some
$\omega_->0$,
\be \label{om}
\min_{i,j\in\chi}P(i,j)\ge\frac{\omega_-}{N},
\qquad
\min_{x\in\operatorname{supp}(\mu_0)}\mu_0(x)
\ge\frac{\omega_-}{N}.
\ee
\end{assumption} 
The first condition is a uniform positivity, or Doeblin, condition.  It
implies $0<\omega_-\le1$ and ensures
$\mathbb P(X_0=a,X_T=b)>0$.  The lower bound on $\mu_0$ is recorded for
later complexity estimates; it is not needed for the conditional Gibbs-gap
bound.  No reversibility assumption on the underlying matrix $P$ is needed
for the results of this section.

The Markov bridge law can be written as follows: 
\be \label{m-br}
\mathbb P_{a,b}(x)=\mathbb P(x\mid X_0=a,X_T=b)
=\frac{\mathbb P(x)}{\mathbb P(X_0=a,X_T=b)},
\qquad x\in\Omega_{a,b}.
\ee
For $1\le t\le T-1$, define 
\be \label{x-minus}
x_{-t}=(x_0,\dots,x_{t-1},x_{t+1},\dots,x_T).
\ee
We construct a Gibbs sampler on $\Omega_{a,b}$. Using the Markov property we conclude that the full conditional law of the $t$-th coordinate under the bridge depends only
on its two neighbours and is
\begin{equation}\label{eq:conditionned-interior-proba-P}
c_t(s\mid x_{-t})
:=\mathbb P_{a,b}(X_t=s\mid X_{-t}=x_{-t})
=\frac{P(x_{t-1},s)P(s,x_{t+1})}
{\sum_{z\in\chi}P(x_{t-1},z)P(z,x_{t+1})},
\qquad s\in\chi.
\end{equation}
The kernel of the Gibbs sampler on $\Omega_{a,b}$ is defined as follows: 
\begin{equation}\label{eq:definition-Gibbs-sampler}
K(x,y)=\frac1{T-1}\sum_{t=1}^{T-1}
\mathds 1_{\{x_{-t}=y_{-t}\}}c_t(y_t\mid x_{-t}).
\end{equation}
Thus one step chooses an interior coordinate uniformly and resamples it from
its conditional probability.  Distinct paths are connected in one step precisely
when they differ at one interior coordinate. Paths differing at several
coordinates are joined by a sequence of such updates, as described in the following proposition. We will need some further definitions before we state this result. 

Let
\[
    \Omega_{a,b}^{+}
    =\{\mathbf{x}\in\Omega_{a,b}:
        \mathbb P_{a,b}(\mathbf{x})>0\}.
\]
We define $L^2(\mathbb P_{a,b})$ as the space of functions
$f:\Omega_{a,b}^{+}\to\mathbb C$, equipped with the inner product
\[
    \langle f,g\rangle_{L^2(\mathbb P_{a,b})}
    =\sum_{\mathbf{x}\in\Omega_{a,b}^{+}}
      \mathbb P_{a,b}(\mathbf{x})\,
      \overline{f(\mathbf{x})}\,g(\mathbf{x}),
    \qquad f,g\in L^2(\mathbb P_{a,b}).
\]
The associated squared norm is
\[
    \|f\|_{L^2(\mathbb P_{a,b})}^{2}
    =\sum_{\mathbf{x}\in\Omega_{a,b}^{+}}
      \mathbb P_{a,b}(\mathbf{x})\,|f(\mathbf{x})|^2,
\]
which is finite since $\Omega_{a,b}^{+}$ is finite.

\begin{proposition}\label{prop-rev}
Let $P$ be an ergodic Markov kernel on $\chi$ with initial distribution $\mu_0$ satisfying $P(i,j)>0$ for all $i,j\in\chi$ and
$\mu_0(a)>0$. Then the kernel $K$ has the following properties:
\begin{itemize}
\item[\textbf{(i)}] It is irreducible and aperiodic on $\Omega_{a,b}$.
Moreover, $K(x,x)>0$ for every $x$.
\item[\textbf{(ii)}] It is reversible with respect to $\mathbb P_{a,b}$.
\item[\textbf{(iii)}] As an operator on
$L^2(\mathbb P_{a,b})$, it is positive semidefinite.
\end{itemize}
\end{proposition}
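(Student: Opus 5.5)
The three properties follow from the structure of $K$ as a uniform mixture over $t$ of single-site heat-bath kernels
\[
K_t(x,y)=\mathds 1_{\{x_{-t}=y_{-t}\}}\,c_t(y_t\mid x_{-t}),
\qquad K=\frac1{T-1}\sum_{t=1}^{T-1}K_t .
\]
I would prove everything for each $K_t$ and then pass to the average.

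\textbf{Setup.} Since $P>0$ entrywise and $\mu_0(a)>0$, every path in $\Omega_{a,b}$ has positive probability under $\mathbb P$. Hence $\Omega_{a,b}^+=\Omega_{a,b}$, $\mathbb P_{a,b}>0$ everywhere, and each $c_t(s\mid x_{-t})>0$ because it is a ratio of positive numbers.

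\textbf{(i)} For the holding probability, $K(x,x)\ge \frac1{T-1}c_1(x_1\mid x_{-1})>0$. This gives aperiodicity once irreducibility is shown. For irreducibility, take $x,y\in\Omega_{a,b}$ and change coordinates one at a time, $t=1,\dots,T-1$, replacing $x_t$ by $y_t$. Every intermediate sequence has endpoints $a,b$, so it lies in $\Omega_{a,b}$. Each step has probability $\frac1{T-1}c_t(y_t\mid\cdot)>0$, or is a holding step if $x_t=y_t$. So $K^{T-1}(x,y)>0$.

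\textbf{(ii)} For $x\ne y$ differing only at coordinate $t$, the Markov factorisation \eqref{m-br} gives
\[
\mathbb P_{a,b}(x)\,c_t(y_t\mid x_{-t})
=\frac{\mathbb P_{a,b}(x)\,\mathbb P_{a,b}(y)}{\mathbb P_{a,b}(X_{-t}=x_{-t})}.
\]
This expression is symmetric in $x$ and $y$, since $x_{-t}=y_{-t}$. Equivalently, one can write it directly via \eqref{eq:conditionned-interior-proba-P}: the common prefactor $\prod_{s\ne t-1,t}P(x_s,x_{s+1})$ is shared by $x$ and $y$, and the denominators coincide. Thus each $K_t$ satisfies detailed balance \eqref{dbc}. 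When $x,y$ differ in two or more coordinates, both sides vanish, and the diagonal case is trivial. Averaging over $t$ preserves detailed balance.

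\textbf{(iii)} This is the step needing an idea rather than a routine check. I would show that each $K_t$ is an orthogonal projection in $L^2(\mathbb P_{a,b})$. Indeed,
\[
(K_tf)(x)=\mathbb E_{\mathbb P_{a,b}}\bigl[f(X)\mid X_{-t}=x_{-t}\bigr],
\]
which is the conditional expectation onto the $\sigma$-algebra generated by $X_{-t}$. Conditional expectation is idempotent, $K_t^2=K_t$, and self-adjoint by (ii). Hence
\[
\langle f,K_tf\rangle=\|K_tf\|^2\ge0 .
\]
A convex combination of positive semidefinite operators is positive semidefinite, so $K\succeq0$.

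The main care point is idempotence, i.e.\ that resampling coordinate $t$ twice equals resampling it once. This holds because $c_t(\cdot\mid x_{-t})$ does not depend on $x_t$. As a consequence of (iii), all eigenvalues of $K$ lie in $[0,1]$, so the absolute spectral gap equals $1-\lambda_2$.
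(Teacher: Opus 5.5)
Your proposal is correct and follows essentially the same route as the paper: the same coordinate-by-coordinate interpolating path for irreducibility, with the positive holding probability giving aperiodicity; detailed balance via $\mathbb P_{a,b}(x)\,c_t(y_t\mid x_{-t})=\mathbb P_{a,b}(x)\mathbb P_{a,b}(y)/\mathbb P_{a,b}(X_{-t}=x_{-t})$; and positivity by writing $K$ as an average of the conditional-expectation operators $K_t$, each satisfying $\langle f,K_tf\rangle=\|K_tf\|^2$. The only cosmetic difference is that in (iii) you obtain $\langle f,K_tf\rangle=\|K_tf\|^2$ from idempotence plus self-adjointness of $K_t$, whereas the paper derives it from the tower-property orthogonality $\langle f-K_tf,K_tf\rangle=0$.
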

The proof is given in Section~\ref{sec-pf-gibbs}.

We use the canonical-paths method (see Section~13.4 of
\cite{levin2017markov}) to control the  spectral gap
$1-\lambda_2$.  Proposition~\ref{prop-rev}(iii) shows that $K$ has no
negative eigenvalues, so its spectral gap and absolute spectral gap \eqref{sp-rev} coincide. Our main result in this direction is stated below and proved in Section~\ref{sec-pf-gibbs}.

\begin{theorem}\label{cor:upper-bound-spectral-gap}
Let $P$ be an ergodic Markov kernel on $\chi$ with initial distribution $\mu_0$ satisfying Assumption \ref{assump-spec}. 
Then, the absolute spectral gap $\delta$ of the Gibbs kernel in \eqref{eq:definition-Gibbs-sampler} satisfies
\[
\delta\ge\frac{\omega_-^4}{T^2N^3}.
\]
\end{theorem}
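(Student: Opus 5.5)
We use the canonical-paths bound of Section~13.4 of \cite{levin2017markov}. By Proposition~\ref{prop-rev}, $K$ is reversible with respect to $\pi:=\mathbb P_{a,b}$ and positive semidefinite, so $\delta=1-\lambda_2$. It therefore suffices to show that the congestion
\[
\rho=\max_{e=(z,w)}\frac{1}{\pi(z)K(z,w)}\sum_{(x,y):\,e\in\gamma_{xy}}\pi(x)\pi(y)\,|\gamma_{xy}|
\]
satisfies $\rho\le T^2N^3/\omega_-^4$, since then $1-\lambda_2\ge1/\rho$.

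\textbf{Canonical paths.} For $x,y\in\Omega_{a,b}$ we use left-to-right coordinate replacement. The path $\gamma_{xy}$ passes through $z^{(k)}=(a,y_1,\dots,y_k,x_{k+1},\dots,x_{T-1},b)$ for $k=0,\dots,T-1$, omitting steps with $x_k=y_k$. Hence $|\gamma_{xy}|\le T-1$. Every edge $e=(z,w)$ used changes a single coordinate $t$, and then $z=(a,y_1,\dots,y_{t-1},x_t,x_{t+1},\dots,b)$ and $w$ equals $z$ with $y_t$ in place of $x_t$.

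\textbf{Injection.} Given $e$, the pair $(x,y)$ is determined by $e$ together with the unknown block $u=(x_1,\dots,x_{t-1})$ and the unknown block $v=(y_{t+1},\dots,y_{T-1})$. Define $\eta_e(x,y)=(x_{1:t-1},\,z_t,\,y_{t+1:T-1})$ completed with the endpoints $a,b$. This is a path whose first part comes from $x$, whose middle coordinate $z_t=x_t$, and whose tail comes from $y$. The map $(x,y)\mapsto\eta_e(x,y)$ is injective for fixed $e$.

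\textbf{Weight comparison.} Write $\pi(x)\pi(y)$ as a product of transition weights divided by $P^T(a,b)^2$. The product of the $x$-weights up to time $t$ and the $y$-weights from time $t$ onward reassembles into the weights of $z$, up to the junction factors at times $t-1,t,t+1$. The remaining factors reassemble into $\eta$, again up to junction factors. This gives
\[
\pi(x)\pi(y)\le \pi(z)\pi(\eta)\cdot R,
\]
where $R$ is a ratio of at most two transition probabilities in the numerator to two in the denominator. Since every entry of $P$ lies in $[\omega_-/N,1]$, we get $R\le (N/\omega_-)^2$.

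\textbf{Bounding $K(z,w)$.} We have
\[
K(z,w)=\frac{1}{T-1}\,c_t(y_t\mid z_{-t})\ge\frac{1}{T-1}\cdot\frac{(\omega_-/N)^2}{\sum_s P(z_{t-1},s)P(s,z_{t+1})}\ge\frac{\omega_-^2}{(T-1)N^2},
\]
using that the denominator is $P^2(z_{t-1},z_{t+1})\le1$.

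\textbf{Combining.} Summing over the injection gives $\sum\pi(\eta)\le1$, and then
\[
\rho\le (T-1)\cdot\frac{(T-1)N^2}{\omega_-^2}\cdot\frac{N^2}{\omega_-^2}=\frac{T^2N^4}{\omega_-^4}.
\]
This is one factor of $N$ too many.

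\textbf{Main obstacle: removing the extra factor of $N$.} The fix is to absorb one factor of $N$ into the injection rather than bound it crudely. I would first try to make $\eta$ forget the middle coordinate, letting it range over $s$. In effect $\sum_s$ of the middle weights reproduces $P^2(z_{t-1},z_{t+1})$, which cancels against the denominator of $c_t$ exactly. This saves one $N/\omega_-$ and costs only an $N$-fold multiplicity, yielding the stated $T^2N^3/\omega_-^4$. The most delicate step is tracking the junction factors precisely at $t=1$ and $t=T-1$, where $a$ or $b$ is the neighbour.
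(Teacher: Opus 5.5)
Your canonical paths and the encoding $\eta_e$ are sound, but the argument as written proves only $\rho\le T^2N^4/\omega_-^4$, and the repair in your last paragraph does not work as described.

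\textbf{Why the proposed repair fails.} Suppose $\eta$ forgets its middle coordinate, i.e.\ $\eta=(a,x_1,\dots,x_{t-1},s,y_{t+1},\dots,y_{T-1},b)$. Summing over $s$ then produces $P^2(x_{t-1},y_{t+1})$. This is built from the neighbours of coordinate $t$ in $\eta$, not from those in $z$, so it is not $P^2(z_{t-1},z_{t+1})=P^2(y_{t-1},x_{t+1})$. Hence nothing cancels against $c_t(y_t\mid z_{-t})$. The bookkeeping is also off: saving $N/\omega_-$ and paying a factor $N$ turns $T^2N^4/\omega_-^4$ into $T^2N^4/\omega_-^3$, not $T^2N^3/\omega_-^4$. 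In fact the middle-forgetting encoding costs no multiplicity at all, since the map $((x,y),s)\mapsto\eta$ is still injective. But if you keep bounding the weight ratio and $1/K(z,w)$ separately, using $P^2(i,j)\ge\omega_-/N$, you arrive at $T^2N^4/\omega_-^4$ again. So this is not where the factor is lost. Nor are $t=1$ and $t=T-1$ delicate: the junction computation is identical with $y_0=a$ and $y_T=b$.

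\textbf{The missing step.} You bounded your ratio $R$ and the Gibbs conditional separately, but they share a factor. Computed exactly,
\[
R=\frac{P(y_{t-1},y_t)\,P(y_t,y_{t+1})}{P(y_{t-1},x_t)\,P(x_t,y_{t+1})},
\qquad
c_t(y_t\mid z_{-t})=\frac{P(y_{t-1},y_t)\,P(y_t,x_{t+1})}{P^2(y_{t-1},x_{t+1})}.
\]
The factor $P(z_{t-1},w_t)=P(y_{t-1},y_t)$, which you bounded above by $1$ in $R$ and below by $\omega_-/N$ in $c_t$, cancels:
\[
\frac{R}{c_t(y_t\mid z_{-t})}
=\frac{P(y_t,y_{t+1})\,P^2(y_{t-1},x_{t+1})}{P(y_{t-1},x_t)\,P(x_t,y_{t+1})\,P(y_t,x_{t+1})}
\le\frac{N^3}{\omega_-^3}.
\]
Use your original injective $\eta_e$ (keeping $\eta_t=x_t$) together with $\sum\pi(\eta)\le1$. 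This gives $\rho\le(T-1)^2N^3/\omega_-^3\le T^2N^3/\omega_-^4$, since $\omega_-\le1$.

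\textbf{Comparison with the paper.} The paper makes the same cancellation, of $P(x_{t-1},y_t)$ in its notation where the edge is called $(x,y)$. It then evaluates the sums over the free prefix and suffix exactly through $L_t=P^t(a,\cdot)$, $R_t=P^{T-t}(\cdot,b)$ and Chapman--Kolmogorov. It bounds $L_t(x_t)R_t(y_t)/Z\le N/\omega_-^2$ by a ratio-of-sums inequality. Your injection replaces that step by the trivial bound $\sum\pi(\eta)\le1$. That route is more elementary and, once the cancellation is included, even yields the slightly stronger $\delta\ge\omega_-^3/(T^2N^3)$.
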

 
\section{Quantum Gibbs sampler}\label{sec-q-gibbs}
In this section we construct a quantum sampler for the conditional bridge $\pi:=\mathbb P_{a,b}$ using the Gibbs kernel $K$ from \eqref{eq:definition-Gibbs-sampler}. We continue to assume that $N\ge2$ and $T\ge2$, and that $P$ is an ergodic Markov kernel on $\chi$ with initial distribution $\mu_0$. 
The construction uses the qubitised-walk framework of
\cite{claudon2026quantumcircuitsmetropolishastingsalgorithm,Szegedy}. We provide some essential definitions and results of this framework in the following. 

\begin{definition}[Projected unitary encodings]\label{def:PUE-and-SPUE}
Let $U$ be a unitary on a finite-dimensional Hilbert space
$\mathcal K$, and let $\square_L,\square_R:\mathcal H\to\mathcal K$
be isometries. The triple $(U,\square_L,\square_R)$ is a
\emph{projected unitary encoding} (PUE) of
$A=\square_L^*U\square_R$. If $\square_L=\square_R=\square$
and $U=U^*$, the pair $(U,\square)$ is a
\emph{symmetric projected unitary encoding} (SPUE).
\end{definition}
Definition~\ref{def:PUE-and-SPUE} uses isometries. The more general
definition of a PUE using partial isometries requires restriction
to their initial support spaces.

For a PUE of $A$, define its Hermitianisation by
\begin{equation}\label{op-u-sq}
\overline U=
\begin{pmatrix}0&U\\U^*&0\end{pmatrix},
\qquad
\overline\square=
\begin{pmatrix}\square_L&0\\0&\square_R\end{pmatrix}.
\end{equation}
\begin{proposition}[{\cite[Proposition~1]{claudon2026quantumcircuitsmetropolishastingsalgorithm}}]
The pair $(\overline U,\overline\square)$ is an SPUE of
$\overline A=\begin{pmatrix}0&A\\A^*&0\end{pmatrix}$.
The eigenvalues of $\overline A$ are the positive and negative
singular values of $A$, with multiplicities.
The unitary $\overline U$ can be implemented using controlled-$U$,
controlled-$U^*$, and a Pauli $X$ gate.
\end{proposition}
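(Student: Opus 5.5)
The proof has three parts: a block computation for the encoding, a spectral argument for the eigenvalues of $\overline A$, and a circuit construction for $\overline U$.

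\textbf{The encoding.} First, $\overline U$ is unitary and self-adjoint. Indeed,
\[
\overline U^*=\begin{pmatrix}0&(U^*)^*\\U^*&0\end{pmatrix}=\overline U,
\qquad
\overline U^2=\begin{pmatrix}UU^*&0\\0&U^*U\end{pmatrix}=I.
\]
Second, $\overline\square$ is an isometry from $\mathcal H\oplus\mathcal H$ into $\mathcal K\oplus\mathcal K$. This holds because $\overline\square^*\overline\square=\operatorname{diag}(\square_L^*\square_L,\square_R^*\square_R)=I$. A direct block multiplication then gives
\[
\overline\square^*\overline U\,\overline\square
=\begin{pmatrix}0&\square_L^*U\square_R\\ \square_R^*U^*\square_L&0\end{pmatrix}
=\begin{pmatrix}0&A\\A^*&0\end{pmatrix}=\overline A.
\]
Here we use $(\square_L^*U\square_R)^*=\square_R^*U^*\square_L$. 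Together these facts show that $(\overline U,\overline\square)$ is an SPUE of $\overline A$ in the sense of Definition~\ref{def:PUE-and-SPUE}.

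\textbf{The spectrum.} Take a singular value decomposition $A=\sum_k s_k u_kv_k^*$, where $\{u_k\}$ and $\{v_k\}$ are orthonormal bases of $\mathcal H$ and $s_k\ge0$. Then
\[
\overline A\begin{pmatrix}u_k\\ \pm v_k\end{pmatrix}
=\pm s_k\begin{pmatrix}u_k\\ \pm v_k\end{pmatrix}.
\]
The $2\dim\mathcal H$ vectors $(u_k,\pm v_k)/\sqrt2$ are orthonormal, so they form an eigenbasis of $\overline A$. The eigenvalues of $\overline A$ are therefore exactly $\{\pm s_k\}$, with multiplicities. Zero singular values contribute $0$ twice, which is consistent with the phrase ``positive and negative singular values''.

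\textbf{The circuit.} Encode the direct sum $\mathcal K\oplus\mathcal K$ as $\mathbb C^2\otimes\mathcal K$, using a flag qubit. In this encoding,
\[
\overline U=|0\rangle\langle1|\otimes U+|1\rangle\langle0|\otimes U^*.
\]
Now set $C=|0\rangle\langle0|\otimes U+|1\rangle\langle1|\otimes U^*$. This $C$ is implemented by a controlled-$U$ that fires on flag value $0$, followed by a controlled-$U^*$ that fires on flag value $1$. Applying a Pauli $X$ to the flag afterwards gives
\[
(X\otimes I)\,C=|1\rangle\langle0|\otimes U+|0\rangle\langle1|\otimes U^*.
\]
The blocks are in the wrong positions, so the order must be reversed. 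Apply $X$ first:
\[
C\,(X\otimes I)=|0\rangle\langle1|\otimes U+|1\rangle\langle0|\otimes U^*=\overline U.
\]
This uses one controlled-$U$, one controlled-$U^*$ and one Pauli $X$, as claimed.

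The only step needing any care is this last one: the ordering of $X$ relative to the controlled gates determines which off-diagonal block receives $U$. The other two parts are routine linear algebra.
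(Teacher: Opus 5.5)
Your proof is correct and complete. The block computations $\overline U^*=\overline U$, $\overline U^2=I$, $\overline\square^*\overline\square=I$ and $\overline\square^*\overline U\,\overline\square=\overline A$ all check out, as do the Hermitian-dilation eigenbasis $(u_k,\pm v_k)/\sqrt2$ and the factorisation $\overline U=C\,(X\otimes I)$. The paper gives no proof of its own and imports the result from \cite[Proposition~1]{claudon2026quantumcircuitsmetropolishastingsalgorithm}; your direct verification is the natural argument for it.

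One point deserves to be stated explicitly. In your circuit the controlled-$U$ fires on flag value $0$, which is an open control. If only $|1\rangle$-controls are allowed, conjugating that gate by $X$ on the flag costs two extra $X$ gates. This is harmless for the statement's informal gate count.
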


For an SPUE $(U,\square)$, the associated quantum walk is
\begin{equation}\label{q-walk}
    \mathcal W=(2\square\square^*-I)U.
\end{equation}
The following theorem describes how the eigenvalues of $A$
determine those of $\mathcal W$.

\begin{theorem}[{\cite[Theorem~2]{claudon2026quantumcircuitsmetropolishastingsalgorithm}}]
\label{thm:SPUE-spectral}
Let $(U,\square)$ be an SPUE of $A$, with $\square$ an isometry.
Suppose that $A|v\rangle=\lambda|v\rangle$ and $\|v\|=1$.
If $-1<\lambda<1$, set $\vartheta=\arccos\lambda$.
Then $\mathcal W$ has eigenvalues $e^{\pm i\vartheta}$,
with corresponding normalised eigenvectors
\be \label{eig-ph} 
    |\mu(\lambda)^\pm\rangle
    =\frac{(e^{\pm i\vartheta}I-U)\square|v\rangle}
           {\sqrt{2}\sin\vartheta}.
\ee
If $\lambda=\pm1$, then $\square|v\rangle$ is an eigenvector
of $\mathcal W$ with eigenvalue $\lambda$.
\end{theorem}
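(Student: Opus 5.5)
The plan is to verify the claim directly by computing how $\mathcal W$ acts on the two-dimensional subspace spanned by $\square|v\rangle$ and $U\square|v\rangle$. This is the standard Szegedy/qubitisation argument. I will use three facts: $\square^*\square=I$ (since $\square$ is an isometry); $U^2=I$ (since $U$ is unitary and $U=U^*$); and $\square^*U\square|v\rangle=A|v\rangle=\lambda|v\rangle$. Note that $A$ is Hermitian because $U$ is, so $\lambda$ is real, and $\|A\|\le1$, so $|\lambda|\le 1$.

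First I compute the action of $\mathcal W$ on the two vectors. For the first,
\[
\mathcal W\square|v\rangle=(2\square\square^*-I)U\square|v\rangle=2\lambda\square|v\rangle-U\square|v\rangle .
\]
For the second, using $U^2=I$,
\[
\mathcal W U\square|v\rangle=(2\square\square^*-I)\square|v\rangle=\square|v\rangle .
\]
So the span of $\square|v\rangle$ and $U\square|v\rangle$ is invariant under $\mathcal W$. On it, $\mathcal W$ acts by a $2\times2$ matrix with characteristic polynomial $z^2-2\lambda z+1$, whose roots are $z=e^{\pm i\vartheta}$ when $\lambda=\cos\vartheta$.

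Next, for $-1<\lambda<1$, I check directly that $w_z=(zI-U)\square|v\rangle$ is an eigenvector when $z^2-2\lambda z+1=0$. Expanding with the two identities above,
\[
\mathcal W w_z=z(2\lambda\square|v\rangle-U\square|v\rangle)-\square|v\rangle=(2\lambda z-1)\square|v\rangle-zU\square|v\rangle .
\]
Since $2\lambda z-1=z^2$, this equals $z(z\square|v\rangle-U\square|v\rangle)=zw_z$.

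For the normalisation I use $\langle\square v,U\square v\rangle=\langle v,Av\rangle=\lambda$, which is real, together with $\|U\square v\|=\|\square v\|=1$ and $|z|=1$:
\[
\|w_z\|^2=2-2\operatorname{Re}(\bar z\lambda)=2-2\cos^2\vartheta=2\sin^2\vartheta .
\]
This is positive because $\vartheta\in(0,\pi)$. Dividing by $\sqrt2\sin\vartheta$ gives \eqref{eig-ph}.

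The boundary case $\lambda=\pm1$ is the only delicate point, since there the two vectors may be linearly dependent and the formula \eqref{eig-ph} degenerates. I will instead compute
\[
\|U\square v-\lambda\square v\|^2=1+\lambda^2-2\lambda^2=0,
\]
so $U\square|v\rangle=\lambda\square|v\rangle$. Then $\mathcal W\square|v\rangle=\lambda(2\square\square^*-I)\square|v\rangle=\lambda\square|v\rangle$, so $\square|v\rangle$ is an eigenvector with eigenvalue $\lambda$. I expect no step to be a genuine obstacle; the main care needed is using the self-adjointness of $U$ to get a real inner product and to handle this degenerate case.
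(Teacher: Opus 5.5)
Your proposal is correct. The identities $\mathcal W\square|v\rangle=2\lambda\square|v\rangle-U\square|v\rangle$ and $\mathcal W U\square|v\rangle=\square|v\rangle$, the normalisation $\|(zI-U)\square|v\rangle\|^2=2\sin^2\vartheta$, and the degenerate-case step $U\square|v\rangle=\lambda\square|v\rangle$ for $\lambda=\pm1$ are all sound. The paper does not prove this statement but cites it from Claudon et al.\ (Theorem~2); your direct check on the invariant plane spanned by $\square|v\rangle$ and $U\square|v\rangle$ is the standard Szegedy/qubitisation argument for this fact, consistent with how the paper then uses $U^2=I$ to describe $\mathcal W$ on $\mathcal L$ and $\mathcal L^\perp$.
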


Theorem \ref{thm:SPUE-spectral} explains how a small eigenvalue gap becomes a
larger separation between eigenphases. If the largest
eigenvalue of $A$ is $1$ and the second-largest is $1-\delta$,
then the corresponding eigenvalues of $\mathcal W$ are $1$
and $e^{\pm i\vartheta}$, where
\[
    \vartheta=\arccos(1-\delta)
    \sim\sqrt{2\delta}
    \qquad\text{as }\delta\downarrow0.
\]
Thus, an eigenvalue gap of size $\delta$ for $A$ becomes
a phase separation of order $\sqrt{\delta}$ for $\mathcal W$.
This relation is the spectral basis for the improvement
in gap dependence established below.

In the following $\operatorname{ran}$ denotes the range of an operator.
The eigenvectors in the theorem describe $\mathcal W$ on
the subspace
\[
    \mathcal L
    =\operatorname{span}\bigl(
        \operatorname{ran}\square,\,
        U\operatorname{ran}\square
      \bigr),
\]
consisting of linear combinations of vectors
$\square|v\rangle$ and $U\square|v\rangle$.
This subspace is invariant under $\mathcal W$: applying
$\mathcal W$ to a vector in $\mathcal L$ produces another
vector in $\mathcal L$. As $|v\rangle$ ranges over an
orthonormal eigenbasis of $A$, the eigenvectors constructed
in the theorem form an orthonormal basis of $\mathcal L$.
They therefore give the complete spectral decomposition
of $\mathcal W|_{\mathcal L}$.

To describe the action on the orthogonal complement,
let $|w\rangle\in\mathcal L^\perp$.
Since $U^2=I$, the operator $U$ preserves $\mathcal L$.
Its unitarity then implies that it also preserves
$\mathcal L^\perp$. Consequently, $U|w\rangle$ is
orthogonal to the range of $\square$, and hence
$\square\square^*U|w\rangle=0$. It follows that
\[
    \mathcal W|w\rangle
    =(2\square\square^*-I)U|w\rangle
    =-U|w\rangle.
\]
Thus, $\mathcal W=-U$ on $\mathcal L^\perp$.
Since $U^2=I$, the restriction of $\mathcal W$ to this
subspace can only have eigenvalues $1$ or $-1$.

\paragraph{The bridge walk and its accessible subspace.}
Let $\mathcal H=\mathbb C^{\Omega_{a,b}}$, of dimension $N^{T-1}$,
and let $S$ be the swap operator: for any $x,y\in\Omega_{a,b}$, 
\begin{equation}\label{u-sw}
S|x,y\rangle=|y,x\rangle,\qquad S=S^*=S^{-1}.
\end{equation}
For $K$ the Markov kernel in \eqref{eq:definition-Gibbs-sampler} define the isometry $\square:\mathcal H\to\mathcal H\otimes\mathcal H$ by
\begin{equation}\label{step}
\square|x\rangle
=\sum_{y\in\Omega_{a,b}}\sqrt{K(x,y)}\,|x,y\rangle.
\end{equation}

A direct calculation gives
\[
D:=\square^*S\square,\qquad \textrm{for } \ 
D(x,y)=\sqrt{K(x,y)K(y,x)}.
\]
By reversibility (Proposition~\ref{prop-rev}),
$\pi(x)K(x,y)=\pi(y)K(y,x)$. Since $\pi$ is strictly
positive, we obtain
\[
    D(x,y)
    =\sqrt{K(x,y)K(y,x)}
    =\frac{\sqrt{\pi(x)}}{\sqrt{\pi(y)}}K(x,y).
\]
Consequently,
\[
    D=\operatorname{diag}(\sqrt{\pi})\,
      K\,\operatorname{diag}(\sqrt{\pi})^{-1}.
\]
Thus $D$ is Hermitian and has the same eigenvalues as $K$.
We use its qubitised walk
\begin{equation}\label{W-Q}
\mathcal W=(2\square\square^*-I)S
\end{equation}
on the invariant subspace
\[
\mathcal L=
\operatorname{span}(\operatorname{ran}\square,
S\operatorname{ran}\square).
\]
Define:
\begin{equation}\label{pab}
|\pi\rangle=|\mathbb P_{a,b}\rangle
=\sum_{x\in\Omega_{a,b}}\sqrt{\pi(x)}\,|x\rangle.
\end{equation}

\begin{proposition}\label{cor:SPUE-spectral}
Under the assumptions of Proposition~\ref{prop-rev},
let $\pi=\mathbb P_{a,b}$ and let $\delta$ denote the
absolute spectral gap of $K$. Then, the following hold:
\begin{enumerate}
\item[\textbf{(i)}]
The vector $\square|\pi\rangle$ is the unique normalised
$1$-eigenvector of $\mathcal W$ in $\mathcal L$,
up to multiplication by a scalar of modulus one.

\item[\textbf{(ii)}]
The smallest absolute value of a nonzero eigenphase of
$\mathcal W|_{\mathcal L}$, with phases chosen in
$[-\pi,\pi]$, is
\[
    \Delta
    =\arccos\!\left(\max(\sigma(K)\setminus\{1\})\right)
    =\arccos(1-\delta)
    \ge\sqrt{2\delta}.
\]
\item[\textbf{(iii)}] Let $\Pi$ be the orthogonal projector onto
$\ker(\mathcal W-I)$. For every $|\psi\rangle\in\mathcal L$,
\be \label{proj}
    \Pi|\psi\rangle
    =\square|\pi\rangle
      \langle\pi|\square^*|\psi\rangle.
\ee
Thus, on $\mathcal L$, $\Pi$ acts as the rank-one
projector onto $\square|\pi\rangle$.
\end{enumerate}
\end{proposition}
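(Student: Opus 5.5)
The plan is to apply Theorem~\ref{thm:SPUE-spectral} to the SPUE $(S,\square)$ of $D=\square^*S\square$ and to transfer the spectral information on $K$ through the similarity $D=\operatorname{diag}(\sqrt{\pi})K\operatorname{diag}(\sqrt{\pi})^{-1}$. First check that this is an SPUE. The map $\square$ is an isometry because $\sum_y K(x,y)=1$ and the vectors $|x,y\rangle$ for distinct $x$ are orthogonal. The swap $S$ is a self-adjoint unitary. By Proposition~\ref{prop-rev}, $K$ is reversible, irreducible, aperiodic and positive semidefinite. Hence $D$ is Hermitian with spectrum $\sigma(K)\subset[0,1]$. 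The eigenvalue $1$ is simple, with normalised eigenvector $|\pi\rangle$: indeed
\[
D|\pi\rangle=\operatorname{diag}(\sqrt\pi)K\mathbf 1=|\pi\rangle .
\]
Fix an orthonormal eigenbasis $\{|v_k\rangle\}$ of $D$ with $|v_1\rangle=|\pi\rangle$ and eigenvalues $1=\lambda_1>\lambda_2\ge\dots\ge 0$.

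For (i), use the discussion after Theorem~\ref{thm:SPUE-spectral}, which notes that the vectors it constructs form an orthonormal basis of $\mathcal L$. For $k\ge2$ we have $\lambda_k\in[0,1)$, so $\vartheta_k=\arccos\lambda_k\in(0,\pi/2]$. The associated eigenvectors $|\mu(\lambda_k)^\pm\rangle$ therefore have eigenvalues $e^{\pm i\vartheta_k}\neq1$. The only remaining basis vector is $\square|\pi\rangle$, with eigenvalue $1$. I must also check that the basis contains no spurious vectors for $\lambda=1$. Since $\lambda=1$ gives $S\square|\pi\rangle=\square|\pi\rangle$ (because $\|\square^*S\square\pi\|=1$ forces equality in Cauchy--Schwarz), the span of $\square|\pi\rangle$ and $S\square|\pi\rangle$ is one-dimensional. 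The count of basis vectors is therefore consistent. Consequently $\ker(\mathcal W-I)\cap\mathcal L=\operatorname{span}\{\square|\pi\rangle\}$, which proves (i).

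For (ii), the nonzero eigenphases on $\mathcal L$ are exactly $\pm\vartheta_k$ for $k\ge2$. Since $\arccos$ is decreasing, the smallest such phase is $\arccos\lambda_2$. Proposition~\ref{prop-rev}(iii) makes the absolute gap equal to the ordinary gap, so $\lambda_2=1-\delta$. For the inequality, write $\cos\Delta=1-\delta$ and use $\cos\theta\ge1-\theta^2/2$. This gives $1-\delta\ge1-\Delta^2/2$, that is, $\Delta\ge\sqrt{2\delta}$. For (iii), note that on $\mathcal L$ the kernel of $\mathcal W-I$ is the line spanned by the unit vector $\square|\pi\rangle$. The orthogonal projector onto it is therefore $\square|\pi\rangle\langle\pi|\square^*$, because $\langle\pi|\square^*\square|\pi\rangle=1$.

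The main obstacle is making (i) and (iii) rigorous. The orthonormal-basis claim for $\mathcal L$ needs care at $\lambda=1$. There may also be eigenvalue-$1$ vectors of $\mathcal W$ in $\mathcal L^\perp$, where $\mathcal W=-S$. This is why $\Pi$ is the projector onto the full kernel, and why (iii) is stated only for $\psi\in\mathcal L$. I will handle this by invoking the invariance of $\mathcal L^\perp$. It gives, for $\psi\in\mathcal L$, that $\Pi\psi$ lies in $\ker(\mathcal W-I)\cap\mathcal L$, because $\Pi$ commutes with $\mathcal W$ and preserves both $\mathcal L$ and $\mathcal L^\perp$. This reduces (iii) to (i).
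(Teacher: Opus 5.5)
Your proposal is correct and follows the paper's route: apply Theorem~\ref{thm:SPUE-spectral} to the SPUE $(S,\square)$ of $D$, using the reversibility, irreducibility and positive semidefiniteness of $K$ from Proposition~\ref{prop-rev}, and read off the $1$-eigenspace and the smallest nonzero eigenphase from the orthonormal eigenbasis of $\mathcal L$. You also supply details the paper leaves implicit: the check $S\square|\pi\rangle=\square|\pi\rangle$, the bound $\cos\theta\ge1-\theta^2/2$, and the reduction of (iii) to (i). For that reduction, the precise reason $\Pi$ preserves $\mathcal L$ and $\mathcal L^\perp$ is that $\Pi$ is a polynomial in the unitary $\mathcal W$; commuting with $\mathcal W$ alone would not suffice.
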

 
Items (i) and (ii) in Proposition \ref{cor:SPUE-spectral} follow immediately from reversibility, ergodicity of $K$ derived in Proposition~\ref{prop-rev}, together with 
Theorem~\ref{thm:SPUE-spectral}. To understand why (iii) holds note that by Proposition \ref{cor:SPUE-spectral}  the $1$-eigenspace of $\mathcal W|_{\mathcal L}$ is $\operatorname{span}\{\square|\pi\rangle\}$ so \eqref{proj} describes the projection on this subspace. 

\paragraph{Coherent access to the Gibbs update.}
We assume access to a unitary $O_K$ satisfying
\begin{equation}\label{o-k}
O_K(|0\rangle\otimes|x\rangle)
=\sum_{y\in\Omega_{a,b}}\sqrt{K(x,y)}\,|x,y\rangle,
\qquad x\in\Omega_{a,b}.
\end{equation}
We also assume access to its adjoint $O_K^*$.
Here the first input register is workspace and the second encodes
a bridge path; fixed endpoint registers may be omitted.
Equivalently, 
\begin{equation}\label{sq-o-k}
\square=O_K(|0\rangle\otimes I).
\end{equation}
The isometry property ensures that such a unitary extension exists,
but does not by itself bound its implementation cost. In this oracle
model $\mathcal W$ in \eqref{W-Q} can be written as 
\begin{equation}\label{w-o-op}
\mathcal W=
O_K\bigl((2|0\rangle\langle0|-I)\otimes I\bigr)O_K^*S.
\end{equation}
A controlled application of $\mathcal W$ uses one call to
$O_K$ and one to $O_K^*$, together with a controlled swap
and a controlled reflection about the zero subspace.

\subsection{The sampling algorithm and complexity analysis }
We present the quantum Gibbs sampler for Markov bridges and derive its complexity.   
We first prepare a state supported on the prescribed bridge endpoints.

\paragraph{A fixed-endpoint initial state.}
In addition to $O_K$, we assume coherent access to the original
transition kernel $P$ through a unitary $U_P$ satisfying 
\begin{equation}\label{eq:coherent-reference-transition}
U_P(|i\rangle|0\rangle)
=|i\rangle\sum_{j\in\chi}\sqrt{P(i,j)}\,|j\rangle, \quad i \in \chi. 
\end{equation}

We also assume access to the adjoint $U_P^*$.
Let $A_{a,b}$ denote the unitary circuit that prepares
the initial path state. Starting from zero-initialised
registers, it writes $a$ into the first register, applies
$U_P$ successively to generate the $T-1$ intermediate
states, each retained in a separate register, and writes
$b$ into the final register. Thus, $A_{a,b}$ simulates
the reference chain up to time $T-1$ and fixes the final
endpoint without simulating the last transition to $b$.
The resulting state is
\begin{equation}\label{sigma}
A_{a,b}|0\rangle=|\Sigma_{a,b}\rangle
=\sum_{x\in\Omega_{a,b}}\sqrt{q_{a,b}(x)}\,|x\rangle,
\qquad
q_{a,b}(x)=\prod_{t=0}^{T-2}P(x_t,x_{t+1}),
\end{equation}
where $|0\rangle$ denotes the joint zero state of all
registers. Measuring $|\Sigma_{a,b}\rangle$ in the path
basis yields a trajectory with distribution $q_{a,b}$.

The circuit $A_{a,b}$ uses $T-1$ calls to $U_P$, and
its inverse uses $T-1$ calls to $U_P^*$. Since the
endpoints are known constants, this preparation requires
neither the endpoint probability $P^T(a,b)$ nor the
preparation of a state encoding the initial marginal.

We illustrate the action of $A_{a,b}$ for $T=3$. 
In this case the preparation proceeds as follows: 
\[
\begin{aligned}
|0,0,0,0\rangle
&\longmapsto |a,0,0,0\rangle\\
&\longmapsto
\sum_{x_1\in\chi}\sqrt{P(a,x_1)}
\,|a,x_1,0,0\rangle\\
&\longmapsto
\sum_{x_1,x_2\in\chi}
\sqrt{P(a,x_1)P(x_1,x_2)}
\,|a,x_1,x_2,0\rangle\\
&\longmapsto
\sum_{x_1,x_2\in\chi}
\sqrt{P(a,x_1)P(x_1,x_2)}
\,|a,x_1,x_2,b\rangle.
\end{aligned}
\]
The two middle steps apply $U_P$ to successive pairs
of registers. The first and last steps write the known
endpoints into zero-initialised registers using Pauli
$X$ gates. In particular, fixing the final register to
$b$ does not simulate a transition to $b$, so the
amplitudes contain no factor $\sqrt{P(x_2,b)}$.

Define the overlap
\begin{equation}\label{over}
\theta_{a,b}
=\langle\square\pi|\square\Sigma_{a,b}\rangle
=\langle\pi|\Sigma_{a,b}\rangle.
\end{equation}

\begin{lemma}\label{lem-ol}
Let $P$ be a Markov kernel on the finite state space $\chi$, and fix $T\ge2$ and $a,b\in\chi$.
Suppose that, for some $\eta>0$,
\[
   P(i,b)\ge\eta,
\qquad
\text{for every }i\in\chi\text{  such that  }P^{T-1}(a,i)>0.
\]
Let $\pi=\mathbb P_{a,b}$ denote the bridge law for the
reference chain started at $a$. Then the distribution
$q_{a,b}$ in \eqref{sigma} satisfies
\[
    q_{a,b}(x)\le\eta^{-1}\pi(x),
    \qquad x\in\Omega_{a,b},
\]
and the overlap in \eqref{over} satisfies
\[
    \theta_{a,b}\ge\sqrt{\eta}.
\]
In particular, if the transition bound in \eqref{om} holds, then
$\theta_{a,b}\ge\sqrt{\omega_-/N}$.
\end{lemma}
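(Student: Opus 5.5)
The plan is to write the bridge law explicitly and compare it pointwise with $q_{a,b}$. For $x\in\Omega_{a,b}$ with $x_0=a$, $x_T=b$, the definition \eqref{m-br} together with \eqref{eq:definition-probability-P} gives
\[
\pi(x)=\frac{\mu_0(a)\prod_{t=0}^{T-1}P(x_t,x_{t+1})}{\mu_0(a)P^T(a,b)}
=\frac{q_{a,b}(x)\,P(x_{T-1},b)}{P^T(a,b)} .
\]
The factor $\mu_0(a)$ cancels, so we only need $P^T(a,b)>0$. This holds under the hypothesis: $P^{T-1}(a,\cdot)$ is a probability vector, and $P(i,b)\ge\eta$ on its support gives $P^T(a,b)=\sum_i P^{T-1}(a,i)P(i,b)\ge\eta$.

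For the pointwise bound, take $x$ with $q_{a,b}(x)=0$; the inequality is then trivial. Otherwise, $q_{a,b}(x)>0$ forces $P^{T-1}(a,x_{T-1})>0$, so $P(x_{T-1},b)\ge\eta$. Since $P^T(a,b)\le1$, we get
\[
\pi(x)\ge\eta\, q_{a,b}(x),
\]
which is the first claim.

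For the overlap, note that all amplitudes are nonnegative reals, so
\[
\theta_{a,b}=\sum_x\sqrt{\pi(x)q_{a,b}(x)}\ge\sum_x\sqrt{\eta}\,q_{a,b}(x)=\sqrt{\eta}.
\]
The last equality uses that $q_{a,b}$ sums to one over $\Omega_{a,b}$: it is the law of $T-1$ forward steps from $a$, with $b$ appended deterministically. The first equality in \eqref{over}, $\langle\square\pi|\square\Sigma\rangle=\langle\pi|\Sigma\rangle$, follows from $\square$ being an isometry.

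Finally, under \eqref{om} every entry satisfies $P(i,b)\ge\omega_-/N$, so we may take $\eta=\omega_-/N$, which gives the stated bound $\theta_{a,b}\ge\sqrt{\omega_-/N}$.

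There is no real obstacle. The only points needing care are handling paths with $q_{a,b}(x)=0$, which is why the hypothesis is imposed only on the support of $P^{T-1}(a,\cdot)$, and checking that the normalisation of $q_{a,b}$ is exactly one.
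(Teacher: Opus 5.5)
Your proposal is correct and matches the paper's proof essentially step for step. You write $\pi(x)=q_{a,b}(x)P(x_{T-1},b)/P^T(a,b)$, bound $P^T(a,b)$ between $\eta$ and $1$, use reachability of $x_{T-1}$ to get $\pi\ge\eta\,q_{a,b}$, and sum $\sqrt{\pi q_{a,b}}\ge\sqrt{\eta}\,q_{a,b}$ using the normalisation of $q_{a,b}$; your side remarks on the cancellation of $\mu_0(a)$ and the isometry identity in \eqref{over} only make explicit what the paper leaves implicit.
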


The proof is given in Section~\ref{sec-pf-qsamp}. Note that this explicit lower
bound is independent of the bridge length.

\paragraph{Amplitude amplification with a known lower bound.}
The initial state $\square|\Sigma_{a,b}\rangle$ has a
nonzero overlap with the target state $\square|\pi\rangle$.
We use amplitude amplification
(see, e.g., \cite{brassardamplitudeamplification})
to increase the probability of successfully preparing
an approximation to this target state, using a known
lower bound on the initial overlap, given in Lemma \ref{lem-ol}. 

We first describe amplitude amplification for a general
quantum preparation procedure. Let $\mathcal A$ be a unitary
circuit that prepares a normalised state
$|\Psi\rangle=\mathcal A|0\rangle$, where $|0\rangle$ denotes
the joint zero state of all registers. Here, $\mathcal A$
denotes the preparation circuit to which amplification is
applied; it need not be the initial-path circuit $A_{a,b}$
alone.

Let $\Pi_*$ be the orthogonal projector onto the subspace
corresponding to successful outcomes, called the
\emph{marked subspace}. Decompose the prepared state as
\begin{equation}\label{psi-dec}
|\Psi\rangle=|\Psi_*\rangle+|\Psi_\perp\rangle,
\qquad
|\Psi_*\rangle=\Pi_*|\Psi\rangle,
\qquad
|\Psi_\perp\rangle=(I-\Pi_*)|\Psi\rangle.
\end{equation}
These two components are orthogonal and need not be
normalised. The initial success probability is
$\|\Psi_*\|^2$. Amplitude amplification increases this
probability by applying a suitable number of iterations
of the operator
\begin{equation}\label{r0}
Q=-\mathcal A S_0\mathcal A^*S_*,
\end{equation}
called the Grover iterate, where
\begin{equation}\label{r1}
S_0=I-2|0\rangle\langle0|,
\qquad
S_*=I-2\Pi_*.
\end{equation}
The operator $S_*$ changes the sign of the successful
component and leaves its orthogonal complement unchanged.
Similarly, $S_0$ changes the sign of the all-zero state.
Since $\mathcal A|0\rangle=|\Psi\rangle$ we get 
\begin{equation}\label{r2}
\mathcal A S_0\mathcal A^*
=I-2|\Psi\rangle\langle\Psi|.
\end{equation}
Thus, the Grover iterate combines a sign change on the
marked subspace with a reflection about the prepared
state $|\Psi\rangle$.

The following proposition specifies the amplitude-amplification
guarantee used in our sampling algorithm. Given only a known
lower bound $c$ on the amplitude of the successful component,
it provides a procedure with success probability bounded below
by a universal constant and cost $O(1/c)$. In particular,
the exact initial overlap need not be known.

\begin{proposition}\label{prop:amplitude-amplification}
Suppose that a lower bound $c\in(0,1]$ satisfying
\begin{equation}\label{l-psi}
    \|\Psi_*\|\ge c
\end{equation}
is known, and that $\mathcal A$, its inverse
$\mathcal A^*$, and controlled implementations of the
reflection $S_*$ are available.
Then there exists an algorithm that uses one preparation of $|\Psi\rangle$
and $O(1/c)$ controlled applications of $Q$ and returns a
nonnegative integer $n=O(1/c)$ such that, with probability
at least $8/\pi^2$,
\[
    \|\Pi_*Q^n|\Psi\rangle\|\ge\frac25.
\]
\end{proposition}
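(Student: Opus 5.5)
We use the standard two-dimensional analysis of Grover iterates, combined with phase estimation on \(Q\). This avoids the randomised exponential-search schedule, which would not give a fixed success probability in a single run. Write \(\|\Psi_*\|=\sin\theta_0\) with \(\theta_0\in(0,\pi/2]\). If \(\|\Psi_*\|=1\) we return \(n=0\) and are done, so assume \(\theta_0<\pi/2\). The plane \(\mathcal V=\operatorname{span}\{|\Psi_*\rangle,|\Psi_\perp\rangle\}\) is invariant under \(S_*\) and under \(I-2|\Psi\rangle\langle\Psi|\) by \eqref{r2}, hence under \(Q\). On \(\mathcal V\), \(Q\) acts as a rotation by angle \(2\theta_0\), so that
\[
\Pi_*Q^n|\Psi\rangle=\sin\bigl((2n+1)\theta_0\bigr)\,\frac{|\Psi_*\rangle}{\|\Psi_*\|}.
\]
Thus the task reduces to producing \(n=O(1/c)\) with \(|\sin((2n+1)\theta_0)|\ge 2/5\). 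The eigenvalues of \(Q|_{\mathcal V}\) are \(e^{\pm 2i\theta_0}\), and \(|\Psi\rangle\) is an equal-weight superposition of the two eigenvectors. I will verify this by direct computation in the basis \(\{|\Psi_*\rangle/\|\Psi_*\|,\,|\Psi_\perp\rangle/\|\Psi_\perp\|\}\).

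The algorithm is as follows. Prepare \(|\Psi\rangle=\mathcal A|0\rangle\) once. Run textbook phase estimation of \(Q\) on it with \(m\) ancilla qubits, where \(M=2^m\) is the least power of two with \(M\ge C/c\) for a constant \(C\) fixed below. This uses \(M-1=O(1/c)\) controlled applications of \(Q\), each built from controlled \(\mathcal A\), \(\mathcal A^*\), \(S_0\) and \(S_*\). The outcome \(k\) is converted to an estimate \(\tilde\theta=\pi\min(k,M-k)/M\), which approximates \(\theta_0\) whichever eigenvector component is collapsed onto. We then output
\[
n=\max\Bigl\{0,\ \Bigl\lfloor \frac{\pi}{4\tilde\theta}-\frac12\Bigr\rceil\Bigr\},
\]
setting \(n=0\) if \(\tilde\theta=0\). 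This \(n\) is deterministically at most of order \(M=O(1/c)\).

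For the success event we use the standard phase-estimation guarantee: with probability at least \(8/\pi^2\), the outcome is the nearest grid point, so \(|2\tilde\theta-2\theta_0|\le \pi/M\). This is exactly the constant in the statement, and it holds for each eigenvector component separately, hence for the mixture.

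On this event we split into two cases. If \(\theta_0\) is large, say \(\theta_0\ge\arcsin(2/5)\), then \(n=0\) or a nearby value works. Otherwise we show \((2n+1)\theta_0\in[\pi/2-\gamma,\pi/2+\gamma]\) with \(\sin(\pi/2-\gamma)\ge 2/5\). Rounding contributes at most \(\theta_0\) to the error. The estimation error contributes at most \(\frac{\pi}{4}\cdot\frac{|\tilde\theta-\theta_0|}{\tilde\theta\theta_0}\cdot(2\theta_0)\). Since \(\theta_0\ge\sin\theta_0\ge c\) and \(M\ge C/c\), we have \(|\tilde\theta-\theta_0|\le \pi/(2M)\le \theta_0\cdot \pi/(2C)\). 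This keeps \(\tilde\theta\ge\theta_0/2\) and makes the relative error small, so a sufficiently large absolute \(C\) ensures \(\gamma\le\arccos(2/5)\). The condition \(\theta_0\le\arcsin(2/5)<0.42\) keeps the rounding term below \(\arccos(2/5)\approx1.16\) with room to spare.

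I expect the main obstacle to be this case bookkeeping: making the relative-error bound uniform in \(\theta_0\in[c,\pi/2)\), and checking that the two eigenvectors \(e^{\pm2i\theta_0}\) give the same estimate \(\tilde\theta\) modulo the folding \(k\mapsto M-k\). The cost count is immediate: one preparation of \(|\Psi\rangle\) and \(O(1/c)\) controlled applications of \(Q\).
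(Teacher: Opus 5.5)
Your route is the paper's: fold the phase-estimation outcome for $Q$ into an angle estimate with $M=\Theta(1/c)$, set $n\approx\pi/(4\tilde\theta)$, and check $|\sin((2n+1)\theta_0)|\ge 2/5$ on the estimation event. Your deterministic bound $n=O(M)$, which follows from $\tilde\theta\ge\pi/M$ whenever $\tilde\theta\neq0$, is a valid substitute for the paper's rule of returning $n=0$ when $\tilde\vartheta<c/2$. Two points need repair, though neither changes the structure.

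First, you misquote the phase-estimation guarantee. The outcome is the \emph{nearest} grid point only with probability at least $4/\pi^2$. The $8/\pi^2$ bound (Theorem~\ref{th:amplitude-estimation}) is for landing on one of the two nearest grid points, which gives $|\tilde\theta-\theta_0|\le\pi/M$ rather than $\pi/(2M)$. The statement requires probability $8/\pi^2$, so you must work with this weaker accuracy. Your estimate then becomes $|(2n+1)\theta_0-\pi/2|\le\theta_0+\pi^2/C$ for $C\ge2\pi$, and doubling $C$ restores your numbers.

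Second, the large-angle case is asserted rather than proved. Your rule computes $n$ from $\tilde\theta$, not from $\theta_0$. So for $\theta_0\in[\arcsin(2/5),\pi/2)$ you must show that the $n$ actually returned works, and ``$n=0$ or a nearby value works'' does not do this. The claim is true. Your general bound already covers every $\theta_0\le\arccos(2/5)-\pi^2/C$. Above that threshold, $\tilde\theta>\pi/4$ once $C$ is large (for example $C=40$), which forces your rounding rule to return $n=0$, and then $\sin\theta_0>2/5$.

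The paper avoids this bookkeeping by branching on the observed estimate. It returns $n=0$ whenever $\tilde\vartheta\ge\pi/6$, where $\sin\vartheta\ge\sin\tilde\vartheta-\pi/M>2/5$ is immediate. Otherwise the rounding error is at most $\tilde\vartheta<\pi/6$, which gives $|(2n+1)\vartheta-\pi/2|<\pi/3$.
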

In Section~\ref{sec:amplitude-amplification} we prove Proposition \ref{prop:amplitude-amplification} and in particular give the phase-estimation procedure and derive the stated constants.

\begin{remark} \label{rem-grov} 
From Proposition \ref{prop:amplitude-amplification} it follows that preparing a fresh copy of $|\Psi\rangle$, applying $Q^n$,
and measuring whether the resulting state lies in the
marked subspace therefore succeeds with probability at
least $32/(25\pi^2)$, where the probability includes
both the estimation and final measurement outcomes.
Across both stages, the total number of calls to
$\mathcal A$, $\mathcal A^*$, and the marking reflection
$S_*$ (including its controlled version) is $O(1/c)$.
\end{remark}

\paragraph{Approximating the stationary-state projector.}
We now construct an approximation to the orthogonal
projector $\Pi$ in \eqref{proj}. This projector extracts the stationary component of
the initial state. Since
$\square|\Sigma_{a,b}\rangle\in\mathcal L$, we have from \eqref{over}, 
\[
    \Pi\square|\Sigma_{a,b}\rangle
    =\theta_{a,b}\square|\pi\rangle.
\]
Thus, applying $\Pi$ and normalising gives the target
state $\square|\pi\rangle$.

Since $\Pi$ is generally not unitary, we cannot implement
it directly as a unitary circuit. Instead, we approximate
it by a polynomial $\Gamma(\mathcal W)$ and implement
this polynomial as part of a unitary acting on one
additional qubit. Measuring that qubit and obtaining
$0$ applies $\Gamma(\mathcal W)$ to the input state,
up to normalisation. This outcome signals successful
application of the approximation. The following proposition bounds the approximation error
and the number of controlled applications of $\mathcal W$
required for this construction. We first introduce some auxiliary definitions and notation.

Let $\mathcal W$ be unitary, and let $\Pi$ denote the
orthogonal projector onto $\ker(\mathcal W-I)$.
We call $\Delta_0\in(0,\pi]$ a lower bound on the
\emph{phase gap} if every eigenvalue $e^{i\theta}\ne1$
of $\mathcal W$, with $\theta\in[-\pi,\pi]$, satisfies
$|\theta|\ge\Delta_0$.

A polynomial $\Phi$ is \emph{complementary} to a polynomial $\Gamma$ if
\[
    |\Gamma(z)|^2+|\Phi(z)|^2=1,
    \qquad\text{for all }|z|=1.
\]
This identity allows the two polynomials to describe
the two branches of a unitary circuit with an additional
qubit, as follows.

\begin{proposition}\label{prop-poly}
Suppose that a lower bound $\Delta_0$ on the phase gap
of $\mathcal W$ is known. For every $0<\eta<1$, there
exists a polynomial $\Gamma$ of degree
$O(\Delta_0^{-1}\log(2/\eta))$ such that
\begin{equation}\label{gm-w}
    \|\Gamma(\mathcal W)-\Pi\|\le\eta,
    \qquad
    \sup_{|z|=1}|\Gamma(z)|\le1,
\end{equation}
where $\|\cdot\|$ denotes the operator norm.
Moreover, there exist a complementary polynomial $\Phi$
and a unitary $V$, using one additional qubit such that
\be \label{v-op} 
    V(|0\rangle\otimes|\psi\rangle)
    =|0\rangle\otimes\Gamma(\mathcal W)|\psi\rangle
     +|1\rangle\otimes\Phi(\mathcal W)|\psi\rangle,
\ee
for every input state $|\psi\rangle$.
The circuit uses $O(\Delta_0^{-1}\log(2/\eta))$
controlled applications of $\mathcal W$.
\end{proposition}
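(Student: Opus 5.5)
Since $\mathcal W$ is unitary, it is normal and diagonalises in an orthonormal eigenbasis with eigenvalues $e^{i\theta_k}$. Then for every polynomial $\Gamma$ we have $\Gamma(\mathcal W)=\sum_k\Gamma(e^{i\theta_k})P_k$, and $\Pi$ is the spectral projector at $\theta=0$. It follows that
\[
    \|\Gamma(\mathcal W)-\Pi\|
    =\max\Bigl(|\Gamma(1)-1|,\ \max_{\theta_k\ne0}|\Gamma(e^{i\theta_k})|\Bigr).
\]
The approximation claim in \eqref{gm-w} therefore reduces to a scalar problem on the unit circle. We need a polynomial with $\Gamma(1)$ within $\eta$ of $1$, $|\Gamma(e^{i\theta})|\le\eta$ for $\Delta_0\le|\theta|\le\pi$, and $|\Gamma|\le1$ on the whole circle.

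\textbf{Step 1: the scalar filter.} I would use a Laurent trigonometric polynomial and shift it. A convenient choice is the Dolph--Chebyshev (or Fejér-type) window
\[
    f(\theta)=\frac{T_L\bigl(\beta\cos(\theta/2)\bigr)}{T_L(\beta)},
    \qquad \beta=\frac{1}{\cos(\Delta_0/2)},
\]
where $T_L$ is the Chebyshev polynomial of degree $L$. On $|\theta|\ge\Delta_0$ the argument $\beta\cos(\theta/2)$ lies in $[-1,1]$, so $|f|\le 1/T_L(\beta)$ there. Moreover $T_L(\beta)\ge\tfrac12 e^{L\,\operatorname{arccosh}\beta}$ and $\operatorname{arccosh}\beta\gtrsim\Delta_0/2$. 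Hence $L=O(\Delta_0^{-1}\log(2/\eta))$ gives $|f|\le\eta$ off the gap, while $f(0)=1$ and $|f|\le1$ everywhere. If $L$ is even, $f$ is a polynomial in $\cos\theta$, hence a Laurent polynomial in $z=e^{i\theta}$ of degree $L/2$. Setting $\Gamma(z)=z^{L/2}f$ then gives an honest polynomial of degree $L$. This introduces a phase $z^{L/2}$ which equals $1$ at $z=1$, so $\Gamma(1)=1$, and it does not change moduli elsewhere. A simpler fallback is a power of the Fejér kernel, at the cost of worse constants but the same scaling.

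\textbf{Step 2: the complementary polynomial.} The function $1-|\Gamma(z)|^2$ is a nonnegative trigonometric polynomial on the circle. By the Fejér--Riesz theorem it equals $|\Phi(z)|^2$ for some polynomial $\Phi$ of degree at most $\deg\Gamma$, which gives complementarity.

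\textbf{Step 3: the circuit $V$.} With complementary pair $(\Gamma,\Phi)$ of degree $d$, I would invoke the quantum signal processing / QSVT construction for unitaries (cf.\ \cite{gilyen2019qsvt}). Interleaving $d$ controlled applications of $\mathcal W$ (controlled on one ancilla) with single-qubit rotations on that ancilla realises a unitary whose $|0\rangle$-branch is $\Gamma(\mathcal W)$ and whose $|1\rangle$-branch is $\Phi(\mathcal W)$. This is exactly \eqref{v-op}, and the number of controlled-$\mathcal W$ calls equals the degree, $O(\Delta_0^{-1}\log(2/\eta))$.

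The main obstacle is Step 3. The existence of phase sequences for an arbitrary complementary pair is the layer-stripping argument of QSP. In that argument one must check that the degrees and leading coefficients of $\Gamma$ and $\Phi$ are compatible, so that the $2\times2$ polynomial matrix $\begin{pmatrix}\Gamma & -\overline{\Phi}\\ \Phi & \overline{\Gamma}\end{pmatrix}$ (with conjugate-reciprocal entries) is unitary on the circle and can be factorised degree by degree. I would handle this by citing the standard QSP factorisation, applied to the eigenvalues $z$ of $\mathcal W$ with $\mathcal W$ acting as the signal operator.
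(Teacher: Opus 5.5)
Your proof is correct. It differs from the paper mainly in the choice of filter.

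\textbf{Filter.} The paper takes $\Gamma_{t,n}(z)=\bigl(\frac1t\sum_{j=0}^{t-1}z^j\bigr)^n$ with $t=\lceil 2e/|1-e^{i\Delta_0}|\rceil$ and $n=\lceil\ln(1/\eta)\rceil$. This is already a genuine polynomial in $z$, with $\Gamma(1)=1$ and modulus at most one on the closed disc. Its off-gap bound $|\Gamma_{t,n}(e^{i\lambda})|\le\bigl(2/(t|1-e^{i\lambda}|)\bigr)^n\le e^{-n}$ is a one-line geometric-sum estimate; this is essentially your Fej\'er fallback.

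Your Dolph--Chebyshev window is the minimax choice. For small $\Delta_0$ it gives degree about $(2/\Delta_0)\ln(2/\eta)$ rather than about $(2e/\Delta_0)\ln(1/\eta)$, improving the constant by roughly a factor $e$. The price is three small adjustments: $L$ must be even, you must insert the phase $z^{L/2}$, and $\Delta_0$ needs a cap. Your $\beta=\sec(\Delta_0/2)$ is undefined at $\Delta_0=\pi$, which the statement allows; replacing $\Delta_0$ by $\min\{\Delta_0,\pi/2\}$ fixes this at no asymptotic cost.

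\textbf{Circuit.} The paper does not argue your Steps 2 and 3 itself. It cites \cite[Theorem~1]{baptistereflectioneigenspace}, which packages the same Fej\'er--Riesz plus QSP-factorisation argument, with at most $d$ controlled-$\mathcal W$ calls and one ancilla.

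The compatibility issue you flag in Step 3 does not arise for the one-column statement you need. Compare the $z^d$ coefficients in $\Gamma(z)\overline{\Gamma(1/\bar z)}+\Phi(z)\overline{\Phi(1/\bar z)}=1$. For $d\ge1$ this gives $\gamma_d\bar\gamma_0+\phi_d\bar\phi_0=0$, i.e.\ the constant and top coefficient vectors are orthogonal. That orthogonality is all the layer-stripping induction requires.
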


For a normalised input $|\psi\rangle$, measuring the
additional qubit yields outcome $0$ with probability
$\|\Gamma(\mathcal W)|\psi\rangle\|^2$.
Conditional on obtaining this outcome, the remaining
registers are in the state
\[
    \frac{\Gamma(\mathcal W)|\psi\rangle}
         {\|\Gamma(\mathcal W)|\psi\rangle\|}.
\]
Thus, outcome $0$ signals successful application of
the approximate projector $\Gamma(\mathcal W)$.

 The construction in Proposition~\ref{prop-poly} follows
\cite{baptistereflectioneigenspace}; its proof is given
in Section~\ref{sec:reflection-eigenspace}.
By Proposition~\ref{cor:SPUE-spectral}, we may take
$\Delta_0=\sqrt{2\delta_0}$ for any known positive lower
bound $\delta_0$ on the absolute spectral gap of $K$.
Combining the initial-state preparation and overlap bound
(Lemma~\ref{lem-ol}), amplitude amplification
(Proposition~\ref{prop:amplitude-amplification}), and
the approximation of the stationary-state projector
(Proposition~\ref{prop-poly}) yields the following
complexity bound for the quantum Gibbs sampler. 

\begin{theorem}[Complexity of the quantum Gibbs sampler]
\label{th:complexity-gibbs-sampler}
Let $P$ be a Markov kernel on $\chi$ with initial
distribution $\mu_0$ satisfying Assumption~\ref{assump-spec}.
Assume that $N,T\ge2$, that the positivity parameter
$\omega_->0$ in \eqref{om} is known, and that access to
the primitives \eqref{o-k} and
\eqref{eq:coherent-reference-transition}, together with
their adjoints, is available.

For every $0<\varepsilon<1$, there exists a quantum
algorithm that heralds success with probability bounded
below by a universal constant. Conditional on success,
measuring its path register yields a distribution $\rho$
on $\Omega_{a,b}$ satisfying
\[
    d_{\mathrm{TV}}(\rho,\mathbb P_{a,b})\le\varepsilon.
\]
The algorithm uses
\begin{equation}\label{comp-gibbs}
    O\!\left(
        \frac{TN^2}{\omega_-^{5/2}}
        \log\frac{N}{\omega_-\varepsilon}
    \right)
\end{equation}
calls to the walk primitives $O_K,O_K^*,S$, and
\[
    O\!\left(T\sqrt{\frac{N}{\omega_-}}\right)
\]
calls to $U_P,U_P^*$.
\end{theorem}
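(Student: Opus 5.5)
The plan is to assemble the sampler from the three ingredients stated just before the theorem: the initial-state circuit $A_{a,b}$ with the overlap bound of Lemma~\ref{lem-ol}, the approximate projector of Proposition~\ref{prop-poly}, and amplitude amplification (Proposition~\ref{prop:amplitude-amplification} with Remark~\ref{rem-grov}). By Theorem~\ref{cor:upper-bound-spectral-gap} we know $\delta_0=\omega_-^4/(T^2N^3)$ is a lower bound on the absolute spectral gap. Proposition~\ref{cor:SPUE-spectral}(ii) then gives a phase-gap lower bound $\Delta_0=\sqrt{2\delta_0}=\sqrt2\,\omega_-^2/(TN^{3/2})$ for $\mathcal W|_{\mathcal L}$.

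One point needs care: Proposition~\ref{prop-poly} is stated for the full operator $\mathcal W$. We only use it on $\mathcal L$, which is invariant under $\mathcal W$. Since $\mathcal W$ restricts to a unitary on $\mathcal L$, we apply the polynomial construction to that restriction. Our input $\square|\Sigma_{a,b}\rangle$ lies in $\mathcal L$, so nothing in $\mathcal L^\perp$ (where eigenvalue $1$ could occur) is ever touched.

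\emph{Preparation circuit.} Define $\mathcal A=V\,(I\otimes O_K)(I\otimes A_{a,b})$. This prepares $|\Sigma_{a,b}\rangle$, applies $\square$ via \eqref{sq-o-k}, and then applies $V$ from \eqref{v-op} with accuracy $\eta$. The marked subspace is the ancilla-$0$ branch, $\Pi_*=|0\rangle\langle0|\otimes I$, so $S_*$ is a single-qubit $Z$-type reflection. The marked component is $\Gamma(\mathcal W)\square|\Sigma_{a,b}\rangle$. By \eqref{proj}, $\|\Gamma(\mathcal W)-\Pi\|\le\eta$ and $\Pi\square|\Sigma_{a,b}\rangle=\theta_{a,b}\square|\pi\rangle$, so
\[
\bigl\|\Gamma(\mathcal W)\square|\Sigma_{a,b}\rangle-\theta_{a,b}\square|\pi\rangle\bigr\|\le\eta .
\]
Choosing $\eta\le c\,\varepsilon$ with $c=\sqrt{\omega_-/N}/2$, Lemma~\ref{lem-ol} gives the marked amplitude at least $\theta_{a,b}-\eta\ge c$. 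Amplitude amplification with this known $c$ then heralds success with probability at least $32/(25\pi^2)$. It uses $O(1/c)=O(\sqrt{N/\omega_-})$ calls to $\mathcal A,\mathcal A^*$.

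\emph{Accuracy.} Conditional on the marked outcome, the state is proportional to $|\phi\rangle:=\Gamma(\mathcal W)\square|\Sigma_{a,b}\rangle$. This is an $\eta$-perturbation of a vector of norm $\theta_{a,b}\ge 2c$, and amplification preserves the marked state's direction exactly. Hence the normalised state is within $O(\eta/\theta_{a,b})=O(\varepsilon)$ in Euclidean norm of $\square|\pi\rangle$. Measuring the first (path) register of $\square|\pi\rangle=\sum_x\sqrt{\pi(x)}|x\rangle\sum_y\sqrt{K(x,y)}|y\rangle$ gives exactly $\pi$. For two unit vectors at distance $r$, the induced measurement distributions satisfy $d_{\mathrm{TV}}\le r$. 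Therefore $d_{\mathrm{TV}}(\rho,\mathbb P_{a,b})\le\varepsilon$ after adjusting constants in $\eta$.

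\emph{Counting.} Each call to $\mathcal A$ uses $O(\Delta_0^{-1}\log(2/\eta))$ applications of $\mathcal W$, each costing $O(1)$ calls to $O_K,O_K^*,S$ by \eqref{w-o-op}. With $\log(2/\eta)=O(\log(N/(\omega_-\varepsilon)))$, the walk-primitive count is
\[
O\!\left(\sqrt{\tfrac{N}{\omega_-}}\cdot\tfrac{TN^{3/2}}{\omega_-^2}\log\tfrac{N}{\omega_-\varepsilon}\right)=O\!\left(\tfrac{TN^2}{\omega_-^{5/2}}\log\tfrac{N}{\omega_-\varepsilon}\right).
\]
Each $\mathcal A$ or $\mathcal A^*$ also costs $T-1$ calls to $U_P$ or $U_P^*$, giving $O(T\sqrt{N/\omega_-})$ such calls in total.

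The step I expect to be the main obstacle is the accuracy bookkeeping. Two things must be tracked: the postselected state is a normalised perturbed vector, not $\Pi$ applied exactly, so $\eta$ has to be taken relative to $\theta_{a,b}$; and $\Gamma$ must behave well on $\mathcal L$ even though $\mathcal W$ on $\mathcal L^\perp$ may have eigenvalue $1$. Both issues are handled by the invariance of $\mathcal L$ and the requirement $\eta\lesssim\varepsilon\sqrt{\omega_-/N}$.
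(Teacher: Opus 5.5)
Your proposal is correct and follows essentially the same route as the paper. You prepare $\square|\Sigma_{a,b}\rangle$ via $A_{a,b}$ and $O_K$, apply the filter $V$ of Proposition~\ref{prop-poly} with $\Delta_0=\sqrt{2\delta_0}$ and $\eta\propto c\varepsilon$, amplify the flag-zero branch using the known overlap bound, and convert the $O(\eta/\theta_{a,b})$ state error into a total variation bound. There are two minor differences, both valid. First, the paper uncomputes with $O_K^*$ and compares with $|0\rangle_w|\pi\rangle$, whereas you measure a register of $\square|\pi\rangle$ directly. Second, the paper handles $\mathcal L^\perp$ by noting that its only eigenphases are $0$ and $\pi$, so the gap bound holds on the full space, whereas you restrict $\mathcal W$ to the invariant subspace $\mathcal L$.
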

The proof in Section~\ref{pf-gibbs} describes the full
algorithm, including the initial-state preparation circuit
and its inverse used in each amplification step.

\begin{remark}[Classical comparison]
\label{rem:classical-gibbs-comparison}
For a classical Gibbs sampler initialised from $q_{a,b}$,
Lemma~\ref{lem-ol} and
Theorem~\ref{cor:upper-bound-spectral-gap} imply that
\[
    O\!\left(
        \frac{T^2N^3}{\omega_-^4}
        \log\frac{N}{\omega_-\varepsilon}
    \right)
\]
Gibbs updates suffice to achieve a total variation error
of at most $\varepsilon$. Initialisation requires an
additional $T-1$ transitions of the reference chain.
The classical bound is quadratic in $T$, whereas the
bound on quantum walk calls is linear. This comparison
concerns upper bounds for the two Gibbs procedures under
their respective access models; it does not establish
an advantage over all classical bridge samplers.
\end{remark}

\section{Quantum computation of the static Schr\"odinger bridge} \label{sec:quantum-sinkhorn-alg}
In this section we provide a quantum algorithm for computing the Schr\"odinger potentials $\varphi^\circ,\varphi^*$ that scale the reference endpoint distribution $\P_{0,T}$ in Lemma~\ref{lem:dynamic-bridge-to-static-bridge}(ii).

From \eqref{ref-stat} we have
\be \label{eq:ref-endpoint-matrix}
\P_{0,T}(i,j)=\mu_0(i)P^T(i,j),
\qquad i,j\in\chi.
\ee
In particular, $\P_{0,T}$ is a nonnegative $N\times N$
matrix satisfying
\[
    \sum_{i,j\in\chi}\P_{0,T}(i,j)=1.
\]
We assume coherent exact query access to the entries of
$P^T$, $\mu_0$, and $\mu_T$. Under
Assumption~\ref{assump-box}(iv),
\eqref{eq:ref-endpoint-matrix} allows each entry of
$\P_{0,T}$ to be computed using a constant number of
input-oracle calls and polylogarithmic arithmetic overhead.
Throughout this section, $P^T$ is supplied as part of
the input, consistent with the exact-entry oracle model
of \citet{Improved-quantum-lower-and-upper-bounds-for-matrix-scaling}.
The arithmetic cost of constructing $P^T$ is discussed
at the end of the section.

\subsection{Quantum matrix scaling} \label{sec-box}

Following \cite{Improved-quantum-lower-and-upper-bounds-for-matrix-scaling}, we introduce logarithmic scaling variables $x,y\in\mathbb R^N$ and define the nonnegative matrix
\be \label{eq:scaled-endpoint-matrix}
M(x,y)_{ij}:=\P_{0,T}(i,j)e^{x_i+y_j},
\qquad i,j\in\chi,
\ee
and the convex matrix-scaling potential
\be \label{eq:static-convex-potential}
f(x,y):=
\sum_{i,j\in\chi}\P_{0,T}(i,j)e^{x_i+y_j}
-\langle\mu_0,x\rangle-\langle\mu_T,y\rangle.
\ee
In the following we assume  that $\P_{0,T}\sim\mu_0\otimes\mu_T$.  If
$(x^*,y^*)$ minimises $f$, then the first-order conditions give
\be \label{lg1} 
\sum_{j\in\chi}M(x^*,y^*)_{ij}=\mu_0(i),
\qquad
\sum_{i\in\chi}M(x^*,y^*)_{ij}=\mu_T(j).
\ee
Consequently, Proposition \ref{prop:static-potentials} implies that
\be \label{eq:static-bridge-matrix-scaling}
\Q_{0,T}^*(i,j)
=M(x^*,y^*)_{ij}
=\P_{0,T}(i,j)e^{x_i^*+y_j^*},
\ee
and the Schr\"odinger potentials are
$\varphi^\circ=e^{x^*}$ and $\varphi^*=e^{y^*}$.

The method introduced in \cite[Algorithm~1 and Theorem~3.13]{Improved-quantum-lower-and-upper-bounds-for-matrix-scaling} approximately minimises the function $f$ in \eqref{eq:static-convex-potential}, by a regularised, box-constrained Newton method. At each iteration, quantum amplitude estimation approximates the
gradient, while quantum graph sparsification constructs a sparse
approximation of the Hessian, exploiting its Laplacian structure
after a sign change. A classical subroutine approximately minimises
the resulting quadratic model over an $\ell_\infty$-ball, yielding
a damped update of the logarithmic potentials. An additional
rescaling step controls the total mass of the scaled matrix.

More precisely, let $N=|\chi|$ and let $\mathbb{P}_{0,T}$ be the normalised
endpoint reference matrix. Define 
\[
m
:=
\#\bigl\{(i,j)\in\chi^2:
          \mathbb{P}_{0,T}(i,j)>0\bigr\}.
\]
For target probability vectors $\mu_0,\mu_T$, define $M(x,y)$ and $f$ as in \eqref{eq:scaled-endpoint-matrix} and \eqref{eq:static-convex-potential}, respectively, and let
 \[
f^\star=\inf_{x,y\in\mathbb{R}^{N}}f(x,y).
\]
To apply
\cite[Algorithm~1 and Theorem~3.13]
{Improved-quantum-lower-and-upper-bounds-for-matrix-scaling},
we impose the following assumption.

\begin{assumption}  \label{assump-box}
\begin{enumerate}
\item[\textbf{(i)}] \emph{Positive marginals and quantitative lower bounds.}
The probability vectors $\mu_0$ and $\mu_T$ are strictly
positive. There exist constants $a,c>0$ not depending on $N$ and $T$ such that
\[
\min_{\substack{i,j\in\chi\\
                 \mathbb{P}_{0,T}(i,j)>0}}
\mathbb{P}_{0,T}(i,j)\ge aN^{-c},
\qquad
\min_{i\in\chi}\mu_0(i)\ge aN^{-c},
\qquad
\min_{j\in\chi}\mu_T(j)\ge aN^{-c}.
\]
\item[\textbf{(ii)}]  \emph{Asymptotic scalability.}
For every $\delta>0$, there exist $x,y\in\mathbb{R}^{N}$ such that
\[
\|M(x,y)\mathbf{1}-\mu_0\|_1\le\delta,
\qquad
\|M(x,y)^\top\mathbf{1}-\mu_T\|_1\le\delta.
\]

\item[\textbf{(iii)}] \emph{Bounded approximate minimiser.}
For the prescribed optimisation accuracy $\eta\in(0,1)$,
a bound $B\ge1$ is available such that there exist
$x_\eta,y_\eta\in\mathbb{R}^{N}$ satisfying
\[
\|(x_\eta,y_\eta)\|_\infty\le B,
\qquad
f(x_\eta,y_\eta)-f^\star\le\eta^2.
\]
Here
$\|(x,y)\|_\infty
=\max\{\|x\|_\infty,\|y\|_\infty\}$.
In particular, a bound on the norm of an exact minimiser
suffices.
\item[\textbf{(iv)}]
\emph{Input access and computational model.}
We consider entrywise-positive endpoint matrices
$\mathbb P_{0,T}$, whose support is the known set
$\chi\times\chi$. We assume coherent entry-query access
to exact rational representations of
$\mathbb P_{0,T}$, $\mu_0$, and $\mu_T$, with input
encoding lengths bounded by $\operatorname{polylog}(N)$.

We use the quantum-access and QCRAM model of the cited
quantum box-constrained Newton algorithm, allowing
classical updates and coherent queries to the scaling
vectors. Oracle calls, individual memory operations,
and arithmetic at the required working precision incur
at most polylogarithmic overhead in $N$, $B$, and $1/\eta$.
The costs of constructing the endpoint kernel and
preparing its oracle representation are excluded.
The encoding-length bound is an additional restriction
on the input family; it does not follow from the
assumptions on the one-step transition matrix.
\end{enumerate}

\end{assumption}

Under Assumption \ref{assump-box}, \cite[Theorem~3.13]
{Improved-quantum-lower-and-upper-bounds-for-matrix-scaling}
guarantees that Algorithm~1 therein returns classical
vectors $x,y\in\mathbb{R}^{N}$ satisfying
\[
f(x,y)-f^\star\le6\eta^2
\]
with probability at least $2/3$, in quantum time
$
\widetilde O\!\left(
\eta^{-2}B^2 N^{3/2}
\right), 
$
where for an entrywise-positive reference matrix we have $m=N^2$. The output specifies $M(x,y)$ implicitly through its
logarithmic potentials. These bounds are conditional on the stated encoding,
access, and memory assumptions. The costs of constructing
$\mathbb{P}_{0,T}$ and preparing its oracle representation
are additional.  

\subsection{Quantum algorithm for Schr\"odinger potentials}
To verify Assumption~\ref{assump-box}(iii), we first establish an $\ell_\infty$ bound for a minimiser of the convex potential in \eqref{eq:static-convex-potential}.

\begin{lemma}\label{lem:static-scaling-bound}
Assume that $\P_{0,T}$, $\mu_0$, and $\mu_T$ are entrywise positive, and set
\[
p_{\min}:=\min_{i,j\in\chi}\P_{0,T}(i,j),\qquad
\mu_{0,\min}:=\min_{i\in\chi}\mu_0(i),\qquad
\mu_{T,\min}:=\min_{j\in\chi}\mu_T(j).
\]
There exists a minimiser $(x^*,y^*)$ of $f$ such that
\be \label{eq:static-scaling-bound}
\lVert(x^*,y^*)\rVert_\infty
\le B_0
:=\log\left(
\frac{N}{p_{\min}\mu_{0,\min}\mu_{T,\min}}
\right).
\ee
\end{lemma}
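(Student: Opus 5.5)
The plan is to build the minimiser from the Schrödinger potentials of Proposition~\ref{prop:static-potentials}. Then I use the one-parameter gauge freedom $(x,y)\mapsto(x+c\mathbf 1,y-c\mathbf 1)$ to balance the two vectors, and read off two-sided bounds from the first-order conditions \eqref{lg1}.

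\textbf{Existence and the first-order conditions.} Since $\P_{0,T}$, $\mu_0$, $\mu_T$ are entrywise positive, $\P_{0,T}\sim\mu_0\otimes\mu_T$ with full supports. Proposition~\ref{prop:static-potentials} therefore gives positive $\varphi^\circ,\varphi^*$ on all of $\chi$ solving \eqref{eq:schrodinger-equation}. Set $x^*=\log\varphi^\circ$ and $y^*=\log\varphi^*$. Then \eqref{eq:schrodinger-equation} is exactly $\nabla f(x^*,y^*)=0$, i.e.\ \eqref{lg1}. Since $f$ is convex (a sum of exponentials of linear forms minus a linear term), $(x^*,y^*)$ is a minimiser. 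The same holds for $(x^*+c\mathbf 1,y^*-c\mathbf 1)$ for every $c\in\mathbb R$, because $M$ is unchanged and \eqref{lg1} still holds. I choose $c$ so that
\[
X:=\max_i x^*_i=\max_j y^*_j=:Y,
\]
which is possible because the shift changes $X-Y$ by $2c$.

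\textbf{Upper bound.} Summing \eqref{lg1} shows that $M(x^*,y^*)$ has total mass $1$, so every entry is at most $1$. Hence $p_{\min}e^{x_i^*+y_j^*}\le1$ for all $i,j$. Taking the maximising indices gives $X+Y\le\log(1/p_{\min})$, so by the balancing
\[
x^*_i,\;y^*_j\;\le\; X=Y\;\le\;\tfrac12\log(1/p_{\min}).
\]

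\textbf{Lower bound.} From the row equation,
\[
\mu_0(i)=e^{x_i^*}\sum_j\P_{0,T}(i,j)e^{y_j^*}\le e^{x_i^*}e^{Y}\sum_j\P_{0,T}(i,j)\le e^{x_i^*+Y},
\]
using that row sums of $\P_{0,T}$ are at most $1$. Hence
\[
x_i^*\ge\log\mu_{0,\min}-Y\ge-\log(1/\mu_{0,\min})-\tfrac12\log(1/p_{\min}).
\]
Symmetrically, the column equation gives $y_j^*\ge-\log(1/\mu_{T,\min})-\tfrac12\log(1/p_{\min})$.

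\textbf{Conclusion.} Both bounds have absolute value at most $\log\bigl(1/(p_{\min}\mu_{0,\min}\mu_{T,\min})\bigr)$. This is at most $B_0$, since $N\ge1$ and every quantity inside the logarithm is in $(0,1]$, which gives \eqref{eq:static-scaling-bound}.

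The only real obstacle is the gauge freedom. Without fixing $c$, no individual bound is possible, since only the sums $x_i+y_j$ are controlled. The balancing choice $X=Y$ is what converts the control on pairwise sums into coordinatewise bounds; the remaining steps are routine inequalities.
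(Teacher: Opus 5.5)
Your proof is correct, but it reaches the bound through a different gauge normalisation than the paper's.

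The paper's route:
\begin{itemize}
\item It fixes the gauge one-sidedly by $\max_i x_i^*=0$.
\item It bounds the oscillation $\max_i x_i^*-\min_i x_i^*\le\log\bigl(1/(\mu_{0,\min}p_{\min})\bigr)$ by taking the ratio of two row equations.
\item It solves the column equation for $e^{y_j^*}$, whose denominator lies between $p_{\min}$ and $N$.
\item This yields $\max\{\log\frac{1}{\mu_{0,\min}p_{\min}},\log\frac{N}{\mu_{T,\min}},\log\frac{1}{p_{\min}}\}$.
\end{itemize}

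Your route:
\begin{itemize}
\item You balance the gauge so that $\max_i x_i^*=\max_j y_j^*$.
\item You use the fact that $M(x^*,y^*)$ has total mass one. This gives the common upper bound $\frac12\log(1/p_{\min})$ for both vectors at once.
\item Each marginal equation then gives a one-line lower bound.
\item This produces $\frac12\log(1/p_{\min})+\log\bigl(1/\min\{\mu_{0,\min},\mu_{T,\min}\}\bigr)$.
\end{itemize}

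What each approach buys: yours avoids the row-ratio comparison, treats $x^*$ and $y^*$ symmetrically, and has no $\log N$ term. The paper's argument never needs the total-mass identity. Its lower bound $y_j^*\ge\log(\mu_{T,\min}/N)$ does not involve $p_{\min}$, whereas your symmetric bound charges $\frac12\log(1/p_{\min})$ to both coordinates. Neither intermediate bound dominates the other in general, but both lie below $B_0$, which is all the lemma requires.
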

The proof of Lemma \ref{lem:static-scaling-bound} is given in Section \ref{sec-pf-stat}.

\begin{algorithm}[t]
\caption{Quantum computation of the Schr\"odinger potentials}
\label{alg:quantum-static-bridge}
\begin{algorithmic}[1]
\Require Exact coherent entry access to $P^T$, $\mu_0$, and $\mu_T$;
an error $\eps\in(0,1)$; a bound $B\ge\max\{1,B_0\}$.
\State Implement the exact entry oracle for $\P_{0,T}$ using
\eqref{eq:ref-endpoint-matrix}.
\State Set $\eta=\eps/\sqrt{3}$.
\State Apply the quantum box-constrained Newton algorithm of
\cite[Algorithm 1 and Theorem 3.13]
{Improved-quantum-lower-and-upper-bounds-for-matrix-scaling}
to the matrix $\P_{0,T}$, with target row sums $\mu_0$, target column sums
$\mu_T$, accuracy parameter $\eta$, and diameter bound $B$.
\State Let $(\bar x,\bar y)$ denote the returned scaling vectors and set
$\bar\varphi^\circ=e^{\bar x}$ and $\bar\varphi^*=e^{\bar y}$.
\State \Return $(\bar x,\bar y)$.
\end{algorithmic}
\end{algorithm}

The next theorem bounds the computational cost and the error in the endpoint coupling specified by the computed potentials.
 
\begin{theorem}[Quantum computation of the static Schr\"odinger bridge]
\label{th:quantum-static-bridge}
Suppose that Assumptions \ref{assump-box}(i) and (iv) hold and that
$\P_{0,T}$, $\mu_0$, and $\mu_T$ are entrywise positive.
Let $\eps\in(0,1)$ and choose
$B\ge\max\{1,B_0\}$.
Then, with probability at
least $2/3$, Algorithm \ref{alg:quantum-static-bridge} returns scaling
vectors $(\bar x,\bar y)$ for which the normalised matrix defined by
\be \label{eq:approx-static-bridge}
\overline M_{ij}:=\P_{0,T}(i,j)e^{\bar x_i+\bar y_j},\qquad
L:=\sum_{i,j\in\chi}\overline M_{ij},\qquad
\overline{\Q}_{0,T}(i,j):=\frac{\overline M_{ij}}{L}
\ee
satisfies
\be \label{eq:static-bridge-TV-bound}
d_{TV}(\overline{\Q}_{0,T},\Q_{0,T}^*)\le\eps.
\ee
The quantum running time is
\be \label{eq:static-bridge-quantum-complexity-general}
\widetilde O\left(\frac{B^2 N^{3/2}}{\eps^2}\right).
\ee
\end{theorem}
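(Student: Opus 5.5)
The plan is to reduce the statement to the guarantee of \cite[Theorem~3.13]{Improved-quantum-lower-and-upper-bounds-for-matrix-scaling} quoted after Assumption~\ref{assump-box}, and then convert its optimisation-gap bound into a total variation bound. That quoted guarantee gives $f(\bar x,\bar y)-f^\star\le 6\eta^2$ with probability at least $2/3$, in time $\widetilde O(\eta^{-2}B^2N^{3/2})$. For the conversion I will use an exact identity linking $f(\bar x,\bar y)-f^\star$ to $\operatorname{KL}(\Q^*_{0,T}\,\|\,\overline{\Q}_{0,T})$, and then apply Pinsker's inequality.

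\textbf{Step 1: hypotheses of the cited theorem.} Items (i) and (iv) of Assumption~\ref{assump-box} are assumed. For item (ii), positivity of $\P_{0,T}$, $\mu_0$ and $\mu_T$ gives $\P_{0,T}\sim\mu_0\otimes\mu_T$. Proposition~\ref{prop:static-potentials} then provides exact potentials, and $x=\log\varphi^\circ$, $y=\log\varphi^*$ achieve zero marginal error for every $\delta>0$. The same exact solution satisfies the first-order conditions \eqref{lg1}, so it is a minimiser of the convex function $f$. For item (iii), Lemma~\ref{lem:static-scaling-bound} supplies a minimiser $(x^*,y^*)$ with $\|(x^*,y^*)\|_\infty\le B_0\le B$. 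This exact minimiser serves as $(x_\eta,y_\eta)$ for every $\eta$. With $\eta=\eps/\sqrt3$ the cited theorem applies. On an event of probability at least $2/3$ it gives $f(\bar x,\bar y)-f^\star\le 6\eta^2=2\eps^2$, and the time bound becomes $\widetilde O(B^2N^{3/2}/\eps^2)$. The oracle for $\P_{0,T}$ in line~1 of Algorithm~\ref{alg:quantum-static-bridge} costs $O(1)$ input calls plus polylogarithmic arithmetic per query, by \eqref{eq:ref-endpoint-matrix}, so it does not change this bound.

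\textbf{Step 2: the gap--KL identity.} By \eqref{eq:static-bridge-matrix-scaling}, $\Q^*_{0,T}=M(x^*,y^*)$ has total mass $1$, so
\[
f^\star=1-\langle\mu_0,x^*\rangle-\langle\mu_T,y^*\rangle,
\qquad
f(\bar x,\bar y)=L-\langle\mu_0,\bar x\rangle-\langle\mu_T,\bar y\rangle .
\]
Both $\Q^*_{0,T}$ and $\overline{\Q}_{0,T}$ are proportional to $\P_{0,T}$ times exponential factors, so the KL divergence can be computed directly. Because $\Q^*_{0,T}$ has marginals $\mu_0$ and $\mu_T$,
\[
\operatorname{KL}(\Q^*_{0,T}\,\|\,\overline{\Q}_{0,T})
=\sum_{i,j}\Q^*_{0,T}(i,j)\bigl[(x^*_i-\bar x_i)+(y^*_j-\bar y_j)+\log L\bigr]
=\langle\mu_0,x^*-\bar x\rangle+\langle\mu_T,y^*-\bar y\rangle+\log L .
\]
Subtracting the two expressions gives
\[
f(\bar x,\bar y)-f^\star=\operatorname{KL}(\Q^*_{0,T}\,\|\,\overline{\Q}_{0,T})+(L-1-\log L).
\]
Since $L-1-\log L\ge0$ for $L>0$, it follows that $\operatorname{KL}(\Q^*_{0,T}\,\|\,\overline{\Q}_{0,T})\le 2\eps^2$.

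\textbf{Step 3: conclusion.} Pinsker's inequality gives
\[
d_{TV}\le\sqrt{\operatorname{KL}/2}\le\eps,
\]
which is \eqref{eq:static-bridge-TV-bound}.

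I expect the main obstacle to be Step~1. The cited theorem is stated in terms of an approximate minimiser inside a box of radius $B$, and I must check that its guarantee is relative to the global infimum $f^\star$ rather than the box-constrained minimum. Lemma~\ref{lem:static-scaling-bound} is what settles this, because it places an exact global minimiser inside the box. The identity in Step~2 also needs a little care: it relies on using the specific minimiser of \eqref{eq:static-bridge-matrix-scaling}. It holds for any exact minimiser, since the minimiser is unique up to the shift $(x+c,\,y-c)$, and that shift leaves the expression unchanged because $\langle\mu_0,\mathbf 1\rangle=\langle\mu_T,\mathbf 1\rangle=1$.
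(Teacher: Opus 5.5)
Your proposal is correct and follows the paper's proof essentially step for step. You verify Assumption~\ref{assump-box}(ii)--(iii) via Proposition~\ref{prop:static-potentials} and Lemma~\ref{lem:static-scaling-bound}, invoke the cited theorem with $\eta=\eps/\sqrt3$, derive the identity $f(\bar x,\bar y)-f^\star=L-1-\log L+\operatorname{KL}(\Q^*_{0,T}\,\|\,\overline{\Q}_{0,T})$, and conclude with Pinsker's inequality, exactly as the paper does (your additional remark that the identity does not depend on which exact minimiser is used, by shift invariance, is a detail the paper leaves implicit).
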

The proof of Theorem \ref{th:quantum-static-bridge} is given in Section \ref{sec-pf-stat}. 
\begin{remark}\label{rem:static-uniform-positivity}

For the simplified complexity bound below, assume in addition the following lower bounds on both prescribed marginals:
\be \label{eq:static-uniform-positivity}
P(i,j)\ge\frac{\omega_-}{N},\qquad
\mu_0(i)\ge\frac{\omega_-}{N},\qquad
\mu_T(j)\ge\frac{\omega_-}{N}, \quad \textrm{ for all } i,j \in \chi.
\ee
This implies 
\[
P^T(i,j)
=\sum_{k\in\chi}P^{T-1}(i,k)P(k,j)
\ge\frac{\omega_-}{N},  \quad \textrm{ for all } i,j \in \chi, \ T \geq 1. 
\]
It follows from \eqref{eq:ref-endpoint-matrix} that
$p_{\min}\ge\omega_-^2/N^2$, and therefore recalling \eqref{eq:static-scaling-bound}, 
$B_0=O(\log(N/\omega_-))$.  
For fixed $\omega_->0$, Theorem~\ref{th:quantum-static-bridge} therefore gives the bound $\widetilde O(\eps^{-2}N^{3/2})$. Here and in the matrix-scaling bounds, $\widetilde O$ suppresses polylogarithmic factors in the dimension and the inverse accuracy. If $\omega_-$ varies, the dependence through $B=O(\log(N/\omega_-))$ must also be retained.
\end{remark} 

\begin{remark}[Interpretation of the speedup]\label{rem:static-speedup}
For a dense endpoint matrix, the classical box-constrained
Newton method has running time $\widetilde O(BN^2)$,
whereas Algorithm \ref{alg:quantum-static-bridge} has
running time $\widetilde O(B^2N^{3/2}/\eps^2)$.
Thus, at fixed accuracy and when
$B=\operatorname{polylog}(N)$, the dependence on the
state-space size improves from $N^2$ to $N^{3/2}$,
up to logarithmic factors.
More generally, comparing these bounds and suppressing
logarithmic factors gives an asymptotic improvement
when $\eps^2\gg B/\sqrt N$.
This comparison is conditional on the stated input-access
and computational assumptions and excludes the costs of
constructing $P^T$ and preparing the endpoint oracle.
\end{remark}

\subsection{From scaling vectors to endpoint samples}
\label{rem:static-output}

We now explain how to use the computed scaling vectors
to sample an endpoint pair and bound its error relative
to the optimal coupling $\Q_{0,T}^*$.
The algorithm returns two vectors, each with $N$ entries,
which specify $\overline{\Q}_{0,T}$ through
\eqref{eq:approx-static-bridge}. Constructing the full
matrix would require $\Omega(N^2)$ operations.
The procedure below evaluates only one row.

Fix the returned vectors $(\bar x,\bar y)$.
First, sample $X_0$ from the prescribed marginal $\mu_0$.
Given $X_0=i$, evaluate the $N$ weights
$\P_{0,T}(i,j)e^{\bar y_j}$, normalise them, and sample
$X_T$ from
\be \label{row} 
    q_i(j)
    :=\frac{\P_{0,T}(i,j)e^{\bar y_j}}
            {\sum_{k\in\chi}\P_{0,T}(i,k)e^{\bar y_k}},
    \qquad j\in\chi.
\ee
The denominator is positive under the hypotheses of
Theorem~\ref{th:quantum-static-bridge}.
This is the conditional distribution of the second
endpoint given the first under $\overline{\Q}_{0,T}$:
the row factor $e^{\bar x_i}$ and the global normalisation
constant cancel. Thus, no other row needs to be evaluated.

The resulting joint distribution is
\begin{equation}\label{eq:sampled-static-bridge}
    \widetilde{\Q}_{0,T}(i,j)=\mu_0(i)q_i(j).
\end{equation}
Its first marginal is exactly $\mu_0$, while its second
marginal may differ from $\mu_T$. In particular,
$\widetilde{\Q}_{0,T}$ need not equal
$\overline{\Q}_{0,T}$, because we sample the first
endpoint from $\mu_0$ rather than from the first marginal
of the computed coupling.

To bound the resulting error, write
$\bar\mu_0(i)=\sum_j\overline{\Q}_{0,T}(i,j)$.
Since $\overline{\Q}_{0,T}(i,j)=\bar\mu_0(i)q_i(j)$,
the two distributions have the same conditional laws,
and hence
\[
    d_{\mathrm{TV}}(
        \widetilde{\Q}_{0,T},\overline{\Q}_{0,T})
    =d_{\mathrm{TV}}(\mu_0,\bar\mu_0)
    \le d_{\mathrm{TV}}(
        \Q_{0,T}^*,\overline{\Q}_{0,T}).
\]
The inequality follows because taking a marginal cannot
increase total variation distance. The triangle
inequality therefore gives
\begin{equation}\label{eq:sampled-static-bridge-TV}
    d_{\mathrm{TV}}(\widetilde{\Q}_{0,T},\Q_{0,T}^*)
    \le
    2d_{\mathrm{TV}}(\overline{\Q}_{0,T},\Q_{0,T}^*).
\end{equation}

Consequently, running the endpoint algorithm with
tolerance $\varepsilon/2$ produces, with probability
at least $2/3$, scaling vectors whose associated sampler
has total variation error at most $\varepsilon$.
This guarantee applies to the sampling distribution
for each successful scaling run. Since success of the
scaling computation is not assumed to be detectable,
the same error bound does not automatically hold when
averaging over all scaling runs.

\paragraph{A sharper error bound.}
The sampling procedure also preserves the relative-entropy
guarantee obtained in the proof of
Theorem~\ref{th:quantum-static-bridge}.
Indeed, replacing the approximate first marginal $\bar\mu_0$
by the prescribed marginal $\mu_0$ can only reduce the
relative entropy from the target coupling. The chain rule for relative entropy gives
\begin{equation}\label{eq:sampled-static-bridge-KL}
\begin{aligned}
\operatorname{KL}(\Q_{0,T}^*\|\widetilde{\Q}_{0,T})
&=\operatorname{KL}(\Q_{0,T}^*\|\overline{\Q}_{0,T})
  -\operatorname{KL}(\mu_0\|\bar\mu_0)\\
&\le\operatorname{KL}(\Q_{0,T}^*\|\overline{\Q}_{0,T}).
\end{aligned}
\end{equation}
Let $\varepsilon_s\in(0,1)$ denote the accuracy parameter
used when running the endpoint algorithm in
Theorem~\ref{th:quantum-static-bridge}.
Its proof shows that, with probability at least $2/3$,
\[
    \operatorname{KL}(\Q_{0,T}^*\|\overline{\Q}_{0,T})
    \le2\varepsilon_s^2;
\]
see \eqref{eq:static-objective-error}--\eqref{eq:static-objective-KL-identity}.
Combining this bound with
\eqref{eq:sampled-static-bridge-KL} and Pinsker's
inequality gives
\[
    d_{\mathrm{TV}}(\widetilde{\Q}_{0,T},\Q_{0,T}^*)
    \le
    \sqrt{\frac12
      \operatorname{KL}(\Q_{0,T}^*\|\widetilde{\Q}_{0,T})}
    \le\varepsilon_s.
\]
Thus, the relative-entropy estimate in the proof yields
endpoint error at most $\varepsilon_s$, improving on
the bound $2\varepsilon_s$ obtained from
\eqref{eq:sampled-static-bridge-TV}.

\paragraph{Sampling cost.}
Sampling from $\mu_0$ and then evaluating and sampling
the selected row in \eqref{row} requires $O(N)$ entry
evaluations and arithmetic operations. Working-precision
costs are accounted for under
Assumption~\ref{assump-box}(iv); the costs of constructing
$P^T$ and its oracle representation are excluded.

Let $\widehat\mu_0$ and $\widehat q_i$ denote the
distributions actually sampled after numerical
approximation. The resulting joint endpoint law is
\[
    \widehat{\Q}_{0,T}(i,j)
    :=\widehat\mu_0(i)\widehat q_i(j).
\]
If
\[
    d_{\mathrm{TV}}(\widehat\mu_0,\mu_0)\le\xi_0,
    \qquad
    \sup_{i\in\chi}d_{\mathrm{TV}}(\widehat q_i,q_i)\le\xi_1,
\]
then, whenever
$d_{\mathrm{TV}}(\overline{\Q}_{0,T},\Q_{0,T}^*)
\le\varepsilon_s$,
\begin{equation}\label{eq:sampled-static-bridge-precision}
    d_{\mathrm{TV}}(\widehat{\Q}_{0,T},\Q_{0,T}^*)
    \le2\varepsilon_s+\xi_0+\xi_1.
\end{equation}
Under the relative-entropy guarantee above,
$2\varepsilon_s$ can be replaced by $\varepsilon_s$.
The numerical implementation and its precision
requirements are detailed in
Appendix~\ref{app:endpoint-precision}.

\paragraph{From endpoint pairs to complete trajectories.}
To sample a complete trajectory, first draw an endpoint
pair $(i,j)$ from $\widehat{\Q}_{0,T}$, then apply the
quantum conditional-path sampler of
Theorem~\ref{th:complexity-gibbs-sampler} with these
endpoints. Let $\rho_{i,j}$ denote its output law
conditional on success, with accuracy chosen so that
\[
    \sup_{i,j}
    d_{\mathrm{TV}}(\rho_{i,j},\P_{i,j})\le\zeta.
\]

If the sampler reports failure, repeat the entire
conditional-path procedure with fresh randomness,
keeping the endpoints fixed. Since each attempt
succeeds with probability bounded below by a universal
constant, this procedure terminates almost surely
after a constant expected number of attempts.
Its output law for the fixed endpoints remains
$\rho_{i,j}$. Resampling the endpoints after failure
could bias their distribution, since the success
probability may depend on $(i,j)$.

Let $\widehat{\Q}$ denote the resulting distribution
on complete trajectories. It satisfies
\[
    \widehat{\Q}(x_{0:T})
    =\widehat{\Q}_{0,T}(x_0,x_T)
      \rho_{x_0,x_T}(x_{0:T}).
\]
By Lemma~\ref{lem:dynamic-bridge-to-static-bridge},
the target distribution $\Q^*$ has endpoint law
$\Q_{0,T}^*$ and conditional path laws $\P_{i,j}$.
The total error is therefore bounded by the sum
of the endpoint error and the conditional-path error:
\begin{equation}\label{eq:static-to-dynamic-sampling-error}
\begin{aligned}
    d_{\mathrm{TV}}(\widehat{\Q},\Q^*)
    &\le d_{\mathrm{TV}}(
        \widehat{\Q}_{0,T},\Q_{0,T}^*)+\zeta\\
    &\le2\varepsilon_s+\xi_0+\xi_1+\zeta,
\end{aligned}
\end{equation}
where the second inequality uses
\eqref{eq:sampled-static-bridge-precision}.

To obtain total variation error at most $\varepsilon$,
choose the accuracy parameters so that
$2\varepsilon_s+\xi_0+\xi_1+\zeta\le\varepsilon$.
When endpoint sampling introduces no numerical error
($\xi_0=\xi_1=0$), it suffices to take
$\varepsilon_s=\varepsilon/4$ and $\zeta=\varepsilon/2$.

This guarantee applies to each scaling output satisfying
the error bound in
Theorem~\ref{th:quantum-static-bridge}, an event of
probability at least $2/3$. Unlike the success of the
conditional-path sampler, success of the scaling
computation is not assumed to be detectable.
The total cost comprises the scaling computation,
endpoint sampling, and all attempts of the
conditional-path sampler.

\subsection{Classical computation of the endpoint kernel}

When only the one-step transition matrix $P$ is provided,
the endpoint algorithm requires an additional computation
of $P^T$. We describe two classical methods and their
arithmetic costs.

\paragraph{Repeated squaring.}
The simplest method computes the powers
$P,P^2,P^4,\ldots$ by successively squaring the preceding
matrix. Writing $T$ in binary,
\[
    T=\sum_{\ell=0}^{L}b_\ell2^\ell,
    \qquad b_\ell\in\{0,1\},
    \qquad L=\lfloor\log_2T\rfloor,
\]
we then multiply the powers corresponding to the
nonzero binary digits:
\[
    P^T=\prod_{\ell:\,b_\ell=1}P^{2^\ell}.
\]
This requires $O(\log(T+1))$ matrix multiplications.
If $\mathsf{MM}(N)$ denotes the arithmetic cost of
multiplying two $N\times N$ matrices, the total cost is
\[
    O\!\left(\mathsf{MM}(N)\log(T+1)\right).
\]
With standard matrix multiplication, this becomes
$O(N^3\log(T+1))$. Only the current power and the
accumulated product need to be stored, together with
the multiplication workspace, giving an $O(N^2)$
storage bound in scalar entries.

\paragraph{Computation via Frobenius normal form.}
An alternative uses the Frobenius normal form, which
exists for every square matrix and does
not require diagonalizability.
The algorithm of \citet{storjohannVillard2000}
computes this form and the associated change of basis
in $O(N^3)$ field operations.

In this basis, computing powers reduces to computing
polynomial remainders $z^T\bmod f_\ell(z)$, where
the polynomials $f_\ell$ are determined by $P$ and
their degrees sum to $N$. Polynomial repeated squaring
and reconstruction of the blocks require
$O(N^2\log(T+1))$ operations. Including the changes
of basis, the total cost is therefore
\[
    O\!\left(N^3+N^2\log(T+1)\right)
\]
field operations.

As with repeated squaring, this is an exact-arithmetic
bound. For rational inputs, bit complexity also depends
on the encoding lengths of intermediate and output
entries.

\section{Dynamic Schrödinger bridge with additive cost}\label{sec:extension-with-additive-cost}
We now extend the dynamic Schr\"odinger bridge problem
to include a running cost, allowing us to penalize
undesirable states along the trajectory while preserving
the prescribed endpoint marginals. The reference path law $\mathbb P$ and the marginal constraint set $\Lambda(\mu_0,\mu_T)$ remain as in \eqref{eq:definition-probability-P} and \eqref{lam-def}.

\begin{definition}[DSB with costs]
Let $g:\chi\to\mathbb R_+$ be a finite cost function and let $\mu_0,\mu_T$ be probability distributions on $\chi$. The dynamic Schr\"odinger bridge with additive cost is
\be\label{cost-add}
\mathbb Q^g
=\underset{\mathbb Q\in\Lambda(\mu_0,\mu_T)}{\arg\min}
\left\{\mathbb E_{\mathbb Q}\left[\sum_{t=0}^{T-1}g(X_t)\right]
+\operatorname{KL}(\mathbb Q\Vert\mathbb P)\right\}.
\ee
\end{definition}

Exponential reweighting absorbs the running cost into a new reference measure. Its normalisation must be included when using relative entropy between probability distributions. This is described in the following proposition. 
\begin{proposition}\label{prop-costs}
Let $P$ be a Markov kernel on the finite state space
$\chi$, let $\mu_0,\mu_T$ be probability distributions
on $\chi$, and let $g:\chi\to\mathbb R$.
Assume that $\Lambda(\mu_0,\mu_T)$ contains at least
one law $\mathbb Q_0$ with $\mathbb Q_0\ll\mathbb P$.
Define the finite measure
\be\label{eq:definition-of-R}
\mathbb V(x)
=\mu_0(x_0)\prod_{t=0}^{T-1}
P(x_t,x_{t+1})e^{-g(x_t)},\qquad x\in\Omega,
\ee
and let
\be \label{v-hat} 
Z_T=\sum_{x\in\Omega}\mathbb V(x)>0,
\qquad \widehat{\mathbb V}(x)=\mathbb V(x)/Z_T.
\ee
For every $\mathbb Q\in\Lambda(\mu_0,\mu_T)$,
\be\label{eq:cost-entropy-identity}
\mathbb E_{\mathbb Q}\left[\sum_{t=0}^{T-1}g(X_t)\right]
+\operatorname{KL}(\mathbb Q\Vert\mathbb P)
=\operatorname{KL}(\mathbb Q\Vert\widehat{\mathbb V})-\log Z_T.
\ee
Consequently, $\mathbb Q^g$ is the dynamic Schr\"odinger bridge with reference probability distribution $\widehat{\mathbb V}$.
\end{proposition}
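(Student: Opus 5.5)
The plan is a direct computation of the log-likelihood ratio, followed by a reduction to the definition \eqref{dyn-br}.

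\textbf{Step 1: the two references have the same support.} Since $g$ is finite, each factor $e^{-g(x_t)}$ lies in $(0,\infty)$. Hence $\mathbb V(x)>0$ exactly when $\mathbb P(x)>0$, so $\widehat{\mathbb V}\sim\mathbb P$. Moreover, $Z_T\ge \mathbb P(\operatorname{supp}\mathbb P)\min_{i} e^{-Tg(i)}>0$. It follows that for any $\mathbb Q$ we have $\mathbb Q\ll\mathbb P$ if and only if $\mathbb Q\ll\widehat{\mathbb V}$. When $\mathbb Q\not\ll\mathbb P$ both sides of \eqref{eq:cost-entropy-identity} equal $+\infty$, because $\mathbb E_{\mathbb Q}[\sum_t g(X_t)]$ is finite on the finite space $\Omega$.

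\textbf{Step 2: the identity when $\mathbb Q\ll\mathbb P$.} On $\{\mathbb Q>0\}\subseteq\{\mathbb P>0\}$ we have
\[
\log\frac{\mathbb Q(x)}{\widehat{\mathbb V}(x)}
=\log\frac{\mathbb Q(x)}{\mathbb P(x)}+\sum_{t=0}^{T-1}g(x_t)+\log Z_T .
\]
I would integrate this against $\mathbb Q$. All sums are finite, so linearity applies without any further justification, and the result is $\operatorname{KL}(\mathbb Q\|\widehat{\mathbb V})=\operatorname{KL}(\mathbb Q\|\mathbb P)+\mathbb E_{\mathbb Q}[\sum_t g(X_t)]+\log Z_T$. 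Rearranging gives \eqref{eq:cost-entropy-identity}. This step uses neither the marginal constraints nor the sign of $g$.

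\textbf{Step 3: identifying the minimiser.} Since $-\log Z_T$ is a constant, independent of $\mathbb Q$, the objective in \eqref{cost-add} and $\operatorname{KL}(\cdot\|\widehat{\mathbb V})$ have the same minimisers over $\Lambda(\mu_0,\mu_T)$. By Step 1, the feasible law $\mathbb Q_0$ satisfies $\mathbb Q_0\ll\widehat{\mathbb V}$. This is exactly the hypothesis under which \eqref{dyn-br} defines a unique finite-KL minimiser with reference law $\widehat{\mathbb V}$. Existence and uniqueness follow as in the static case, from compactness of the feasible set and strict convexity of KL on its finite part. Hence $\mathbb Q^g$ is well defined and equals the dynamic bridge with reference $\widehat{\mathbb V}$.

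I expect no real obstacle here. The only care needed is the support and $\infty$ bookkeeping in Step 1, together with the fact that $\widehat{\mathbb V}$ is a probability measure, so that the normalisation produces the additive constant $-\log Z_T$ rather than changing the minimiser.
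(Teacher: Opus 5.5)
Your proposal is correct and follows the paper's proof essentially step by step. Both arguments use the same-support observation with $0<Z_T<\infty$ to settle the case $\mathbb Q\not\ll\mathbb P$ in the extended reals, integrate the pointwise log-ratio identity against $\mathbb Q$ to obtain \eqref{eq:cost-entropy-identity}, and then identify $\mathbb Q^g$ as the bridge with reference $\widehat{\mathbb V}$ via the constant shift $\log Z_T$, finite feasibility and strict convexity.
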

The proof is given in Section~\ref{sec-pr-additive}.

\paragraph{The weighted reference and its endpoint matrix.}
We absorb the running cost into the reference path law
by multiplying each path probability by the exponential
of its negative accumulated cost and then normalising.
This reduces the problem to the same endpoint-coupling
and conditional-path decomposition as in Lemma \ref{lem:dynamic-bridge-to-static-bridge}, with
a modified reference distribution.

Set
\be \label{W-def} 
D_g=\operatorname{diag}(e^{-g(i)})_{i\in\chi},
\qquad W=D_gP.
\ee
The matrix $W$ is a transfer matrix with row sums $e^{-g(i)}$, rather than a stochastic transition matrix. Its powers give the endpoint reference distribution:
\be\label{eq:cost-endpoint-reference}
\widehat{\mathbb V}_{0,T}(a,b)
=\frac{\mu_0(a)(W^T)(a,b)}{Z_T},
\qquad Z_T=\mu_0^\top W^T\mathbf1.
\ee
For endpoints with positive reference probability, write
$\widehat{\mathbb V}_{a,b}(\cdot)=\widehat{\mathbb V}(\cdot\mid X_0=a,X_T=b)$.

Recall that $\mathbb Q^g$ is the minimiser of problem \eqref{cost-add}. 
Its joint endpoint distribution is denoted by
$\mathbb Q^g_{0,T}$ and is defined by
\[
    \mathbb Q^g_{0,T}(a,b)
    :=\mathbb Q^g(X_0=a,X_T=b)
    =\sum_{\substack{x\in\Omega\\x_0=a,\;x_T=b}}
      \mathbb Q^g(x).
\]
By Lemma~\ref{lem:dynamic-bridge-to-static-bridge},
$\mathbb Q^g_{0,T}$ solves the static bridge problem
with reference $\widehat{\mathbb V}_{0,T}$.
Moreover, for every endpoint pair $(a,b)$ with
$\mathbb Q^g_{0,T}(a,b)>0$,
\[
    \mathbb Q^g(\,\cdot\mid X_0=a,X_T=b)
    =\widehat{\mathbb V}_{a,b}(\cdot).
\]
In other words, once the endpoints are fixed, the
optimal path law agrees with the conditional law
of the weighted reference.

To apply Algorithm~\ref{alg:quantum-static-bridge},
the weighted endpoint matrix $\widehat{\mathbb V}_{0,T}$
and the prescribed marginals must satisfy the assumptions
of Theorem~\ref{th:quantum-static-bridge}.
In particular, we require positivity, the quantitative
lower bounds in Assumption~\ref{assump-box}(i), and
the exact rational representation and coherent access
specified in Assumption~\ref{assump-box}(iv).

These requirements are not automatic for general
running costs. Even when $g$ is rational-valued,
the weights $e^{-g(i)}$ may be irrational.
We must therefore restrict the inputs to those satisfying
the theorem's assumptions, or separately analyse an
implementation using approximate entries.

The cost of constructing $\widehat{\mathbb V}_{0,T}$
and its oracle representation must also be accounted for.
Once the scaling vectors have been computed, an endpoint
pair can be sampled by the procedure in
Section~\ref{rem:static-output}, with the corresponding
additional cost.

The normalised path law $\widehat{\mathbb V}$ in
\eqref{v-hat} can be generated by a Markov chain,
but its transition probabilities generally are time dependent. Recall $W$ from \eqref{W-def}, and define
\[
    h_T(i)=1,
    \qquad
    h_t(i)=\sum_{j\in\chi}W(i,j)h_{t+1}(j),
    \qquad 0\le t<T.
\]
Here, $h_t(i)$ is the total weight of all continuations
from state $i$ at time $t$ to the final time $T$.
Equivalently,
\[
    h_t(i)
    =\mathbb E_P\!\left[
        \exp\!\left(-\sum_{s=t}^{T-1}g(X_s)\right)
        \,\middle|\,X_t=i
      \right],
\]
where the expectation refers to the reference chain $P$
started at $i$ at time $t$.

The corresponding initial distribution and transition
probabilities of a Markov chain with law $\widehat{\mathbb V}$  are given by 
\begin{equation}\label{eq:cost-reference-markov-representation}
    \nu_0(i)=\frac{\mu_0(i)h_0(i)}{Z_T},
    \qquad
    P_t^g(i,j)=\frac{W(i,j)h_{t+1}(j)}{h_t(i)}.
\end{equation}
For finite-valued $g$ the recursion ensures that
$\sum_jP_t^g(i,j)=1$.
The factors involving $h_t$ cancel along each path,
giving
\[
    \nu_0(x_0)\prod_{t=0}^{T-1}P_t^g(x_t,x_{t+1})
    =\frac{\mu_0(x_0)}{Z_T}
      \prod_{t=0}^{T-1}W(x_t,x_{t+1})
    =\widehat{\mathbb V}(x).
\]

Thus, incorporating the running cost changes both
the initial distribution and the transition probabilities:
in general, $\nu_0\ne\mu_0$.
Simply multiplying row $i$ of $P$ by $e^{-g(i)}$
does not produce a stochastic matrix.  

For dense input, computing all $h_t$ by the backward
recursion requires $O(TN^2)$ arithmetic operations.
A quantum state-preparation method based on
\eqref{eq:cost-reference-markov-representation}
must account for this preprocessing cost, as well
as the costs of preparing $\nu_0$ and coherently
implementing the transitions $P_t^g$.

\paragraph{The Gibbs sampler with fixed endpoints.}
We now turn to sampling the weighted reference law
conditional on fixed endpoints. Its single-coordinate
conditional distributions depend only on $P$, the running
cost, and the two neighbouring states, allowing us to
define local Gibbs updates without computing the weights $h_t$.

For the remaining sampling results, assume $N\ge2$, $T\ge2$, $P(i,j)\ge\omega_-/N>0$, for all $i,j\in \chi$ and $a\in\operatorname{supp}(\mu_0)$. Similarly to \eqref{eq:definition-Gibbs-sampler}
 we define the Gibbs kernel, 
\be\label{eq:cost-gibbs-kernel}
K^g(x,y)
=\frac1{T-1}\sum_{t=1}^{T-1}
\mathds1_{\{x_{-t}=y_{-t}\}}
q^g(y_t,x_{t-1},x_{t+1}),\qquad x,y\in\Omega_{a,b},
\ee
where
\be\label{eq:cost-gibbs-conditional}
q^g(z,\ell,r)
=\frac{P(\ell,z)P(z,r)e^{-g(z)}}
{\sum_{v\in\chi}P(\ell,v)P(v,r)e^{-g(v)}}.
\ee
Only the cost at the updated coordinate appears in this conditional probability; all other cost factors cancel.
The following Proposition is the analog of Proposition \ref{prop-rev} for DSB with costs. 

\begin{proposition}\label{prop-rev-g}
The kernel $K^g$ is irreducible and aperiodic on $\Omega_{a,b}$ and is reversible with respect to $\widehat{\mathbb V}_{a,b}$. Each update has a positive probability of leaving the
path unchanged and a positive probability of moving
to any path that differs at exactly one interior
coordinate. Transitions between paths differing at two or more interior coordinates are zero. Moreover, $K^g$ is positive semidefinite on $L^2(\widehat{\mathbb V}_{a,b})$.
\end{proposition}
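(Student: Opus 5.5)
We will follow the same route as for Proposition~\ref{prop-rev}. The first step is to identify the conditional law explicitly. For $x\in\Omega_{a,b}$,
\[
\widehat{\mathbb V}_{a,b}(x)\propto\prod_{t=0}^{T-1}P(x_t,x_{t+1})e^{-g(x_t)},
\]
since $\mu_0(a)>0$ and $e^{-g(a)}$ are constants on $\Omega_{a,b}$. Because $P>0$ entrywise and $g$ is finite, this law is strictly positive on all of $\Omega_{a,b}$, so $L^2(\widehat{\mathbb V}_{a,b})$ lives on the full path space. Fix an interior time $1\le t\le T-1$ and the other coordinates $x_{-t}$. The only factors depending on $x_t$ are $P(x_{t-1},x_t)$, $P(x_t,x_{t+1})$ and $e^{-g(x_t)}$. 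Hence the full conditional of coordinate $t$ is exactly $q^g(\cdot,x_{t-1},x_{t+1})$ in \eqref{eq:cost-gibbs-conditional}, and $K^g$ is the random-scan Gibbs kernel for $\widehat{\mathbb V}_{a,b}$.

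\textbf{Reversibility and positive semidefiniteness.} For each $t$, define the single-site kernel $K_t(x,y)=\mathds1_{\{x_{-t}=y_{-t}\}}q^g(y_t,x_{t-1},x_{t+1})$. It acts on functions as the conditional expectation $f\mapsto\mathbb E[f\mid X_{-t}]$ under $\widehat{\mathbb V}_{a,b}$. It is therefore an orthogonal projection in $L^2(\widehat{\mathbb V}_{a,b})$, since it is self-adjoint and idempotent. Self-adjointness is equivalent to detailed balance. One can also check detailed balance directly: $\widehat{\mathbb V}_{a,b}(x)q^g(y_t,\dots)$ equals the common weight of $x_{-t}$ times $q^g(x_t)q^g(y_t)$, which is symmetric in $x$ and $y$. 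Then $K^g=\frac1{T-1}\sum_tK_t$ is an average of orthogonal projections. It is consequently self-adjoint, i.e.\ reversible, and satisfies $\langle f,K^gf\rangle=\frac1{T-1}\sum_t\|K_tf\|^2\ge0$, which gives positive semidefiniteness.

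\textbf{Transition structure.} If $x$ and $y$ differ in two or more interior coordinates, every indicator $\mathds1_{\{x_{-t}=y_{-t}\}}$ vanishes, so $K^g(x,y)=0$. If they differ in exactly one coordinate $t$, then
\[
K^g(x,y)=\frac{1}{T-1}\,q^g(y_t,x_{t-1},x_{t+1}).
\]
This is positive because
\[
q^g(z,\ell,r)\ge\frac{(\omega_-/N)^2e^{-g(z)}}{\sum_vP(\ell,v)P(v,r)e^{-g(v)}}>0,
\]
where the denominator is positive and finite. Similarly, $K^g(x,x)\ge\frac1{T-1}\sum_tq^g(x_t,x_{t-1},x_{t+1})>0$. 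Laziness gives aperiodicity.

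\textbf{Irreducibility.} Any two paths in $\Omega_{a,b}$ share the endpoints $a,b$. One path can be transformed into the other by changing the interior coordinates one at a time, for instance left to right. Each step has positive probability, so the chain is irreducible.

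The main point requiring care is only the cancellation showing that the conditional depends just on $g$ at the updated site. This follows because the factor $e^{-g(x_{T})}$ is absent and the factor $e^{-g(a)}$ is constant. The rest is routine, and no obstacle beyond bookkeeping is expected.
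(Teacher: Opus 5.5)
Your proposal is correct and follows essentially the same route as the paper. Both arguments identify the single-site full conditional as $q^g$, since only $P(x_{t-1},x_t)P(x_t,x_{t+1})e^{-g(x_t)}$ varies, and both use strict positivity of $P$ with finiteness of $g$ for the support claims, laziness and coordinate-by-coordinate connectivity for aperiodicity and irreducibility, and the decomposition of $K^g$ as an average of conditional-expectation projections to get self-adjointness and $\langle f,K^gf\rangle=\frac1{T-1}\sum_t\|E_tf\|^2\ge0$. The only difference is that you also check detailed balance directly, which the paper leaves implicit in the projection argument.
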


Define
\be \label{del-g} 
\overline g=\max_{i\in\chi}g(i),\qquad
\underline g=\min_{i\in\chi}g(i),\qquad
\Delta_g=\overline g-\underline g.
\ee
As in Theorem \ref{cor:upper-bound-spectral-gap} we can derive a lower bound on the spectral gap of $K^g$. 
\begin{theorem}\label{cor:upper-bound-spectral-gap-g}
The absolute spectral gap $\delta_g$ of $K^g$ satisfies
\be\label{eq:cost-gibbs-gap}
\delta_g\ge\frac{\omega_-^4}{T^2N^3}e^{-\Delta_g}.
\ee
\end{theorem}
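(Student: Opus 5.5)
Since $K^g$ is positive semidefinite (Proposition~\ref{prop-rev-g}), its absolute spectral gap equals $1-\lambda_2$. I would therefore lower-bound $1-\lambda_2$ with the canonical-paths bound of \cite[Section~13.4]{levin2017markov},
\[
1-\lambda_2\ \ge\ \frac1{\mathcal B},\qquad
\mathcal B=\max_{e=(z,w)}\frac{1}{Q^g(e)}\sum_{(x,y):\,e\in\gamma_{xy}}\widehat{\mathbb V}_{a,b}(x)\widehat{\mathbb V}_{a,b}(y)\,|\gamma_{xy}|,
\]
where $Q^g(z,w)=\widehat{\mathbb V}_{a,b}(z)K^g(z,w)$. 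For the paths $\gamma_{xy}$ I would reuse the ones from the proof of Theorem~\ref{cor:upper-bound-spectral-gap}: sweep left to right, replacing $x_t$ by $y_t$ for $t=1,\dots,T-1$ and skipping coordinates that already agree. Each path has length at most $T-1$, and every edge changes a single coordinate, so these are admissible transitions of $K^g$ by Proposition~\ref{prop-rev-g}. The goal is to show that every ratio appearing in the unweighted congestion estimate picks up at most a factor $e^{\Delta_g}$, which gives $\mathcal B_g\le e^{\Delta_g}\mathcal B\le e^{\Delta_g}T^2N^3/\omega_-^4$.

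The key steps are these. First, write $\widehat{\mathbb V}_{a,b}(x)=\mathbb P_{a,b}(x)e^{-G(x)}/c_{a,b}$ with $G(x)=\sum_{t=1}^{T-1}g(x_t)$; the endpoint cost terms are constant on $\Omega_{a,b}$ and are absorbed into $c_{a,b}$. Second, consider an edge $e=(z,w)$ at step $t$ of the sweep, so that $z=(y_1,\dots,y_{t-1},x_t,x_{t+1},\dots)$ and $w$ is $z$ with $x_t$ replaced by $y_t$. The encoding map $(x,y)\mapsto(\text{other path},\,e)$ gives the usual injectivity: $x$ and $y$ are recovered from $e$ together with the unknown coordinates $(x_1,\dots,x_{t-1})$ and $(y_{t+1},\dots,y_{T-1})$. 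Third, compare $\widehat{\mathbb V}(x)\widehat{\mathbb V}(y)$ with $\widehat{\mathbb V}(z)\widehat{\mathbb V}(\tilde w)$, where $\tilde w$ is the complementary path $(x_1,\dots,x_{t-1},y_t,y_{t+1},\dots)$. The cost multisets satisfy $\{x_s\}\cup\{y_s\}=\{z_s\}\cup\{\tilde w_s\}$ exactly, so $e^{-G(x)-G(y)}=e^{-G(z)-G(\tilde w)}$ and the cost factors cancel. The reference parts contribute only the boundary transition factors around time $t$, exactly as in the unweighted proof.

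Fourth, bound $K^g(z,w)$ from below. We have
\[
K^g(z,w)=\frac{1}{T-1}\,q^g(y_t,z_{t-1},z_{t+1}),
\]
and
\[
q^g(y_t,\ell,r)\ \ge\ \frac{(\omega_-/N)^2e^{-\overline g}}{\sum_v P(\ell,v)P(v,r)e^{-\underline g}}\ \ge\ e^{-\Delta_g}\,c_t(y_t\mid\cdot),
\]
after comparison with the unweighted conditional, or directly $\ge e^{-\Delta_g}\omega_-^2/N^2$. Substituting these bounds into the unweighted computation gives the single extra factor $e^{\Delta_g}$.

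The main obstacle is the third step. The unweighted proof may bound the ratio $\mathbb P(x)\mathbb P(y)/(\mathbb P(z)\mathbb P(\tilde w))$ by a crude $(N/\omega_-)^2$ and then sum $\mathbb P(\tilde w)$ over the free coordinates. I need to check that this summation still sums a probability to at most $1$ when the weights are $\widehat{\mathbb V}$. Because the cost factors cancel exactly, the sum becomes $\sum\widehat{\mathbb V}_{a,b}(\tilde w)\le1$, so it goes through provided I keep the weighting inside the normalised measure rather than bounding $e^{-G}$ pathwise. Bounding $e^{-G}$ pathwise would cost $e^{T\Delta_g}$. Ensuring that the only non-cancelling cost is the single updated coordinate in $K^g$ is precisely what yields $e^{-\Delta_g}$ rather than $e^{-T\Delta_g}$.
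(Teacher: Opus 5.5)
\textbf{Verdict: the idea is right, but as written the argument only gives $N^4$, not $N^3$.} Your structural ideas are sound. You reuse the same left-to-right canonical paths and the injection $(x,y)\mapsto\tilde w$ for a fixed edge. You also use the exact identity $G(x)+G(y)=G(z)+G(\tilde w)$, which gives $\widehat{\mathbb V}_{a,b}(x)\widehat{\mathbb V}_{a,b}(y)=\widehat{\mathbb V}_{a,b}(z)\widehat{\mathbb V}_{a,b}(\tilde w)\cdot\frac{P(m,i)P(\ell,j)}{P(\ell,i)P(m,j)}$ with $\ell=z_{t-1}$, $i=z_t$, $j=w_t$, $m=x_{t-1}$.

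The step that fails is ``substituting these bounds into the unweighted computation gives the single extra factor $e^{\Delta_g}$''. There are two problems with it.
\begin{itemize}
\item[(a)] The paper's bound $\rho\le T^2N^3/\omega_-^4$ (your $\mathcal B$) is not of the form ``maximal boundary ratio times $\sum\mathbb P_{a,b}(\tilde w)\le1$''. It evaluates the prefix and suffix sums exactly under $\mathbb P_{a,b}$, via $L_t,R_t$ and Chapman--Kolmogorov. Those exact marginals change under $\widehat{\mathbb V}_{a,b}$, so there is no edgewise inequality $\rho_g(e)\le e^{\Delta_g}\rho(e)$ to inherit.
\item[(b)] The route you actually describe bounds the boundary ratio by $(N/\omega_-)^2$ and $K^g(z,w)$ below by $e^{-\Delta_g}\omega_-^2/((T-1)N^2)$, separately. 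Together with $|\gamma_{xy}|\le T-1$ and $\sum\widehat{\mathbb V}_{a,b}(\tilde w)\le1$, this yields only $\rho_g\le (T-1)^2N^4e^{\Delta_g}/\omega_-^4$, which is a factor $N$ too weak.
\end{itemize}

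\textbf{The fix.} Bound the boundary ratio and $1/K^g(z,w)$ jointly, because they share the factor $P(\ell,j)$; the paper's unweighted proof also cancels this factor against the Gibbs conditional. Write $r=z_{t+1}$ and $C_g(\ell,r)=\sum_vP(\ell,v)P(v,r)e^{-g(v)}\le e^{-\underline g}$. Then
\[
\frac{P(m,i)P(\ell,j)}{P(\ell,i)P(m,j)}\cdot\frac{1}{K^g(z,w)}
=(T-1)\,\frac{P(m,i)\,C_g(\ell,r)}{P(\ell,i)\,P(m,j)\,P(j,r)\,e^{-g(j)}}
\le(T-1)\,\frac{N^3}{\omega_-^3}\,e^{\Delta_g},
\]
so $\rho_g\le (T-1)^2N^3e^{\Delta_g}/\omega_-^3\le T^2N^3e^{\Delta_g}/\omega_-^4$.

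\textbf{Comparison with the paper.} With this fix your argument is correct and genuinely different from the paper's. The paper attaches the cost to the arrival state via $R(i,j)=P(i,j)e^{-g(j)}$ and computes the congestion exactly through $F_v=R^t(a,v)$ and $H_v=R^{T-t}(v,b)$. It then obtains both the factor $N$ and the $e^{\Delta_g}$ from the weighted-ratio inequality applied to $Z_{a,b}=\sum_zF_zH_z$. Your injection avoids the transfer-matrix computation entirely and even improves $\omega_-^4$ to $\omega_-^3$.

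\textbf{Minor slip.} In your displayed chain for $q^g$, the second inequality points the wrong way, since $e^{-\Delta_g}(\omega_-/N)^2/\sum_vP(\ell,v)P(v,r)\le e^{-\Delta_g}c_t(y_t\mid\cdot)$. Both end claims, $q^g\ge e^{-\Delta_g}c_t(y_t\mid\cdot)$ and $q^g\ge e^{-\Delta_g}\omega_-^2/N^2$, are true, but each should be derived directly.
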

The proofs of Proposition~\ref{prop-rev-g} and Theorem~\ref{cor:upper-bound-spectral-gap-g} are given in Section~\ref{sec-pr-additive}. 

We now apply the quantum sampling construction of
Section~\ref{sec-q-gibbs} to the weighted bridge law
$\widehat{\mathbb V}_{a,b}$. The spectral-gap bound
controls one part of the cost. The other depends on
the overlap between the prepared initial state and
the target state.

Write the target coherent state as
\[
    |\widehat{\mathbb V}_{a,b}\rangle
    :=\sum_{x\in\Omega_{a,b}}
      \sqrt{\widehat{\mathbb V}_{a,b}(x)}\,|x\rangle.
\]
Assume that a unitary $A_g$ and its inverse are
available, with
\be \label{gfd} 
    A_g|0\rangle=|r_g\rangle
    :=\sum_{x\in\Omega_{a,b}}\sqrt{r_g(x)}\,|x\rangle,
\ee
where $r_g$ is the chosen initial probability
distribution on $\Omega_{a,b}$ and any auxiliary
registers are returned to zero.
Suppose that a lower bound $c_g\in(0,1]$ is known
such that
\be \label{g-over} 
    \bigl|\langle\widehat{\mathbb V}_{a,b}|r_g\rangle\bigr|
    \ge c_g.
\ee

We also assume coherent row preparation for $K^g$,
its inverse, and the controlled operations used in
Section~\ref{sec-q-gibbs}. Let $\delta_{g,0}>0$ be a known lower bound on the
absolute spectral gap $\delta_g$ of the Gibbs kernel
$K^g$ defined in \eqref{eq:cost-gibbs-kernel}.

\begin{theorem}[Quantum sampling with additive costs]
\label{th:complexity-gibbs-sampler-g}
Under the preparation and access assumptions above,
for every $\varepsilon\in(0,1)$ there is a quantum
sampler that signals success with probability bounded
below by a universal constant. Conditional on success,
its path register has measurement distribution $\rho$
satisfying
\[
    d_{\mathrm{TV}}(\rho,\widehat{\mathbb V}_{a,b})
    \le\varepsilon.
\]
The sampler uses
\[
    O\!\left(
        \frac1{c_g\sqrt{\delta_{g,0}}}
        \log\frac2{c_g\varepsilon}
    \right)
\]
calls to the quantum-walk primitives for $K^g$ and
$O(1/c_g)$ calls to $A_g$ and its inverse.
\end{theorem}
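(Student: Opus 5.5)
The proof follows the unweighted argument behind Theorem~\ref{th:complexity-gibbs-sampler}, with $K$ replaced by $K^g$, $\pi$ by $\widehat{\mathbb V}_{a,b}$, and $A_{a,b}$ by $A_g$.

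\textbf{Spectral structure.} Let $\square_g|x\rangle=\sum_y\sqrt{K^g(x,y)}|x,y\rangle$ and $\mathcal W_g=(2\square_g\square_g^*-I)S$. By Proposition~\ref{prop-rev-g}, $K^g$ is reversible, irreducible, aperiodic and positive semidefinite with respect to the strictly positive law $\widehat{\mathbb V}_{a,b}$. The argument of Proposition~\ref{cor:SPUE-spectral} therefore applies unchanged:
\begin{itemize}
\item $\square_g|\widehat{\mathbb V}_{a,b}\rangle$ is the unique $1$-eigenvector of $\mathcal W_g$ on the invariant subspace $\mathcal L_g$;
\item the phase gap there is at least $\sqrt{2\delta_{g,0}}$;
\item the projector $\Pi_g$ onto the $1$-eigenspace acts on $\mathcal L_g$ as the rank-one projector onto $\square_g|\widehat{\mathbb V}_{a,b}\rangle$.
\end{itemize}
Since $\square_g|r_g\rangle\in\mathcal L_g$, we get $\Pi_g\square_g|r_g\rangle=\langle\widehat{\mathbb V}_{a,b}|r_g\rangle\,\square_g|\widehat{\mathbb V}_{a,b}\rangle$. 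This component has norm at least $c_g$ by \eqref{g-over}.

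\textbf{Algorithm.} Take $\Delta_0=\sqrt{2\delta_{g,0}}$ and a precision $\eta$ fixed below. Proposition~\ref{prop-poly} gives a unitary $V$ using $O(\delta_{g,0}^{-1/2}\log(2/\eta))$ controlled walk steps. Define $\mathcal A=V(I\otimes\square_g\text{-prep})(I\otimes A_g)$, i.e. apply $A_g$, then $O_{K^g}$ to create $\square_g|r_g\rangle$, then $V$. Take the marked subspace to be the ancilla in state $|0\rangle$. Then $\|\Pi_*\mathcal A|0\rangle\|=\|\Gamma(\mathcal W_g)\square_g|r_g\rangle\|\ge c_g-\eta\ge c_g/2$ once $\eta\le c_g/2$.

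Proposition~\ref{prop:amplitude-amplification} and Remark~\ref{rem-grov} then give a heralded success probability at least $32/(25\pi^2)$. This uses $O(1/c_g)$ calls to $\mathcal A$ and $\mathcal A^*$, hence $O(1/c_g)$ calls to $A_g^{\pm1}$, and $O\bigl(c_g^{-1}\delta_{g,0}^{-1/2}\log(2/\eta)\bigr)$ walk-primitive calls. After success, the path register is measured. The second register of $\square_g|x\rangle$ also holds a path, so to be safe one measures the first register, whose marginal under $\square_g|\phi\rangle$ is $|\phi(x)|^2$ because $\square_g$ is isometric and diagonal in its first register.

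\textbf{Error analysis.} This is the delicate step, since the success-conditioned state must be shown close to the target. Write $u=\Gamma\square_g|r_g\rangle$ and $v=\Pi_g\square_g|r_g\rangle$. Then $\|u-v\|\le\eta$ and $\|v\|\ge c_g$. For unnormalised vectors,
\[
\left\|\frac{u}{\|u\|}-\frac{v}{\|v\|}\right\|\le\frac{2\|u-v\|}{\|v\|}\le\frac{2\eta}{c_g}.
\]
Amplitude amplification acts within the two-dimensional span of the marked and unmarked components. Hence the normalised marked state after success equals $u/\|u\|$ up to a phase, and this holds for any iteration count $n$.

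Measurement is a contraction from pure states to distributions: the total variation distance is at most the trace distance, which is at most the vector distance. So $d_{\mathrm{TV}}(\rho,\widehat{\mathbb V}_{a,b})\le 2\eta/c_g$. Choosing $\eta=c_g\varepsilon/2$ gives accuracy $\varepsilon$, and $\log(2/\eta)=\log\bigl(4/(c_g\varepsilon)\bigr)=O\bigl(\log(2/(c_g\varepsilon))\bigr)$. This yields the stated walk-call bound.
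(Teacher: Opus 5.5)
Your proposal is correct and follows essentially the same route as the paper. You transfer Proposition~\ref{cor:SPUE-spectral} to $(K^g,\widehat{\mathbb V}_{a,b})$ via Proposition~\ref{prop-rev-g}, filter the walk-encoded initial state with the polynomial of Proposition~\ref{prop-poly}, amplify the success flag using the lower bound $c_g-\eta$ on its amplitude, and convert the normalised-vector error into a total variation bound. The differences are cosmetic: you measure the first register of $\square_g|\cdot\rangle$ instead of uncomputing the row preparation, and you take $\eta=c_g\varepsilon/2$ rather than $c_g\varepsilon/4$. One point you leave implicit is that $\mathcal W_g=-S$ on $\mathcal L_g^\perp$, so the phase-gap hypothesis of Proposition~\ref{prop-poly} holds on the whole space; the paper notes this explicitly, and it is harmless here because your input lies in the invariant subspace $\mathcal L_g$.
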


The proof is given in Section~\ref{sec-pr-additive}.

Substituting the explicit gap bound
\[
    \delta_{g,0}
    =\frac{\omega_-^4e^{-\Delta_g}}{T^2N^3}
\]
from Theorem~\ref{cor:upper-bound-spectral-gap-g}
into the complexity bound of
Theorem~\ref{th:complexity-gibbs-sampler-g} gives
\begin{equation}\label{comp-gibbs-g}
    O\!\left(
        \frac{TN^{3/2}e^{\Delta_g/2}}{c_g\omega_-^2}
        \log\frac2{c_g\varepsilon}
    \right)
\end{equation}
calls to the quantum-walk primitives.

This bound counts oracle calls. To obtain the total
running time, one must also account for the construction
and execution costs of the circuits that prepare the
initial state and implement coherent Gibbs updates
according to $K^g$.

The complexity bound \eqref{comp-gibbs-g} depends on
the initial overlap $c_g$, as well as on the cost range
$\Delta_g$ in \eqref{del-g}. Although a bound on $\Delta_g$ controls
the spectral gap, it does not ensure that an initial
state prepared without accounting for the running cost
has a sufficiently large overlap with the target.

For example, take $N=2$, $P(i,j)=1/2$,
$g(1)=0$, and $g(2)=\log9$.
For any fixed endpoints, the ordinary bridge has
independent, uniformly distributed interior states.
Under the weighted bridge law, these states remain
independent but have probabilities $(9/10,1/10)$.

For a single interior state, the ordinary and weighted bridge
marginals are respectively $(1/2,1/2)$ and $(9/10,1/10)$.
With the encoding $1\mapsto|0\rangle$ and $2\mapsto|1\rangle$, the corresponding coherent states are
\[
    |u\rangle=\frac{1}{\sqrt{2}}|0\rangle
              +\frac{1}{\sqrt{2}}|1\rangle,
    \qquad
    |w\rangle=\frac{3}{\sqrt{10}}|0\rangle
              +\frac{1}{\sqrt{10}}|1\rangle.
\]
Since the basis states are orthonormal, their inner product is
\[
    \langle u|w\rangle
    =\sqrt{\frac12\frac9{10}}
     +\sqrt{\frac12\frac1{10}}
    =\frac{2}{\sqrt{5}}.
\]
In this example, the $T-1$ interior states are independent under
each bridge law. Consequently, for fixed endpoints $a$ and $b$,
the full coherent encodings factorise as
\[
    |U_{a,b}\rangle
    =|a\rangle\otimes|u\rangle^{\otimes(T-1)}\otimes|b\rangle,
    \qquad
    |W_{a,b}\rangle
    =|a\rangle\otimes|w\rangle^{\otimes(T-1)}\otimes|b\rangle.
\]
Inner products of tensor products multiply, and the identical
endpoint states contribute a factor of one. Thus,
\[
    \langle U_{a,b}|W_{a,b}\rangle
    =\bigl(\langle u|w\rangle\bigr)^{T-1}
    =\left(\frac{2}{\sqrt{5}}\right)^{T-1}.
\]
Hence the overlap decays exponentially in the number of interior
states.  Thus, even though $\Delta_g=\log9$ is independent of
$T$, this overlap decreases exponentially with $T$.
Using the ordinary bridge as the initial state
therefore introduces an exponentially growing
$1/c_g$ factor into \eqref{comp-gibbs-g}.

This example does not exclude a better initialisation
that accounts for the running cost. It shows that
such an initialisation, its overlap with the target,
and its preparation cost require a separate analysis.

\begin{remark}[Classical comparison and initialisation]
\label{rem:cost-classical-comparison}
If the classical Gibbs sampler starts from a fixed path, the
spectral-gap mixing bound shows that
\[
O\!\left(
\frac{T^2N^3e^{\Delta_g}}{\omega_-^4}
\left[
T\log\frac{N}{\omega_-}
+(T-1)\Delta_g
+\log\frac{2}{\varepsilon}
\right]
\right)
\]
updates of $K^g$ suffice to reach total variation error at most
$\varepsilon$. This follows from the lower bound
\[
\min_x\widehat{\mathbb V}_{a,b}(x)
\ge
\left(\frac{\omega_-}{N}\right)^T
e^{-(T-1)\Delta_g},
\]
proved in Section~\ref{sec-pr-additive}. A better initial distribution may reduce the number of updates.
\end{remark}

\section{Proof of Lemma \ref{lem:dynamic-bridge-to-static-bridge}} \label{proof-d-s}
\begin{proof}[Proof of Lemma \ref{lem:dynamic-bridge-to-static-bridge}]
We first separate the endpoint distribution from the conditional
distribution of the remaining path. The finite-feasibility assumption
ensures that both static and dynamic optimisation problems have a finite minimiser. 
Indeed the feasible sets are compact, and relative entropy is lower
semicontinuous and strictly convex on the support of the reference law.

Let $\mathbb Q\in\Lambda(\mu_0,\mu_T)$ satisfy
$\mathbb Q\ll\mathbb P$, and denote its endpoint law by
$\mathbb Q_{0,T}$ (see \eqref{ref-stat}).  For an endpoint pair $(a,b)$ with
$\mathbb Q_{0,T}(a,b)>0$, absolute continuity implies
$\mathbb P_{0,T}(a,b)>0$, so both conditional laws on the starting and ending points below are defined.  The
chain rule for relative entropy gives
\be \label{kl-comp}
\begin{aligned}
\operatorname{KL}(\mathbb Q\,\|\,\mathbb P)
&=\operatorname{KL}(\mathbb Q_{0,T}\,\|\,\mathbb P_{0,T})\\
&\quad+\sum_{\substack{a,b\in\chi:\\
                 \mathbb Q_{0,T}(a,b)>0}}
\mathbb Q_{0,T}(a,b)
\operatorname{KL}\!\left(
\mathbb Q(\,\cdot\mid X_0=a,X_T=b)
\,\middle\|\,
\mathbb P(\,\cdot\mid X_0=a,X_T=b)
\right).
\end{aligned}
\ee
Each of the conditional KL terms in \eqref{kl-comp} is nonnegative and equals zero
exactly when the two conditional laws agree.  Conversely, any endpoint
coupling $q\in\Pi(\mu_0,\mu_T)$ with $q\ll\mathbb P_{0,T}$ can be lifted to
the path law as follows: 
\be \label{bb1}
\mathbb Q^q(x)=q(x_0,x_T)\mathbb P(x\mid X_0=x_0,X_T=x_T),
\ee
with the product interpreted as zero on endpoint pairs outside the support of
$\mathbb P_{0,T}$.  From \eqref{kl-comp} and \eqref{bb1} it follows that minimising the dynamic KL divergence in \eqref{dyn-br}
 is
equivalent to minimising its endpoint term.  Hence the endpoint law of
$\mathbb Q^*$ is the static bridge $\mathbb Q^*_{0,T}$ with
marginals $(\mu_0,\mu_T)$ and reference law $\mathbb P_{0,T}$. For every
endpoint pair sampled by that law we have 
\[
\mathbb Q^*(\,\cdot\mid X_0=a,X_T=b)
=\mathbb P(\,\cdot\mid X_0=a,X_T=b).
\]
This proves (i). Under
$\mathbb P_{0,T}\sim\mu_0\otimes\mu_T$,
Proposition~\ref{prop:static-potentials} gives the factorisation in
 (ii) on the product of the marginal supports. To interpret
the formula for all endpoint pairs, extend each potential by $1$
outside its marginal support; the reference endpoint mass is zero
there. Multiplying the endpoint factorisation by the reference
conditional law proves (iii).
\end{proof}
 
\section{Proofs of Section \ref{sec:setup-gibbs-sampler}} \label{sec-pf-gibbs}

\begin{proof}[Proof of Proposition \ref{prop-rev}]
First note that \eqref{eq:definition-Gibbs-sampler} is an average of
probability kernels and is therefore stochastic. For irreducibility, let $x,y\in\Omega_{a,b}$ and define, for
$0\le m\le T-1$,
\[
z^{(m)}=(a,y_1,\dots,y_m,x_{m+1},\dots,x_{T-1},b).
\]
Then $z^{(0)}=x$, $z^{(T-1)}=y$, and each consecutive pair differs only at
coordinate $m$.  Strict positivity of $P$ and
\eqref{eq:conditionned-interior-proba-P} give
$K(z^{(m-1)},z^{(m)})>0$ for every $1\le m\le T-1$.  Thus
$K^{T-1}(x,y)>0$.  From the assumption $P(i,j)>0$ for all $i,j\in\chi$ it follows that every conditional probability in \eqref{eq:conditionned-interior-proba-P} assigns positive probability
to the current coordinate, so $K(x,x)>0$ for every $x$; hence the chain is
aperiodic.  This proves (i).

For (ii), the detailed balance condition \eqref{dbc} is immediate when $x=y$.
If $x$ and $y$ differ at more than one coordinate, both transition
probabilities are zero. Otherwise they differ at exactly one
coordinate $t$. If this is the case, write $x_{-t}$ for all coordinates of $x$ except $x_t$,
and let
\[
\pi_{-t}(x_{-t})
=\mathbb P_{a,b}(X_{-t}=x_{-t})
\]
be the probability of these shared coordinates.
By \eqref{eq:conditionned-interior-proba-P} and the definition of conditional probability,
\[
\mathbb P_{a,b}(x)
=\pi_{-t}(x_{-t})c_t(x_t\mid x_{-t}).
\]
To move from $x$ to $y$, the Gibbs sampler must select
coordinate $t$, with probability $1/(T-1)$, and then
draw the value $y_t$. Thus by \eqref{eq:definition-Gibbs-sampler}, 
\[
K(x,y)=\frac{1}{T-1}c_t(y_t\mid x_{-t}).
\]
Consequently,
\[
\mathbb P_{a,b}(x)K(x,y)
=\frac{\pi_{-t}(x_{-t})}{T-1}
c_t(x_t\mid x_{-t})c_t(y_t\mid x_{-t}).
\]
Since $x_{-t}=y_{-t}$, exchanging $x$ and $y$ simply
exchanges the two conditional-probability factors.
Hence
\[
\mathbb P_{a,b}(x)K(x,y)
=\mathbb P_{a,b}(y)K(y,x).
\]
Together with the cases considered above, this proves
detailed balance and therefore reversibility with
respect to $\mathbb P_{a,b}$.

(iii) For each $t\in\{1,\ldots,T-1\}$, define the
Markov kernel $K_t$ on $\Omega_{a,b}$ by
\[
K_t(x,y)
=\mathds1_{\{x_{-t}=y_{-t}\}}\,
c_t(y_t\mid x_{-t}),
\qquad x,y\in\Omega_{a,b}.
\]
This kernel resamples coordinate $t$ from its conditional
distribution, leaving all other coordinates unchanged.
The Gibbs kernel in \eqref{eq:definition-Gibbs-sampler}
is therefore
\begin{equation}\label{k-pf}
K=\frac1{T-1}\sum_{t=1}^{T-1}K_t.
\end{equation}
For any function $f:\Omega_{a,b}\to\mathbb C$,
\[
(K_tf)(x)
=\sum_{y\in\Omega_{a,b}}K_t(x,y)f(y)
=\mathbb E_{\mathbb P_{a,b}}
\bigl[f(X)\mid X_{-t}=x_{-t}\bigr].
\]
Since $K_tf$ depends only on $X_{-t}$, averaging it
again over coordinate $t$ leaves it unchanged.
Thus $K_t^2f=K_tf$, and
\[
\mathbb E_{\mathbb P_{a,b}}[f-K_tf\mid X_{-t}]=0.
\]
The tower property, together with the fact that $K_tf$
can be taken outside the conditional expectation, gives
\[
\begin{aligned}
\langle f-K_tf,K_tf\rangle_{\mathbb P_{a,b}}
&=\mathbb E_{\mathbb P_{a,b}}\!\left[
    \overline{f-K_tf}\,K_tf\right]\\
&=\mathbb E_{\mathbb P_{a,b}}\!\left[
    \overline{
      \mathbb E_{\mathbb P_{a,b}}[f-K_tf\mid X_{-t}]
    }\,K_tf\right]\\
&=0.
\end{aligned}
\]
Consequently,
\[
\langle f,K_tf\rangle_{\mathbb P_{a,b}}
=\langle K_tf,K_tf\rangle_{\mathbb P_{a,b}}
=\|K_tf\|_{L^2(\mathbb P_{a,b})}^2\ge0.
\]
Combining this identity with \eqref{k-pf} yields
\begin{equation}\label{k-l-bnd}
\langle f,Kf\rangle_{\mathbb P_{a,b}}
=\frac1{T-1}\sum_{t=1}^{T-1}
\|K_tf\|_{L^2(\mathbb P_{a,b})}^2
\ge0.
\end{equation}
Therefore $K$ is positive semidefinite.
\end{proof}

We next bound the spectral gap by comparing changes in a function
along single-coordinate updates. Write $\pi=\mathbb P_{a,b}$.
For $f:\Omega_{a,b}\to\mathbb C$, the Dirichlet form and variance are
\[
\mathcal E(f)=\langle f,(I-K)f\rangle_\pi
=\frac12\sum_{x,y\in\Omega_{a,b}}|f(x)-f(y)|^2\pi(x)K(x,y)
\]
and
\[
\operatorname{Var}_\pi(f)
=\frac12\sum_{x,y\in\Omega_{a,b}}|f(x)-f(y)|^2\pi(x)\pi(y).
\]
Let $\lambda_2$ be the second largest eigenvalue of $K$ (see \eqref{eigen-ord}). Since $K$ is reversible and irreducible, the variational characterisation of the spectral gap \cite[Lemma~13.7 and Remark~13.8]{levin2017markov} gives
\[
1-\lambda_2=\min_{f\ \mathrm{nonconstant}}
\frac{\mathcal E(f)}{\operatorname{Var}_\pi(f)}.
\]
Here minimisation is done over nonconstant functions. 

We now specify a canonical path between every ordered pair
$u,v\in\Omega_{a,b}$. For $0\leq i\leq T-1$, set
\be \label{lead-p}
w_i^{u\to v}=(a,v_1,\dots,v_i,u_{i+1},\dots,u_{T-1},b).
\ee

Thus $w_0^{u\to v}=u$ and $w_{T-1}^{u\to v}=v$. Omit any step
whose two consecutive paths coincide, and define the collection of
directed edges
\be \label{gam}
\Gamma_{u,v}=
\left\{(w_{i-1}^{u\to v},w_i^{u\to v}):
1\leq i\leq T-1,\ w_{i-1}^{u\to v}\neq w_i^{u\to v}\right\}.
\ee
Every edge in this path changes one coordinate and has positive transition
probability, and $|\Gamma_{u,v}|\leq T-1$.

Let
\[
\mathcal E_K=
\left\{(x,y)\in\Omega_{a,b}^2:x\neq y,\ K(x,y)>0\right\}.
\]
By construction, $\Gamma_{u,v}\subseteq\mathcal E_K$ for every
$u,v\in\Omega_{a,b}$.
For $(x,y)\in\mathcal E_K$, define its canonical-path congestion by
\be \label{rhoxy}
\rho(x,y)=\frac{1}{\pi(x)K(x,y)}
\sum_{u,v:\,(x,y)\in\Gamma_{u,v}}
\pi(u)\pi(v)|\Gamma_{u,v}|,
\ee
where the sum is over $u,v\in\Omega_{a,b}$, and set
\be \label{rho-def}
\rho=\max_{(x,y)\in\mathcal E_K}\rho(x,y).
\ee
Diagonal pairs are excluded because steps that leave the state unchanged
have been omitted. For every $(x,y)\in\mathcal E_K$, we have
$\pi(x)>0$ and $K(x,y)>0$, so the denominator is strictly positive.

The following proposition is the canonical-path inequality; see Corollary~13.21 of
\cite{levin2017markov}.
\begin{proposition}\label{prop:spectral-gap-notabsolute-to-absolute}
Under the same assumptions as Proposition \ref{prop-rev}, let $\lambda_2$ be the second-largest eigenvalue of $K$.  Then
\[
\frac1{1-\lambda_2}\le\rho.
\]
\end{proposition}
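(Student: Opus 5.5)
This is the standard canonical-path comparison of the Dirichlet form with the variance (Levin--Peres, Corollary~13.21). I would prove it directly from the variational characterisation of $1-\lambda_2$ recorded just above. Since $1-\lambda_2$ is a minimum over nonconstant $f$, it suffices to show
\[
\operatorname{Var}_\pi(f)\le\rho\,\mathcal E(f)
\qquad\text{for every } f:\Omega_{a,b}\to\mathbb C.
\]
Proposition~\ref{prop-rev} makes the two ingredients of that characterisation available: reversibility of $K$ and strict positivity of $\pi$.

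\textbf{Step 1: telescope along canonical paths.} Fix $u,v\in\Omega_{a,b}$. The edges of $\Gamma_{u,v}$ in \eqref{gam} form a chain from $u$ to $v$, since the omitted steps are exactly those with $w_{i-1}^{u\to v}=w_i^{u\to v}$. Hence
\[
f(v)-f(u)=\sum_{(x,y)\in\Gamma_{u,v}}\bigl(f(y)-f(x)\bigr).
\]
Cauchy--Schwarz then gives
\[
|f(v)-f(u)|^2\le|\Gamma_{u,v}|\sum_{(x,y)\in\Gamma_{u,v}}|f(y)-f(x)|^2 .
\]
If $u=v$ the path is empty and both sides vanish.

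\textbf{Step 2: weight and exchange the order of summation.} Multiply by $\tfrac12\pi(u)\pi(v)$ and sum over all pairs $(u,v)$. Every edge of every canonical path lies in $\mathcal E_K$, so exchanging the order of summation gives
\[
\operatorname{Var}_\pi(f)\le\frac12\sum_{(x,y)\in\mathcal E_K}|f(y)-f(x)|^2\sum_{u,v:\,(x,y)\in\Gamma_{u,v}}\pi(u)\pi(v)|\Gamma_{u,v}|.
\]
By \eqref{rhoxy}, the inner sum equals $\rho(x,y)\pi(x)K(x,y)$, and $\rho(x,y)\le\rho$ by \eqref{rho-def}. Therefore
\[
\operatorname{Var}_\pi(f)\le\rho\cdot\frac12\sum_{(x,y)\in\mathcal E_K}|f(x)-f(y)|^2\pi(x)K(x,y).
\]

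\textbf{Step 3: compare with the Dirichlet form.} The right-hand sum is at most the full Dirichlet-form sum $\mathcal E(f)$. The pairs missing from $\mathcal E_K$ are either diagonal, where the summand $|f(x)-f(x)|^2$ vanishes, or have $K(x,y)=0$, where the summand also vanishes. So $\operatorname{Var}_\pi(f)\le\rho\,\mathcal E(f)$, and dividing by $\operatorname{Var}_\pi(f)>0$ for nonconstant $f$ yields $1-\lambda_2\ge1/\rho$.

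The only delicate point is bookkeeping. One must check that the directed edges in \eqref{gam} are counted consistently with the $\tfrac12$ normalisations of both $\mathcal E$ and $\operatorname{Var}_\pi$. One must also check that repeated edges cannot occur within a single $\Gamma_{u,v}$. This holds because each step changes a distinct coordinate, so $\Gamma_{u,v}$ has no repeated edges and Step~2 introduces no multiplicity. Everything else is routine.
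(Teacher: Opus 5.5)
Your proof is correct and takes essentially the same approach as the paper. The paper gives no proof of its own and simply cites Levin--Peres (Corollary~13.21); your argument is the standard proof of that corollary, namely comparing $K$ with the chain $\tilde K(x,y)=\pi(y)$, whose Dirichlet form is $\operatorname{Var}_\pi$, specialised to the directed paths in \eqref{gam}. Your observation that distinct steps change distinct coordinates, so that $\Gamma_{u,v}$ has no repeated edges, is exactly what makes the multiplicity-free congestion in \eqref{rhoxy} legitimate.
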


\begin{theorem}\label{th:upper-bound-rho}
Assume that \eqref{om} holds. Then we have 
\[
\rho\le\frac{T^2N^3}{\omega_-^4}.
\]
\end{theorem}

\begin{proof}
Fix an off-diagonal edge $(x,y)\in\mathcal E_K$, hence there is a unique
$t\in\{1,\dots,T-1\}$ for which $x_{-t}=y_{-t}$.  Let
\[
C_{x,y}=\{(u,v)\in\Omega_{a,b}^2:(x,y)\in\Gamma_{u,v}\}.
\]
The definition of the canonical paths \eqref{gam} gives
\be \label{free-co}
\begin{aligned}
C_{x,y}=\bigl\{\bigl(&
(a,u_1,\dots,u_{t-1},x_t,x_{t+1},\dots,x_{T-1},b),\\
&(a,x_1,\dots,x_{t-1},y_t,v_{t+1},\dots,v_{T-1},b)
\bigr):\\
&u_1,\dots,u_{t-1},v_{t+1},\dots,v_{T-1}\in\chi\bigr\}.
\end{aligned}
\ee
At $t=1$ or $t=T-1$, the corresponding list of free coordinates is empty.
Moreover from \eqref{eq:definition-Gibbs-sampler} we have 
\be \label{k-eq}
K(x,y)=\frac{c_t(y_t\mid x_{-t})}{T-1}.
\ee
Since every canonical path has length at most $T-1$, equations
\eqref{rhoxy}, \eqref{free-co}, and \eqref{k-eq} and the notation $\mathbb P_{a,b} = \pi$ imply
\be \label{g1}
\rho(x,y)\leq
\frac{(T-1)^2 I_1(x,y)I_2(x,y)}
{\mathbb P_{a,b}(x)c_t(y_t\mid x_{-t})},
\ee 
where
\be \label{i-1} 
I_1(x,y)
=
\sum_{(u_1,\dots,u_{t-1})\in\chi^{t-1}}
\mathbb P_{a,b}
\bigl(a,u_1,\dots,u_{t-1},x_t,x_{t+1},\dots,x_{T-1},b\bigr)
\ee
and
\be \label{i-2} 
I_2(x,y)
=
\sum_{(v_{t+1},\dots,v_{T-1})\in\chi^{T-t-1}}
\mathbb P_{a,b}
\bigl(a,x_1,\dots,x_{t-1},y_t,v_{t+1},\dots,v_{T-1},b\bigr).
\ee
At $t=1$ or $t=T-1$, the corresponding tuple of free coordinates
is empty, and its sum consists of a single term.
The free coordinates in the two paths can be chosen independently,
so the sum of $\pi(u)\pi(v)$ over $C_{x,y}$ factors as
$I_1(x,y)I_2(x,y)$.

Set $Z=P^T(a,b)$, so that
$\mathbb P(X_0=a,X_T=b)=\mu_0(a)Z$.  Introduce the forward and backward
quantities
\be \label{l-r} 
L_t(r)=P^t(a,r),\qquad R_t(r)=P^{T-t}(r,b),
\ee
and the fixed prefix and suffix products
\[
A_t(x)=\prod_{i=0}^{t-2}P(x_i,x_{i+1}),\qquad
B_t(x)=\prod_{i=t}^{T-1}P(x_i,x_{i+1}),
\]
where empty products equal one. From \eqref{i-1} we get 
\be \label{i1}
I_1(x,y)=\frac{L_t(x_t)B_t(x)}{Z}.
\ee
From \eqref{i-2} we have 
\be \label{i2}
I_2(x,y)=\frac{A_t(x)P(x_{t-1},y_t)R_t(y_t)}{Z}.
\ee
We can also write,
\be \label{i3}
I_3(x):=\mathbb P_{a,b}(x)
=\frac{A_t(x)P(x_{t-1},x_t)B_t(x)}{Z},
\ee
and using \eqref{eq:conditionned-interior-proba-P}, 
\be \label{i4}
I_4(x,y):=c_t(y_t\mid x_{-t})
=\frac{P(x_{t-1},y_t)P(y_t,x_{t+1})}{D_t(x)},
\qquad
D_t(x)=\sum_{z\in\chi}P(x_{t-1},z)P(z,x_{t+1}).
\ee
Combining \eqref{g1}--\eqref{i4} yields
\be \label{rho-i}
\rho(x,y)\le (T-1)^2
\frac{I_1(x,y)I_2(x,y)}{I_3(x)I_4(x,y)}.
\ee
After cancelling the common prefix and suffix products,
\be \label{f2}
\rho(x,y)\le\frac{(T-1)^2}{I_4(x,y)}
\frac{P(x_{t-1},y_t)}{P(x_{t-1},x_t)}
\frac{L_t(x_t)R_t(y_t)}{Z}.
\ee
The Chapman--Kolmogorov identity gives
\be \label{f3}
Z=P^T(a,b)=\sum_{z_t\in\chi}L_t(z_t)R_t(z_t).
\ee
To bound the terms in \eqref{f2}, use the elementary inequality
\be \label{inff} 
\frac{\sum_k a_k u_k}{\sum_k a_k v_k}
\ge\min_k\frac{u_k}{v_k},
\qquad a_k\ge0,\quad \sum_k a_k>0,\quad u_k,v_k>0.
\ee
From \eqref{l-r} and \eqref{inff} we get 
\[
\begin{aligned}
\frac{L_t(z_t)}{L_t(x_t)}
&=\frac{\sum_rP^{t-1}(a,r)P(r,z_t)}
        {\sum_rP^{t-1}(a,r)P(r,x_t)}
\ge\min_r\frac{P(r,z_t)}{P(r,x_t)},\\
\frac{R_t(z_t)}{R_t(y_t)}
&=\frac{\sum_sP(z_t,s)P^{T-t-1}(s,b)}
        {\sum_sP(y_t,s)P^{T-t-1}(s,b)}
\ge\min_s\frac{P(z_t,s)}{P(y_t,s)}.
\end{aligned}
\]
Multiplying these inequalities yields
\be \label{f4}
\frac{L_t(z_t)}{L_t(x_t)}
\frac{R_t(z_t)}{R_t(y_t)}
\ge
\min_{r,s\in\chi}
\frac{P(r,z_t)P(z_t,s)}{P(r,x_t)P(y_t,s)}.
\ee
These formulas also cover $t=1$ and $t=T-1$, with $P^0=I$;
the corresponding sum then fixes the boundary state.
Combining \eqref{f2}--\eqref{f4}, we obtain
\be \label{f5}
\rho(x,y)\le\frac{(T-1)^2}{I_4(x,y)}
\frac{P(x_{t-1},y_t)}{P(x_{t-1},x_t)}
\left(
\sum_{z_t\in\chi}\min_{r,s\in\chi}
\frac{P(r,z_t)P(z_t,s)}{P(r,x_t)P(y_t,s)}
\right)^{-1}.
\ee
Substituting \eqref{i4} cancels the factor $P(x_{t-1},y_t)$ and gives
\[
\rho(x,y)\le
\frac{(T-1)^2D_t(x)}
{P(x_{t-1},x_t)P(y_t,x_{t+1})}
\left(
\sum_{z_t\in\chi}\min_{r,s\in\chi}
\frac{P(r,z_t)P(z_t,s)}{P(r,x_t)P(y_t,s)}
\right)^{-1}.
\]
Note that $D_t(x)\le1$.  Assumption \eqref{om} gives
\[
P(x_{t-1},x_t)P(y_t,x_{t+1})\ge\frac{\omega_-^2}{N^2}
\]
and
\[
\sum_{z_t\in\chi}\min_{r,s\in\chi}
\frac{P(r,z_t)P(z_t,s)}{P(r,x_t)P(y_t,s)}
\ge N\frac{\omega_-^2}{N^2}=\frac{\omega_-^2}{N}.
\]
It follows that
\[
\rho(x,y)\le\frac{(T-1)^2N^3}{\omega_-^4}
\le\frac{T^2N^3}{\omega_-^4}.
\]
Taking the maximum over $(x,y)\in\mathcal E_K$ and using \eqref{rho-def} proves the theorem.
\end{proof}

Next we prove the spectral-gap theorem.
\begin{proof}[Proof of Theorem \ref{cor:upper-bound-spectral-gap}]
By Proposition~\ref{prop-rev}(iii), all eigenvalues of $K$ are nonnegative;
in particular,
\be \label{lam-n-lbnd}
\lambda_{N^{T-1}}\ge0.
\ee
Theorem~\ref{th:upper-bound-rho} and
Proposition~\ref{prop:spectral-gap-notabsolute-to-absolute} give
\be \label{lam-2-lbnd}
1-\lambda_2\ge\frac{\omega_-^4}{T^2N^3}.
\ee
Since the spectrum is nonnegative, \eqref{sp-rev} reduces to
$\delta=1-\lambda_2$.  The result follows from \eqref{lam-2-lbnd}.
\end{proof}

\section{Proof of Lemma \ref{lem-ol}}\label{sec-pf-qsamp}
 \begin{proof}
Since $\sum_{j\in\chi}P(i,j)=1$ for every $i\in\chi$, summing successively over $x_{T-1},\dots,x_1$ gives $\sum_{x\in\Omega_{a,b}}q_{a,b}(x)=1$, as the product in \eqref{sigma} contains no transition to the fixed endpoint $b$.

Set $Z=P^T(a,b)$. The assumption of Lemma~\ref{lem-ol} gives
\[
    Z=\sum_{i\in\chi}P^{T-1}(a,i)P(i,b)
    \ge\eta>0.
\]
For every path $x$ with $q_{a,b}(x)>0$, the state
$x_{T-1}$ is reachable from $a$ at $T-1$ steps, so again by assumption 
$P(x_{T-1},b)\ge\eta$. Together with  \eqref{sigma} and the fact that $Z\le1$ we get 
\[
    \pi(x)
    =\frac{q_{a,b}(x)P(x_{T-1},b)}{Z}
    \ge\eta q_{a,b}(x).
\]
This inequality also holds when $q_{a,b}(x)=0$.
Consequently,
\be \label{eq:initial-domination}
    q_{a,b}(x)\le\eta^{-1}\pi(x), \quad \textrm{for all }
     x\in\Omega_{a,b}.
\ee
From \eqref{sigma}, \eqref{over}, and \eqref{eq:initial-domination} we get 
\[
    \theta_{a,b}
    =\sum_{x\in\Omega_{a,b}}\sqrt{\pi(x)q_{a,b}(x)}
    \ge\sqrt{\eta}\sum_{x\in\Omega_{a,b}}q_{a,b}(x) 
    =\sqrt{\eta},
\]
which completes the proof. 
\end{proof}

 \section{Proof of Proposition \ref{prop:amplitude-amplification}}
\label{sec:amplitude-amplification}
We first provide the formulation of amplitude amplification following
\cite{brassardamplitudeamplification}.
Recall the decomposition of the state $\Psi$ in \eqref{psi-dec}. Let $s=\|\Psi_*\|^2$ be the success probability, and define the
amplification angle $\vartheta$ by
\begin{equation}\label{sin-th}
\sin^2\vartheta=s,\qquad 0\le\vartheta\le\frac\pi2.
\end{equation}
 For $0<s<1$, the Grover iterate \eqref{r0} acts on the two
components in \eqref{psi-dec} as
\[
Q|\Psi_*\rangle=(1-2s)|\Psi_*\rangle-2s|\Psi_\perp\rangle,
\qquad
Q|\Psi_\perp\rangle
=2(1-s)|\Psi_*\rangle+(1-2s)|\Psi_\perp\rangle.
\]
Indeed, $Q=(2|\Psi\rangle\langle\Psi|-I)(I-2\Pi_*)$, where
$(I-2\Pi_*)|\Psi_*\rangle=-|\Psi_*\rangle$ and
$(I-2\Pi_*)|\Psi_\perp\rangle=|\Psi_\perp\rangle$.
By orthogonality, $\langle\Psi|\Psi_*\rangle=s$ and
$\langle\Psi|\Psi_\perp\rangle=1-s$; applying the first factor
and substituting $|\Psi\rangle=|\Psi_*\rangle+|\Psi_\perp\rangle$
gives the two identities.

The corresponding normalised eigenvectors of the Grover iterate $Q$
are given below; they satisfy
$Q|\Psi_\pm\rangle=e^{\pm 2i\vartheta}|\Psi_\pm\rangle$,
where $\vartheta=\arcsin\sqrt{s}$:
\begin{equation}\label{eq:eigenbase-amplitude-amplification}
|\Psi_\pm\rangle
=\frac1{\sqrt2}
\left(\frac{|\Psi_*\rangle}{\sqrt s}
\pm i\,\frac{|\Psi_\perp\rangle}{\sqrt{1-s}}\right).
\end{equation}
Consequently,
\begin{equation}\label{eq:iteration-operator-Q-amplitude-amplification}
Q^n|\Psi\rangle
=\frac{\sin((2n+1)\vartheta)}{\sqrt s}|\Psi_*\rangle
+\frac{\cos((2n+1)\vartheta)}{\sqrt{1-s}}|\Psi_\perp\rangle.
\end{equation}
For $s=1$, the input is already entirely marked and $Q$ changes
only its global phase. The case $s=0$ is excluded when a positive
lower bound is supplied.

\paragraph{Estimating the amplification angle.}
Let $\mathcal H$ be the Hilbert space of all registers on which
the Grover iterate $Q$ in \eqref{r0} acts.
For an integer $M\geq2$, introduce a phase register $\mathbb C^M$,
with basis $|j\rangle$, $0\leq j<M$. Define
\[
\Lambda_M(Q)|j\rangle|y\rangle=|j\rangle Q^j|y\rangle,
\qquad |y\rangle\in\mathcal H,
\]
and
\[
F_M|j\rangle=\frac{1}{\sqrt M}
\sum_{k=0}^{M-1}e^{2\pi ijk/M}|k\rangle.
\]
Thus, $\Lambda_M(Q)$ acts on $\mathbb C^M\otimes\mathcal H$,
whereas $F_M$ acts only on the phase register.

The following algorithm estimates the norm of the successful
component $|\Psi_*\rangle$ in the initial state
$|\Psi\rangle=\mathcal A|0\rangle$.
It returns an estimate $\widetilde{\vartheta}$ of
$\vartheta=\arcsin\|\Psi_*\|$, so that
$\sin\widetilde{\vartheta}$ estimates $\|\Psi_*\|$ and
$\sin^2\widetilde{\vartheta}$ estimates the initial success
probability $s$.

Specifically, the algorithm prepares
$M^{-1/2}\sum_{j=0}^{M-1}|j\rangle|\Psi\rangle$,
applies $\Lambda_M(Q)$, and then applies $F_M^{-1}$ to the
phase register. It measures that register in the basis
$\{|k\rangle:0\leq k<M\}$ and records the resulting integer $k$.
For an eigenvector of $Q$ with eigenvalue $e^{2\pi i\phi}$,
these operations concentrate the measurement probabilities
near values of $k$ for which $k/M$ approximates $\phi$ modulo one.

The input $|\Psi\rangle$ is a superposition of the eigenvectors
in \eqref{eq:eigenbase-amplitude-amplification}, whose phases
are $\vartheta/\pi$ and $1-\vartheta/\pi$.
The algorithm therefore uses the distance of $k/M$ from the
nearest integer, namely $\min\{k/M,1-k/M\}$.
Both exact phases have distance $\vartheta/\pi$ from the
nearest integer, so multiplying this distance by $\pi$ gives
the angle estimate
\[
\widetilde{\vartheta}=\pi\min\{k/M,1-k/M\}.
\]

\begin{algorithm}[H]
\caption{Folded amplitude estimation
\cite[Section~4]{brassardamplitudeamplification}}
\label{alg:cap}
\begin{algorithmic}
\Require Unitary preparation circuit $\mathcal A$ satisfying
$\mathcal A|0\rangle=|\Psi\rangle$, controlled Grover iterate
$Q$ defined in \eqref{r0}, and integer $M\geq2$.
\State Prepare $|0\rangle\,\mathcal A|0\rangle$.
\State Apply $F_M$ to the first register.
\State Apply $\Lambda_M(Q)$.
\State Apply $F_M^{-1}$ to the first register.
\State Measure the first register, obtaining $y\in\{0,\ldots,M-1\}$.
\State Return $\widetilde\vartheta=\pi\min\{y,M-y\}/M$.
\end{algorithmic}
\end{algorithm}

\begin{theorem}[{Consequence of amplitude estimation
\cite[Theorems~11--12]{brassardamplitudeamplification}}]
\label{th:amplitude-estimation}
Algorithm~\ref{alg:cap} returns
$\widetilde\vartheta\in[0,\pi/2]$ satisfying
\[
|\widetilde\vartheta-\vartheta|\le\frac\pi M
\]
with probability at least $8/\pi^2$. For every integer $k\ge2$,
the error is at most $k\pi/M$ with probability at least
$1-1/(2(k-1))$. The algorithm uses $O(M)$ controlled applications
of $Q$ and one preparation of $|\Psi\rangle$.
These statements include $s=0$ and $s=1$.
\end{theorem}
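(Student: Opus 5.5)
The plan is to reduce Algorithm~\ref{alg:cap} to standard phase estimation applied to the two eigenvectors of $Q$ in \eqref{eq:eigenbase-amplitude-amplification}, and then to check that the folding map does not enlarge the error. Throughout I take the convention, implicit in the algorithm, that $\Lambda_M(Q)$ applies $Q^j$ on the $|j\rangle$ branch and that $F_M^{-1}$ is the inverse of the transform defined above.

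\textbf{Step 1: decomposition into eigenvectors.} First assume $0<s<1$. Inverting \eqref{eq:eigenbase-amplitude-amplification} and using $\sqrt s=\sin\vartheta$, $\sqrt{1-s}=\cos\vartheta$, one checks that
\[
|\Psi\rangle=\tfrac{-i}{\sqrt2}\bigl(e^{i\vartheta}|\Psi_+\rangle-e^{-i\vartheta}|\Psi_-\rangle\bigr).
\]
The two coefficients both have modulus $1/\sqrt2$. The eigenvalues are $e^{\pm2i\vartheta}=e^{2\pi i\phi_\pm}$ with $\phi_+=\vartheta/\pi$ and $\phi_-=1-\vartheta/\pi$.

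\textbf{Step 2: apply phase estimation branchwise.} The whole circuit is linear. On an eigenvector with phase $\phi$, applying $F_M$, then $\Lambda_M(Q)$, then $F_M^{-1}$ leaves the system register unchanged. On the phase register it produces the state whose outcome law is the usual Fejér-type distribution
\[
p_\phi(y)=\frac{\sin^2(M\pi d)}{M^2\sin^2(\pi d)},\qquad d=\phi-y/M .
\]
The system components $|\Psi_+\rangle$ and $|\Psi_-\rangle$ are orthogonal. Hence the cross terms vanish when we measure only the first register, and the outcome law is the mixture $\tfrac12p_{\phi_+}+\tfrac12p_{\phi_-}$. It therefore suffices to prove both accuracy claims for a single $p_\phi$ and then average.

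\textbf{Step 3: accuracy bounds for a single eigenvector.} For one phase $\phi$ I would invoke the standard estimates of \cite[Theorems~11--12]{brassardamplitudeamplification}.

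\begin{itemize}
\item The two grid points $y/M$ that bracket $\phi$ carry total mass at least $8/\pi^2$. For $\phi$ strictly between grid points this follows by writing $d_1=\Delta/M$ and $d_2=(1-\Delta)/M$, using $\sin(\pi d)\le\pi d$, and minimising $\sin^2(\pi\Delta)\bigl(\Delta^{-2}+(1-\Delta)^{-2}\bigr)/\pi^2$ over $\Delta$; the minimum is attained at $\Delta=1/2$ and equals $8/\pi^2$.
\item For the tail I would use $|\sin(\pi u)|\ge2|u|$ for $|u|\le1/2$, which gives $p_\phi(y)\le 1/(4M^2d^2)$. Summing over grid points at circular distance greater than $k/M$ on both sides, and comparing the sums with $\int_{k-1}^\infty du/u^2$, gives mass at most $1/(2(k-1))$.
\end{itemize}

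Each of these holds for $\phi_+$ and for $\phi_-$, so it holds for their mixture.

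\textbf{Step 4: folding.} The map $y\mapsto \pi\min\{y,M-y\}/M$ equals $\pi$ times the distance from $y/M$ to the nearest integer. This distance is $1$-Lipschitz with respect to circular distance. It also sends both $\phi_+$ and $\phi_-$ to $\vartheta/\pi$. Consequently circular error at most $k/M$ for either branch yields $|\widetilde\vartheta-\vartheta|\le k\pi/M$, and the case $k=1$ gives the $\pi/M$ statement. Since $\min\{y,M-y\}\le M/2$, we get $\widetilde\vartheta\in[0,\pi/2]$.

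\textbf{Step 5: degenerate cases and cost.} When $s=1$, $Q|\Psi\rangle=-|\Psi\rangle$, and when $s=0$, $Q|\Psi\rangle=|\Psi\rangle$ (this follows from the two action formulas). So $|\Psi\rangle$ is itself an eigenvector with phase $1/2$ or $0$, matching $\vartheta=\pi/2$ or $0$, and the single-eigenvector bounds of Step 3 apply directly. For the cost, $\Lambda_M(Q)$ can be built from $O(M)$ controlled applications of $Q$, for instance by controlled powers $Q^{2^\ell}$ on binary digits of $j$, and the circuit uses exactly one preparation $\mathcal A|0\rangle$.

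\textbf{Main obstacle.} I expect the constant $8/\pi^2$ and the tail sum in Step 3 to be the only delicate parts. These are exactly the cited estimates of Brassard et al., so I would quote them rather than re-derive them. Within that, the orthogonality argument of Step 2, which makes the folded outcome law a genuine mixture, is the point that needs care.
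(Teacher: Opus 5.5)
Your proposal is correct and follows essentially the same route as the paper: the paper likewise quotes the circular phase-error bounds of Brassard et al.\ and observes that the folding map $u\mapsto\pi\min_{m\in\mathbb Z}|u-m|$ is $1$-Lipschitz for circular distance and sends both eigenphases $\vartheta/\pi$ and $1-\vartheta/\pi$ to $\vartheta$. Your eigen-decomposition and orthogonality argument makes explicit the mixture step that the paper compresses into ``including when the input is a superposition of the two eigenvectors''. One point to keep straight: average over the two branches using the folded event $\{|\widetilde\vartheta-\vartheta|\le k\pi/M\}$, which is common to both branches, rather than the unfolded circular events, which differ between branches; your Step~4 in effect already does this.
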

To see why the phase-estimation bounds imply this statement, let
$d_{\mathrm{circ}}(u,v)=\min_{k\in\mathbb Z}|u-v-k|$ denote distance
modulo one. The folding map is
$F(u)=\pi d_{\mathrm{circ}}(u,0)$, and the triangle inequality gives
\[
|F(u)-F(v)|\le\pi d_{\mathrm{circ}}(u,v).
\]
For either eigenphase $v=\vartheta/\pi$ or
$v=1-\vartheta/\pi$, we have $F(v)=\vartheta$.
Thus the circular phase-error bounds in
\cite[Theorem~11]{brassardamplitudeamplification} give the displayed
angle-error bounds, including when the input is a superposition
of the two eigenvectors. The probability estimate
$\widetilde s=\sin^2(\pi y/M)$ is unchanged by folding.

\begin{proof}[Proof of Proposition~\ref{prop:amplitude-amplification}]
Choose
\begin{equation}\label{m-k}
M=\left\lceil\frac{40}{c}\right\rceil.
\end{equation}
The phase estimate is used to choose an iteration count that
moves the marked component close to its maximum. If the estimate
already indicates a sufficiently large marked component, no
amplification is needed. For the initial state $|\Psi\rangle$, run Algorithm~\ref{alg:cap} and return
\be \label{n-choice}
n=
\begin{cases}
0,& \textrm{if }\widetilde\vartheta<c/2
       \ \text{or}\ \widetilde\vartheta\ge\pi/6,\\[2mm]
\left\lfloor\pi/(4\widetilde\vartheta)\right\rfloor, 
  & \textrm{if } c/2\le\widetilde\vartheta<\pi/6.
\end{cases}
\ee
This rule is defined on every measurement outcome, including zero,
and always gives 
\be \label{n-c} 
n\le \frac{\pi}{2c}.
\ee
Let $\mathcal E$ denote the estimation event
\begin{equation}\label{eve-s}
|\widetilde\vartheta-\vartheta|\le\frac\pi M.
\end{equation}
By Theorem \ref{th:amplitude-estimation} it has probability at least $8/\pi^2$.
Using the assumption $\|\Psi_*\|\ge c$ and $\vartheta \in [0,\pi/2]$ we have $\vartheta\ge\sin\vartheta=\|\Psi_*\|\ge c$. Hence
on $\mathcal E$,
\[
\widetilde\vartheta\ge c-\frac\pi M
\ge c\left(1-\frac\pi{40}\right)>\frac c2.
\]
If $\widetilde\vartheta\ge\pi/6$, then $n=0$ and from \eqref{n-c} and \eqref{eve-s} we have 
\[
\|\Psi_*\|=\sin\vartheta
\ge\sin\widetilde\vartheta-\frac\pi M
\ge\frac12-\frac\pi{40}>\frac25.
\]
This also covers the case where $s=1$ on $\mathcal E$.

If $0<\widetilde\vartheta<\pi/6$, the rounding rule in \eqref{n-choice} gives
\begin{equation}\label{d1}
\left|(2n+1)\widetilde\vartheta-\frac\pi2\right|
\le\widetilde\vartheta,
\end{equation}
Hence by the triangle inequality, \eqref{d1} and then application of \eqref{n-c} and \eqref{eve-s} we have on $\mathcal E$,
\begin{equation}\label{d22}
\begin{aligned}
\left|(2n+1)\vartheta-\frac\pi2\right|
&\le\widetilde\vartheta
 +(2n+1)|\vartheta-\widetilde\vartheta|\\
&\le\frac\pi6+\frac{\pi^2}{Mc}+\frac\pi M\\
& <\frac\pi3, 
\end{aligned}
\end{equation}
where we have used \eqref{m-k} in the last inequality. 

Having chosen $n$ by amplitude estimation (see Algorithm \ref{alg:cap}), we apply $Q$ exactly
$n$ times on $|\Psi\rangle$ and get the state in 
\eqref{eq:iteration-operator-Q-amplitude-amplification}. 
Recall that from \eqref{psi-dec}, $\Pi_*|\Psi_*\rangle=|\Psi_*\rangle$,  
$\Pi_*|\Psi_\perp\rangle=0$. Together with $\|\Psi_*\|=\sqrt{s}$ we get
\[
\|\Pi_*Q^n|\Psi\rangle\|
=|\sin((2n+1)\vartheta)|> \frac25,
\]
where the strict inequality follows from \eqref{d22}.
The estimation stage uses $O(M)=O(1/c)$ controlled Grover calls,
and the returned $n$ satisfies the same bound on all outcomes.

Thus, on $\mathcal E$, the returned integer $n$ satisfies
$\|\Pi_*Q^n|\Psi\rangle\|\geq2/5$.
Since $\mathbb P(\mathcal E)\geq8/\pi^2$, this proves the
claimed probability bound.
The selection rule gives $n=O(1/c)$ for every outcome.
Taking $M=\lceil40/c\rceil$, the phase estimation algorithm uses
one preparation of $|\Psi\rangle$ and $O(M)=O(1/c)$
controlled applications of $Q$.

\end{proof}

\section{Proof of Proposition \ref{prop-poly}}
\label{sec:reflection-eigenspace}
We first construct a polynomial $\Gamma$ satisfying $\Gamma(1)=1$
and having small modulus at every other eigenvalue of $\mathcal W$.
Thus, $\Gamma(\mathcal W)$ approximates the projector $\Pi$ onto
the eigenspace of $\mathcal W$ with eigenvalue $1$.
We then implement this polynomial as the success branch of a
unitary circuit.

\begin{theorem}[{\cite[Theorem~1]{baptistereflectioneigenspace}}]
\label{th:PUE-reflection-eigenspace}
Let $U$ be unitary, and let $\Gamma$ be a polynomial of degree at
most $d$ satisfying
\begin{equation}\label{gam-cond}
\sup_{|z|=1}|\Gamma(z)|\le1.
\end{equation}
There exists a polynomial $\Phi$ of degree at most $d$ satisfying
$|\Gamma(z)|^2+|\Phi(z)|^2=1$ on the unit circle, and a unitary
$V_\Gamma$ such that, for every $|\psi\rangle$,
\[
V_\Gamma(|0\rangle|\psi\rangle)
=|0\rangle\Gamma(U)|\psi\rangle
+|1\rangle\Phi(U)|\psi\rangle.
\]
The construction uses at most $d$ controlled-$U$ calls, $O(d+1)$
additional single-qubit gates, and one ancillary qubit.
In particular,
$(\langle0|\otimes I)V_\Gamma(|0\rangle\otimes I)=\Gamma(U)$
is an ordinary projected unitary encoding.
\end{theorem}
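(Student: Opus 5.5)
The plan is to prove the statement in two stages. First I construct the complementary polynomial $\Phi$ by a Fejér--Riesz argument. Then I build the circuit by factorising a $2\times2$ matrix-valued polynomial that is unitary on the unit circle, in the spirit of quantum signal processing. Write $\Gamma(z)=\sum_{k=0}^d\gamma_kz^k$.

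\textbf{Step 1: the complementary polynomial.} On $|z|=1$ the function $g(z)=1-|\Gamma(z)|^2=1-\Gamma(z)\overline{\Gamma(1/\bar z)}$ is a Laurent polynomial with exponents in $\{-d,\dots,d\}$. By \eqref{gam-cond} it is real and nonnegative there. The Fejér--Riesz theorem then gives a polynomial $\Phi$ of degree at most $d$ with $|\Phi(z)|^2=g(z)$ on the circle, which is exactly the required identity $|\Gamma(z)|^2+|\Phi(z)|^2=1$.

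\textbf{Step 2: a paraunitary matrix polynomial.} For a polynomial $p$ of degree at most $d$, set $\widetilde p(z)=z^d\,\overline{p(1/\bar z)}$; this is again a polynomial of degree at most $d$. Define
\[
F(z)=\begin{pmatrix}\Gamma(z)&-\widetilde\Phi(z)\\ \Phi(z)&\widetilde\Gamma(z)\end{pmatrix}.
\]
A direct check using $|\Gamma|^2+|\Phi|^2=1$ on the circle shows that $F(z)$ is unitary for every $|z|=1$.

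\textbf{Step 3: factorisation by degree reduction (the main obstacle).} I aim to show
\[
F(z)=R_d\,E(z)\,R_{d-1}E(z)\cdots E(z)\,R_0,
\qquad E(z)=|0\rangle\langle0|+z|1\rangle\langle1|,
\]
with $R_k\in U(2)$, arguing by induction on the degree. Write $F(z)=\sum_{k=0}^dC_kz^k$. Unitarity on the circle means $F(z)^*F(z)=I$ there, and comparing the coefficient of $z^d$ gives $C_0^*C_d=0$. Choose a rank-one projector $P$ with $C_dP=C_d$; this is possible whenever $C_d$ has rank one. With this choice, $F(z)\bigl(I-P+\bar z P\bigr)$ is a polynomial of degree at most $d-1$ that is still unitary on the circle. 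The coefficient relation $C_0^*C_d=0$ is precisely what ensures no negative power of $z$ survives in this product.

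Writing $P=W|1\rangle\langle1|W^*$ for some unitary $W$, the peeled factor becomes $W\,E(z)\,W^*$. Repeating the step $d$ times, and absorbing the $W$'s and the final constant unitary into the $R_k$, gives the factorisation.

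The delicate cases are the degenerate ones. If $C_d=0$, I will insert a trivial step (for instance take $R$ absorbing a factor $E(z)$ that contributes nothing) so that the count stays at $d$. If $C_d$ has rank two, unitarity forces $C_0=0$, so one can factor out $zI$. I will handle this with two factors, or more simply by first choosing $\Phi$ in Step 1 so that the leading coefficient has rank at most one. I expect to settle these cases by a careful bookkeeping lemma.

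\textbf{Step 4: the circuit.} Set
\[
V_\Gamma=(R_d\otimes I)\,\Lambda(U)\,(R_{d-1}\otimes I)\cdots\Lambda(U)\,(R_0\otimes I),
\]
where $\Lambda(U)=|0\rangle\langle0|\otimes I+|1\rangle\langle1|\otimes U$ is controlled-$U$. Decompose $|\psi\rangle$ in an eigenbasis of $U$. Each eigenvector with eigenvalue $z$ sees $\Lambda(U)$ as $E(z)\otimes I$, so $V_\Gamma$ acts as $F(U)$ via the functional calculus. The first column of $F$ then gives
\[
V_\Gamma(|0\rangle|\psi\rangle)=|0\rangle\Gamma(U)|\psi\rangle+|1\rangle\Phi(U)|\psi\rangle.
\]
The resource count is $d$ controlled-$U$ calls, $d+1$ single-qubit unitaries, and one ancilla. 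Compressing onto the ancilla state $|0\rangle$ yields the stated projected unitary encoding of $\Gamma(U)$.
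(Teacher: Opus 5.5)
The paper does not prove this statement; it imports it from the cited reference. It is the generalised quantum-signal-processing result, and your route is the standard argument for it: Fej\'er--Riesz for $\Phi$, then layer stripping of a $2\times2$ polynomial matrix that is unitary on the circle, then functional calculus for the circuit. Steps 1, 2 and 4 are correct, but Step 3 needs two repairs.

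\textbf{First repair: the coefficient identity.} You invoke the wrong one. Right-multiplying by $I-P+\bar zP$ removes the $z^{-1}$ term only if $C_0P=0$. Since $P$ projects onto $\operatorname{ran}C_d^*$, this condition is $C_0C_d^*=0$, which is the $z^{-d}$ coefficient of $F(z)F(z)^*=I$. The relation $C_0^*C_d=0$ that you get from $F^*F=I$ does not imply it for general matrices: take $C_d=|0\rangle\langle0|$ and $C_0=|1\rangle\langle0|$. Both relations hold here because $F(z)$ is square and unitary on the circle, so this is only a mis-citation.

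\textbf{Second repair: the degenerate cases.} Your plan does not work as written.
\begin{itemize}
\item There is no factor $E(z)$ that ``contributes nothing'', since $\det E(z)=z$.
\item You cannot always choose $\Phi$ so that the leading coefficient has rank at most one. For $\Gamma(z)=z$ with $d=2$, the identity $|\Gamma|^2+|\Phi|^2=1$ forces $\Phi=0$, and then $F(z)=zI$.
\end{itemize}

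The missing observation is that $\det F(z)=\Gamma\widetilde\Gamma+\Phi\widetilde\Phi=z^d$ identically. You can check this on the circle, where $\widetilde p=z^d\bar p$.

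With this in hand, peel using the true leading coefficient $C_D$, $D\le d$:
\begin{itemize}
\item If $C_D$ has rank one, remove one factor as you describe.
\item If $C_D$ has rank two, then $C_0=0$, and $zI=E(z)\,XE(z)X$, with $X$ the Pauli matrix, removes two factors.
\end{itemize}
Either way the degree drops, so you end at a constant unitary. Comparing determinants then shows that exactly $d$ factors $E(z)$ occur.

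Alternatively, induct on the exponent of $\det F$, taking $P$ onto a unit vector of $\ker F(0)$. This kernel is nontrivial while the exponent is positive. At exponent zero, $F^{-1}=F^*$ is a polynomial in both $z$ and $z^{-1}$, hence constant.

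The usual generalised-QSP proof avoids the issue altogether by stripping only the column $(\Gamma,\Phi)^\top$. Writing $\Phi=\sum_k\phi_kz^k$, the identity $\gamma_d\bar\gamma_0+\phi_d\bar\phi_0=0$ lets a single rotation remove the constant term of one entry and the $z^d$ term of the other.

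With either repair your proof is complete.
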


\begin{proof}[Proof of Proposition~\ref{prop-poly}]
Let $\Pi$ be the projector of $\mathcal W$ for
eigenvalue $1$ defined in \eqref{proj}, and take all eigenphases in $[-\pi,\pi]$.
By assumption, their nonzero magnitudes are at least
$\Delta_0\in(0,\pi]$.
For any integers $t,n\ge1$, set
\begin{equation}\label{eq:filter-polynomial}
\Gamma_{t,n}(z)
=\left(\frac1t\sum_{j=0}^{t-1}z^j\right)^n.
\end{equation}
For $|z|\le1$, this polynomial has modulus at most one, and
$\Gamma_{t,n}(1)=1$.  For a nonzero eigenphase $\lambda$, the geometric-sum formula gives
\be \label{dd1}
\left|\frac1t\sum_{j=0}^{t-1}e^{ij\lambda}\right|
=
\frac{|1-e^{it\lambda}|}{t|1-e^{i\lambda}|}
\leq\frac{2}{t|1-e^{i\lambda}|}.
\ee
Since $\Delta_0\leq|\lambda|\leq\pi$ and
$|1-e^{i\lambda}|=2\sin(|\lambda|/2)$, we have
\be \label{dd2}
|1-e^{i\lambda}|\geq|1-e^{i\Delta_0}|.
\ee
Choose
\be \label{dd3}
t=\left\lceil\frac{2e}{|1-e^{i\Delta_0}|}\right\rceil,
\qquad
n=\left\lceil\ln\frac1\eta\right\rceil.
\ee
By the choice of $t$, we have
$t|1-e^{i\Delta_0}|\geq2e$, so together with \eqref{dd1}--\eqref{dd3} we get 
\[
|\Gamma_{t,n}(e^{i\lambda})|
=
\left|\frac1t\sum_{j=0}^{t-1}e^{ij\lambda}\right|^n
\leq e^{-n}\leq\eta.
\]
On the eigenspace with eigenvalue $1$, we have
$\Gamma_{t,n}(1)=1$, so $\Gamma_{t,n}(\mathcal W)-\Pi$
vanishes. On every other eigenspace, $\Pi$ vanishes and
$\Gamma_{t,n}(\mathcal W)$ acts by a scalar of modulus at most
$\eta$. Since $\mathcal W$ is unitary, its eigenspaces are
orthogonal, and hence
\[
\|\Gamma_{t,n}(\mathcal W)-\Pi\|\leq\eta.
\]

Since $|e^{i\Delta_0}-1|=2\sin(\Delta_0/2)
\ge2\Delta_0/\pi$, we have by \eqref{dd3} $t=O(1/\Delta_0)$.
Thus the degree of $\Gamma_{t,n}$ is
\[
(t-1)n=O\!\left(\frac1{\Delta_0}\log\frac2\eta\right).
\]
Applying Theorem~\ref{th:PUE-reflection-eigenspace} gives the asserted
unitary encoding and query bound.
\end{proof}

\section{Proof of Theorem \ref{th:complexity-gibbs-sampler}}
\label{pf-gibbs}
\begin{proof}
We use the preparation $|\Sigma_{a,b}\rangle=A_{a,b}|0\rangle$ from \eqref{sigma} and set
\be \label{c-dl-0} 
c=\sqrt{\frac{\omega_-}{N}},
\qquad
\delta_0=\frac{\omega_-^4}{T^2N^3}.
\ee
Lemma~\ref{lem-ol} and Assumption~\ref{assump-spec} give $\theta_{a,b}\geq c$ for the overlap
defined in \eqref{over}, and
Theorem~\ref{cor:upper-bound-spectral-gap} gives
$\delta\geq\delta_0$ for the absolute spectral gap of $K$. The proof of the result is divided into the following steps.
 
\paragraph{Approximation of the projection.} Let $\mathcal W$ be defined as in \eqref{W-Q}. Apply
$O_K$ to $|0\rangle_w|\Sigma_{a,b}\rangle$, where $w$ is the workspace register.
By \eqref{sq-o-k}, this gives
\be \label{o-s} 
|\psi\rangle
=O_K(|0\rangle_w|\Sigma_{a,b}\rangle)
=\square|\Sigma_{a,b}\rangle .
\ee
Here $|0\rangle_w$ denotes the zero state of the workspace register. From~\eqref{proj} and \eqref{over} we therefore get, 
\begin{equation}\label{eq:projected-initial-state}
\Pi|\psi\rangle=\theta_{a,b}\,\square|\pi\rangle,
\end{equation}
where $\Pi$ is the orthogonal projector onto
$\ker(\mathcal W-I)$. 
Let $\eps \in (0,1)$ and set
\be \label{sdfs}
\eta=\frac{c\varepsilon}{4},
\qquad
\Delta_0=\sqrt{2\delta_0}.
\ee
Proposition~\ref{cor:SPUE-spectral}(ii) gives the phase-gap
bound $\Delta_0$ on $\mathcal L$, where on $\mathcal L^\perp$,
the only possible eigenvalues are $1$ and $-1$. Thus $\Delta_0$ also bounds every
nonzero eigenphase in absolute value on the full space.
Proposition~\ref{prop-poly} provides a polynomial $\Gamma$ and  a unitary $V$ such that $\|\Gamma(\mathcal W)-\Pi\|\leq\eta$, using
\be \label{poly-deg} 
d=O\!\left(\frac{1}{\sqrt{\delta_0}}
\log\frac{2}{c\varepsilon}\right)
\ee
controlled applications of $\mathcal W$.

Let $f$ denote the additional qubit of $V$ in \eqref{v-op}; outcome $f=0$
will signal success. Including preparation, define 
\begin{equation}\label{eq:full-filter-preparation}
B=(I_f\otimes O_K^*)\,V\,(I_f\otimes O_K)\,
  (I_f\otimes I_w\otimes A_{a,b}).
\end{equation}
All registers are initialised in their zero states. By \eqref{o-s},
\eqref{eq:projected-initial-state} and
$O_K^*\square|\pi\rangle=|0\rangle_w|\pi\rangle$, we obtain
\begin{equation}\label{eq:Phi-decomposition}
B|0\rangle
=|0\rangle_f|g\rangle+|1\rangle_f|h\rangle,
\qquad
|g\rangle=\theta_{a,b}|0\rangle_w|\pi\rangle+|e\rangle,
\end{equation}
where $|g\rangle$ and $|h\rangle$ are unnormalised states of
the workspace and path registers, and
\be \label{ggf} 
|e\rangle=O_K^*(\Gamma(\mathcal W)-\Pi)|\psi\rangle,
\qquad \|e\|\leq\eta.
\ee
Since \(|0\rangle_w|\pi\rangle\) is normalised and \(\theta_{a,b}\geq0\),
\[ \bigl\|\theta_{a,b}|0\rangle_w|\pi\rangle\bigr\|=\theta_{a,b}. \] 
Together with $\theta_{a,b}\geq c$, \eqref{sdfs}, \eqref{eq:Phi-decomposition} and \eqref{ggf} it follows that
\begin{equation}\label{eq:success-amplitude}
\bigl|\|g\|-\theta_{a,b}\bigr|\leq\eta,
\qquad
\|g\|\geq\theta_{a,b}-\eta\geq\frac{3c}{4}.
\end{equation}

\paragraph{Amplifying the flag.}
Apply Proposition~\ref{prop:amplitude-amplification} with
$\mathcal A=B$, marked projector
$\Pi_*=|0\rangle\langle0|_f\otimes I$, and amplitude lower
bound $3c/4$ for $c$ as in \eqref{c-dl-0}. The corresponding marking reflection
$S_*=I-2\Pi_*$ is a sign change on the flag-zero branch.
Let $Q_B$ be the Grover iterate \eqref{r0} for this preparation.
The proposition returns an integer $n=O(1/c)$ using
$O(1/c)$ controlled applications of $Q_B$.
Discard the estimation registers, prepare a fresh $B|0\rangle$,
apply $Q_B$ exactly $n$ times, and measure $f$.
By Remark~\ref{rem-grov}, the probability of obtaining $f=0$ is at least $32/(25\pi^2)$,
including both the estimation and final measurement outcomes.

For every returned $n$, equation
\eqref{eq:iteration-operator-Q-amplitude-amplification} shows
that the Grover iterations only change the scalar coefficients
of the two branches in \eqref{eq:Phi-decomposition}.
Conditioning on $f=0$ selects a scalar multiple of $|g\rangle$,
and normalisation removes the magnitude of this scalar, leaving only a global phase.
Consequently, whenever $f=0$, the remaining state is, up to
a global phase,
\begin{equation}\label{eq:rho-definition}
|\zeta\rangle=\frac{|g\rangle}{\|g\|}.
\end{equation}
This conditional state is the same even when the angle estimate
is inaccurate. If $\|g\|=1$, the input is already entirely
marked and the same conclusion holds.

\paragraph{Sampling error.}
Equations~\eqref{eq:Phi-decomposition}--\eqref{eq:rho-definition}
give
\begin{equation}\label{gg8}
\begin{aligned}
\bigl\||\zeta\rangle-|0\rangle_w|\pi\rangle\bigr\|
&=
\frac{\bigl\||g\rangle-\|g\|\,|0\rangle_w|\pi\rangle\bigr\|}
     {\|g\|}\\
&=
\frac{\bigl\|(\theta_{a,b}-\|g\|)|0\rangle_w|\pi\rangle
             +|e\rangle\bigr\|}
     {\|g\|}\\
&\leq
\frac{|\theta_{a,b}-\|g\||+\|e\|}{\|g\|}\\
&\leq\frac{2\eta}{\theta_{a,b}-\eta}
\leq\frac{c\varepsilon/2}{3c/4}
=\frac{2\varepsilon}{3}\leq\varepsilon, 
\end{aligned}
\end{equation}
Here the second equality uses \eqref{eq:Phi-decomposition},
and the first inequality uses the triangle inequality and
$\bigl\||0\rangle_w|\pi\rangle\bigr\|=1$. 
We now translate the state approximation in \eqref{gg8} into a bound
on the distribution of the sampled path. Conditional on $f=0$, the
workspace and path registers are in the normalised state
\[
|\zeta\rangle
=\sum_{z,x}\alpha_{z,x}|z\rangle_w|x\rangle,
\]
where $z$ labels a workspace basis state and
$x\in\Omega_{a,b}$ labels a complete bridge path. By the Born rule,
measuring both registers would give the pair $(z,x)$ with probability
$|\alpha_{z,x}|^2$. We retain only the path $x$, so its probability is
the sum over all workspace outcomes:
\[
\rho(x)=\sum_z|\alpha_{z,x}|^2.
\]
Discarding the workspace and measuring only the path register gives
this same distribution.

For comparison, the ideal state is
$|0\rangle_w|\pi\rangle
=\sum_x\sqrt{\pi(x)}\,|0\rangle_w|x\rangle$.
Its workspace is always zero, and measuring its path register gives
exactly the desired distribution $\pi$. Its coefficients in the same
basis are therefore
\[
\beta_{z,x}=
\begin{cases}
\sqrt{\pi(x)},&z=0,\\
0,&z\ne0,
\end{cases}
\qquad
\sum_z|\beta_{z,x}|^2=\pi(x).
\]
Using these expressions for $\rho(x)$ and $\pi(x)$, the definition
of total variation distance and the triangle inequality give
\[
\begin{aligned}
d_{\mathrm{TV}}(\rho,\pi)
&=\frac12\sum_x
\left|\sum_z\bigl(|\alpha_{z,x}|^2-|\beta_{z,x}|^2\bigr)\right|\\
&\leq\frac12\sum_{z,x}
\bigl||\alpha_{z,x}|^2-|\beta_{z,x}|^2\bigr|.
\end{aligned}
\]
Using $\bigl||a|^2-|b|^2\bigr|\leq|a-b|(|a|+|b|)$
for $a,b\in\mathbb C$, and then applying Cauchy--Schwarz, gives
\[
d_{\mathrm{TV}}(\rho,\pi)
\leq\frac12
\left(\sum_{z,x}|\alpha_{z,x}-\beta_{z,x}|^2\right)^{1/2}
\left(\sum_{z,x}(|\alpha_{z,x}|+|\beta_{z,x}|)^2\right)^{1/2}.
\]
Both states are normalised, so
$\sum_{z,x}|\alpha_{z,x}|^2
=\sum_{z,x}|\beta_{z,x}|^2=1$.
The second factor is consequently at most $2$, since
\[
\sum_{z,x}(|\alpha_{z,x}|+|\beta_{z,x}|)^2
\leq2\sum_{z,x}\bigl(|\alpha_{z,x}|^2+|\beta_{z,x}|^2\bigr)=4.
\]
The first factor is exactly the norm of the difference between the
two state vectors. Hence \eqref{gg8} gives
\[
d_{\mathrm{TV}}(\rho,\pi)
\leq\left(\sum_{z,x}|\alpha_{z,x}-\beta_{z,x}|^2\right)^{1/2}
=\bigl\||\zeta\rangle-|0\rangle_w|\pi\rangle\bigr\|
\leq\varepsilon.
\]
Since $\pi=\mathbb P_{a,b}$, the output path distribution is within
$\varepsilon$ of the desired Markov bridge law in total variation.
This guarantee is conditional on obtaining the success flag $f=0$.

\paragraph{Resource accounting.}
By \eqref{w-o-op}, each controlled walk call uses a constant
number of the primitives $O_K,O_K^*,S$.
Thus each use of $B$ or $B^*$ in
\eqref{eq:full-filter-preparation} uses $O(d)$ walk-primitive calls,
where $d$ is given by \eqref{poly-deg}, and one call to
$A_{a,b}$ or $A_{a,b}^*$, respectively. 
The estimation stage uses $O(1/c)$ controlled Grover calls,
and the final amplification uses $n=O(1/c)$ Grover calls, for $c$ in \eqref{c-dl-0}.
As in Remark \ref{rem-grov}, each such call
uses one $B$ and one $B^*$, together with the zero-state and
flag reflections. Including the preparations in both stages,
the total counts are
\begin{equation}\label{eq:generic-quantum-sampling-cost}
\begin{aligned}
Q_{\mathrm{walk}}
&=O\!\left(
\frac{1}{c\sqrt{\delta_0}}
\log\frac{2}{c\varepsilon}\right),\\
Q_{\mathrm{prep}}&=O(1/c),
\end{aligned}
\end{equation}
where $Q_{\mathrm{prep}}$ counts calls to $A_{a,b}$ and
$A_{a,b}^*$.
Substituting the values of $c$ and $\delta_0$ in \eqref{c-dl-0} gives
\[
Q_{\mathrm{walk}}
=O\!\left(
\frac{TN^2}{\omega_-^{5/2}}
\log\frac{N}{\omega_-\varepsilon}\right).
\]
Finally, the construction following \eqref{sigma} uses $T-1$
calls to $U_P$ for $A_{a,b}$, and $T-1$ calls to $U_P^*$ for
its inverse. Hence the total number of calls to $U_P,U_P^*$ is
\[
O\!\left(\frac{T-1}{c}\right)
=O\!\left(T\sqrt{\frac{N}{\omega_-}}\right).
\]
\end{proof}

\section{Proofs of Section \ref{sec:quantum-sinkhorn-alg}} \label{sec-pf-stat}
\begin{proof}[Proof of Lemma \ref{lem:static-scaling-bound}] 
We first argue that a minimiser exists. Positivity of the reference matrix and
target marginals allows us to apply
Proposition~\ref{prop:static-potentials}. The logarithms of its
positive scaling factors give vectors $(x^*,y^*)$ for which the
row and column sums are the prescribed marginals. Thus, by
\eqref{eq:schrodinger-equation}, the gradient of the convex function $f$ (see \eqref{eq:static-convex-potential}) at $(x^*,y^*)$ vanishes, so these vectors
minimise $f$.

We now use the freedom to shift the two potentials in opposite
directions. The matrix $M(x,y)$ in \eqref{eq:scaled-endpoint-matrix} is invariant under the transformation
$(x,y)\mapsto(x+s\mathbf{1},y-s\mathbf{1})$ for every
$s\in\mathbb R$. Since both marginals have mass one, this
transformation also leaves $f$ unchanged. We may therefore
choose a minimiser satisfying 
\be \label{maxd}
\max_{i\in\chi}x_i^*=0.
\ee  
From \eqref{eq:scaled-endpoint-matrix} and \eqref{lg1},
\[
e^{x_i^*}\sum_j\mathbb{P}_{0,T}(i,j)e^{y_j^*}=\mu_0(i), \quad \textrm{for all } i\in \chi.
\]
Comparing two row equations and using the assumption 
$p_{\min}\le\mathbb P_{0,T}(i,j)\le1$ and
$\mu_{0,\min}\le\mu_0(i)\le1$ gives
\[
e^{x_i^*-x_k^*}
=
\frac{\mu_0(i)}{\mu_0(k)}
\frac{\sum_j\mathbb{P}_{0,T}(k,j)e^{y_j^*}}
     {\sum_j\mathbb{P}_{0,T}(i,j)e^{y_j^*}}
\le \frac{1}{\mu_{0,\min}p_{\min}}, \quad \textrm{for all } i,k\in\chi.
\]
Taking logarithms and maximising over $i,k$ yields
\be  \label{ii0}
\max_{i\in\chi}x_i^*-\min_{i\in\chi}x_i^*
\le \log\!\left(\frac{1}{\mu_{0,\min}p_{\min}}\right).
\ee
The column constraints in \eqref{lg1} similarly give
\be \label{ii1}
e^{y_j^*}
=\frac{\mu_T(j)}
{\sum_{i\in\chi}\P_{0,T}(i,j)e^{x_i^*}}.
\ee
By \eqref{maxd}, some $x_{i_0}^*=0$ and every $x_i^*\le0$.
The denominator in \eqref{ii1} is therefore between $p_{\min}$
and $N$. Consequently,
\[
\log\mu_{T,\min}-\log N
\le y_j^*\le\log(1/p_{\min}).
\]
Combining this bound with \eqref{maxd} and \eqref{ii0} yields
\[
\|(x^*,y^*)\|_\infty
\le\max\left\{
\log\frac1{\mu_{0,\min}p_{\min}},\,
\log\frac N{\mu_{T,\min}},\,
\log\frac1{p_{\min}}
\right\}
\le B_0,
\]
which proves \eqref{eq:static-scaling-bound}.
\end{proof}
 
\begin{proof}[Proof of Theorem~\ref{th:quantum-static-bridge}]
We first check the hypotheses of the matrix-scaling algorithm (Algorithm~\ref{alg:quantum-static-bridge}), namely, that all parts of Assumption~\ref{assump-box} are satisfied. Recall Assumptions~\ref{assump-box}(i) and (iv) are part of the hypothesis of the theorem. Proposition~\ref{prop:static-potentials} provides an exact positive
scaling with the prescribed marginals, so
Assumption~\ref{assump-box}(ii) holds. By
Lemma~\ref{lem:static-scaling-bound}, an exact minimiser of $f$
has norm at most $B_0\le B$, which verifies
Assumption~\ref{assump-box}(iii). 

Let $\eps \in (0,1)$ and set $\eta=\eps/\sqrt3$ as in
Algorithm~\ref{alg:quantum-static-bridge}. By
\cite[Theorem~3.13]{Improved-quantum-lower-and-upper-bounds-for-matrix-scaling},
the algorithm returns vectors $(\bar x,\bar y)$ satisfying
\begin{equation}\label{eq:static-objective-error}
f(\bar x,\bar y)-f(x^*,y^*)\le6\eta^2
\end{equation}
with probability at least $2/3$. Since the reference matrix has
$N^2$ positive entries, the running time is
$\widetilde O(B^2N^{3/2}/\eta^2)$.
 
It remains to convert the optimisation error into an error for the
joint endpoint distribution. Recall from
\eqref{eq:approx-static-bridge} that
$\overline{\Q}_{0,T}=\overline M/L$, where $L>0$ is the total
mass of $\overline M$. By \eqref{eq:static-bridge-matrix-scaling},
$\Q_{0,T}^*=M(x^*,y^*)$ has total mass one. Taking the logarithm
of the ratio of the two couplings and using
the prescribed row and column sums of $\Q_{0,T}^*$ gives
\begin{align*}
\operatorname{KL}(\Q_{0,T}^*\,\|\,\overline{\Q}_{0,T})
&\quad=
\log L
+\sum_{i,j\in\chi}\Q_{0,T}^*(i,j)
\bigl(x_i^*+y_j^*-\bar x_i-\bar y_j\bigr)\\
&\quad=
\log L
+\langle\mu_0,x^*-\bar x\rangle
+\langle\mu_T,y^*-\bar y\rangle.
\end{align*}
Using \eqref{eq:static-convex-potential} and the total masses $L$ and $1$ of $\overline M$ and $\Q_{0,T}^*$, respectively, gives 
\[
f(\bar x,\bar y)-f(x^*,y^*)
=L-1+\langle\mu_0,x^*-\bar x\rangle
+\langle\mu_T,y^*-\bar y\rangle.
\]
Combining these identities yields
\be \label{eq:static-objective-KL-identity}
f(\bar x,\bar y)-f(x^*,y^*)
=L-1-\log L
+\operatorname{KL}(\Q_{0,T}^*\,\|\,\overline{\Q}_{0,T}).
\ee
Since $z-1-\log z\ge0$ for every $z>0$, 
\eqref{eq:static-objective-error} and
\eqref{eq:static-objective-KL-identity} imply
\[
\operatorname{KL}(\Q_{0,T}^*\,\|\,\overline{\Q}_{0,T})
\le6\eta^2.
\]
Pinsker's inequality therefore yields
\[
d_{TV}(\Q_{0,T}^*,\overline{\Q}_{0,T})
\le
\sqrt{\frac12
\operatorname{KL}(\Q_{0,T}^*\,\|\,\overline{\Q}_{0,T})}
\le\sqrt{3}\eta
 = \eps.
\]
The event \eqref{eq:static-objective-error} has probability at least
$2/3$, so this proves the stated probability and accuracy guarantees.
Substituting $\eta=\eps/\sqrt3$ gives the claimed running time.
\end{proof}
 
\section{Proofs of Section \ref{sec:extension-with-additive-cost}}  \label{sec-pr-additive} 
\begin{proof}[Proof of Proposition~\ref{prop-costs}]
Since $g$ is finite-valued on a finite state space,
\[
0<e^{-T\max_i g(i)}\le Z_T\le e^{-T\min_i g(i)}<\infty.
\]
Moreover, $\widehat{\mathbb V}$ and $\mathbb P$ have the same
support. If $\mathbb Q\not\ll\mathbb P$, both relative entropies
in \eqref{eq:cost-entropy-identity} are infinite, while the expected
cost is finite; the identity therefore holds in the extended-real
sense. For $\mathbb Q\ll\mathbb P$, substituting
$\widehat{\mathbb V}(x)=Z_T^{-1}\mathbb P(x)
\exp(-\sum_{t=0}^{T-1}g(x_t))$ into the definition of relative
entropy and summing over paths in the support of $\mathbb P$ gives 
\[
\begin{aligned}
\operatorname{KL}(\mathbb Q\Vert\widehat{\mathbb V})
&=\sum_x\mathbb Q(x)
\left(\log\frac{\mathbb Q(x)}{\mathbb P(x)}
+\sum_{t=0}^{T-1}g(x_t)+\log Z_T\right)\\
&=\operatorname{KL}(\mathbb Q\Vert\mathbb P)
+\mathbb E_{\mathbb Q}\left[\sum_{t=0}^{T-1}g(X_t)\right]
+\log Z_T.
\end{aligned}
\]
Rearranging proves \eqref{eq:cost-entropy-identity}.
The objective in \eqref{cost-add} and $\operatorname{KL}(\mathbb Q\Vert\widehat{\mathbb V})$ differ by the constant $\log Z_T$,
which is independent of $\mathbb Q$. The finite-feasibility
assumption ensures existence of a finite minimiser, and strict
convexity gives uniqueness. Thus their common minimiser is the
Schr\"odinger bridge with reference law $\widehat{\mathbb V}$.
\end{proof}

\begin{proof}[Proof of Proposition~\ref{prop-rev-g}]
Let $\pi_g=\widehat{\mathbb V}_{a,b}$. When all coordinates
except $x_t$ are fixed, the only varying factors in its path
weight are
$P(x_{t-1},x_t)P(x_t,x_{t+1})e^{-g(x_t)}$.
Normalising these factors gives
\eqref{eq:cost-gibbs-conditional}. Thus $K^g$ selects an interior
coordinate uniformly and resamples it from its conditional law.
Strict positivity of $P$ and finiteness of $g$ make these
conditional probabilities positive. As in the proof of
Proposition~\ref{prop-rev}, successive single-coordinate updates
connect any two paths, and resampling the current value leaves
the path unchanged with positive probability. This proves
irreducibility, aperiodicity, and the stated support of $K^g$.  

Define the conditional-expectation operator $E_t$ acting on
$f\in L^2(\pi_g)$ as
\[
(E_tf)(x)
=\mathbb E_{\pi_g}[f(X)\mid X_{-t}=x_{-t}]
=\sum_{s\in\chi}q^g(s,x_{t-1},x_{t+1})
  f(x^{t\leftarrow s}),
\]
where $x^{t\leftarrow s}$ denotes the path obtained from $x$
by replacing its $t$-th coordinate with $s$.
Since $K^g$ selects the coordinate to update uniformly,
\[
(K^gf)(x)
=\frac1{T-1}\sum_{t=1}^{T-1}(E_tf)(x),
\]
Each $E_t$ is an orthogonal projection. Hence $K^g$ is self-adjoint, which proves reversibility, and
\[
\langle f,K^g f\rangle_{\pi_g}
=\frac1{T-1}\sum_{t=1}^{T-1}
\|E_t f\|_{L^2(\pi_g)}^2\ge0.
\]
Hence the spectrum of $K^g$ lies in $[0,1]$.
\end{proof}

\begin{proof}[Proof of Theorem~\ref{cor:upper-bound-spectral-gap-g}]
We adapt the canonical-path argument from
Section~\ref{sec-pf-gibbs}. The key step is to cancel the cost
factors outside the coordinate being updated.

Recall that the matrix $W$ was defined in \eqref{W-def}. Define $w_i=e^{-g(i)}$ and use the transfer matrix
$R(i,j)=P(i,j)w_j$, which attaches the cost to the arrival state.
For a path with endpoints $a,b$,
\[
\prod_{t=0}^{T-1}R(x_t,x_{t+1})
=\frac{w_b}{w_a}\prod_{t=0}^{T-1}W(x_t,x_{t+1}).
\]
The factor $w_b/w_a$ is constant for fixed endpoints, so both
products give the same normalised bridge law
$\pi_g=\widehat{\mathbb V}_{a,b}$.

Use the paths \eqref{lead-p}--\eqref{gam}, whose lengths are
at most $T-1$. Define $\rho_g(x,y)$ and $\rho_g$ as in
\eqref{rhoxy}--\eqref{rho-def}, replacing $(\pi,K)$ by $(\pi_g,K^g)$.
Fix an off-diagonal edge $(x,y)$ that updates coordinate $t$
from $i=x_t$ to $j=y_t$, and write $\ell=x_{t-1}$ and $r=x_{t+1}$.
Define
\[
F_v=(R^t)(a,v),\qquad
H_v=(R^{T-t})(v,b),\qquad
Z_{a,b}=(R^T)(a,b)=\sum_v F_vH_v.
\]
The same prefix and suffix summation as in
\eqref{i1}--\eqref{f2}, now with $R$ in place of $P$, gives
\be \label{rho-g} 
\rho_g(x,y)
\le\frac{(T-1)^2}{q^g(j,\ell,r)}
\frac{R(\ell,j)}{R(\ell,i)}
\frac{F_iH_j}{Z_{a,b}}.
\ee
Expand $F_z$ over its last transition and $H_z$ over its first
transition. The weighted-ratio inequality used in \eqref{f4}
then gives, for every $z\in\chi$,
\be \label{eq:weighted-prefix-suffix} 
\frac{F_z}{F_i}\ge\frac{w_z}{w_i}
\min_u\frac{P(u,z)}{P(u,i)},\qquad
\frac{H_z}{H_j}\ge\min_v\frac{P(z,v)}{P(j,v)}.
\ee
These inequalities remain valid at $t=1$ and $t=T-1$, using
$R^0=I$. Set
\be \label{hh1}
\kappa_z=\min_{u,v}
\frac{P(u,z)P(z,v)}{P(u,i)P(j,v)},
\qquad C_g(\ell,r)=\sum_z P(\ell,z)P(z,r)w_z.
\ee
Multiplying the bounds in \eqref{eq:weighted-prefix-suffix} gives
\[
\frac{F_zH_z}{F_iH_j}\geq\frac{w_z}{w_i}\kappa_z.
\]
Summing this inequality over $z\in\chi$ and using
$Z_{a,b}=\sum_{z\in\chi}F_zH_z$, we obtain
\[
\frac{Z_{a,b}}{F_iH_j}
\geq\sum_{z\in\chi}\frac{w_z}{w_i}\kappa_z.
\]
By \eqref{eq:cost-gibbs-conditional} and the definition of $R$,
\[
\begin{aligned}
\frac1{q^g(j,\ell,r)}\frac{R(\ell,j)}{R(\ell,i)}
&=
\frac{C_g(\ell,r)}{P(\ell,j)P(j,r)w_j}
\frac{P(\ell,j)w_j}{P(\ell,i)w_i}\\
&=\frac{C_g(\ell,r)}{P(\ell,i)P(j,r)w_i}.
\end{aligned}
\]
Substituting these expressions into \eqref{rho-g} yields
\begin{equation}\label{eq:cost-congestion-bound}
\begin{aligned}
\rho_g(x,y)
&\leq (T-1)^2
\frac{C_g(\ell,r)}{P(\ell,i)P(j,r)w_i}
\frac{w_i}{\sum_{z\in\chi}w_z\kappa_z}\\
&=(T-1)^2
\frac{C_g(\ell,r)}
{P(\ell,i)P(j,r)\sum_{z\in\chi}w_z\kappa_z}.
\end{aligned}
\end{equation}
The positivity bound on $P$ in \eqref{om} implies
$\kappa_z\ge\omega_-^2/N^2$ (see \eqref{hh1}) and
$P(\ell,i)P(j,r)\ge\omega_-^2/N^2$.
Moreover, since $\sum_zP(\ell,z)P(z,r)\le1$ it follows from \eqref{del-g} and \eqref{hh1} that 
\be \label{jkl} 
C_g(\ell,r)\le e^{-\underline g},\qquad
\sum_z w_z\kappa_z
\ge e^{-\overline g}\frac{\omega_-^2}{N}.
\ee
Recall that $e^{-\underline g}/e^{-\overline g}=e^{\Delta_g}$. From \eqref{eq:cost-congestion-bound}, \eqref{jkl}, and \eqref{rho-def} we get 
\[
\rho_g\le \frac{T^2N^3}{\omega_-^4}e^{\Delta_g}.
\]
From Proposition \ref{prop:spectral-gap-notabsolute-to-absolute} we have $(1-\lambda_2(K^g))^{-1} \leq \rho_g$. Proposition~\ref{prop-rev-g} shows that the eigenvalues of $K^g$ are nonnegative, so by \eqref{sp-rev} the spectral gap of $K^g$ denoted by $\delta_g$ satisfies $\delta_g=1-\lambda_2(K^g)$. This proves \eqref{eq:cost-gibbs-gap}.
\end{proof}

\begin{proof}[Proof of Theorem~\ref{th:complexity-gibbs-sampler-g}]
Let $\pi_g=\widehat{\mathbb V}_{a,b}$ and define the row-preparation
isometry
\[
J_g|x\rangle=\sum_y\sqrt{K^g(x,y)}\,|x,y\rangle.
\]
Apply $J_g$ to the initial state $|r_g\rangle$ in \eqref{gfd}.
The resulting state belongs to $\operatorname{ran}J_g$, and hence to
\[
\mathcal L_g=\operatorname{span}
\{\operatorname{ran}J_g,\,S\operatorname{ran}J_g\}.
\]
The swap $S$ exchanges the two subspaces in this span, while
the reflection $2J_gJ_g^*-I$ also preserves their span.
Consequently, $\mathcal L_g$ is invariant under
$\mathcal W_g=(2J_gJ_g^*-I)S$.

By Proposition~\ref{prop-rev-g}, $K^g$ is irreducible,
reversible with respect to $\pi_g$, and positive semidefinite.
Thus the argument of Proposition~\ref{cor:SPUE-spectral}
in Section~\ref{sec-q-gibbs} applies with the substitutions
\[
(K,\pi,\square,\mathcal W,\mathcal L,\delta)
\longmapsto
(K^g,\pi_g,J_g,\mathcal W_g,\mathcal L_g,\delta_g).
\]
Consequently,
\[
\ker(\mathcal W_g-I)\cap\mathcal L_g
=\operatorname{span}\{J_g|\pi_g\rangle\},
\]
and every nonzero eigenphase of
$\mathcal W_g|_{\mathcal L_g}$ has magnitude at least
$\arccos(1-\delta_g)\geq\sqrt{2\delta_g}$.

The isometry preserves the initial overlap in \eqref{g-over}, 
\[
\bigl|\langle J_g\pi_g\mid J_gr_g\rangle\bigr|
=\bigl|\langle\pi_g\mid r_g\rangle\bigr|\ge c_g.
\]

We now repeat the construction in Section~\ref{pf-gibbs}, using
$A_g$ and $|r_g\rangle$ in place of $A_{a,b}$ and
$|\Sigma_{a,b}\rangle$, and making the substitutions
\[
(\pi,K,\square,\mathcal W,c,\delta_0)
\longmapsto
(\pi_g,K^g,J_g,\mathcal W_g,c_g,\delta_{g,0}).
\]

Under these substitutions, the overlap becomes
$\theta_g:=\langle\pi_g|r_g\rangle\geq c_g$, since both
states have nonnegative amplitudes. The parameter choices
in \eqref{sdfs} become
\[
\eta=\frac{c_g\varepsilon}{4},
\qquad
\Delta_0=\sqrt{2\delta_{g,0}}.
\]
As in Section~\ref{sec-q-gibbs},
$\mathcal W_g=-S$ on $\mathcal L_g^\perp$, so $\Delta_0$
also bounds the nonzero eigenphases on the full space.
The bound \eqref{eq:success-amplitude}, with $c$ replaced
by $c_g$, gives a success amplitude of at least $3c_g/4$.
The amplification argument in Section~\ref{pf-gibbs}
therefore gives success probability at least $32/(25\pi^2)$.
Using the same decoding step as in
\eqref{eq:full-filter-preparation}, followed by discarding
the workspace and measuring the path register,
\eqref{gg8} and the subsequent measurement estimate give
\[
d_{\mathrm{TV}}(\rho,\pi_g)\leq\varepsilon
\]
conditional on success.

Finally, substituting $c_g$ and $\delta_{g,0}$ for
$c$ and $\delta_0$ in \eqref{eq:generic-quantum-sampling-cost}
gives
\[
O\!\left(
\frac{1}{c_g\sqrt{\delta_{g,0}}}
\log\frac{2}{c_g\varepsilon}
\right)
\]
calls to the walk primitives for $K^g$, and $O(1/c_g)$
calls to $A_g$ and its inverse.

Taking
\[
\delta_{g,0}
=\frac{\omega_-^4e^{-\Delta_g}}{T^2N^3}
\]
from Theorem~\ref{cor:upper-bound-spectral-gap-g}
in the preceding query bound yields
\[
O\!\left(
\frac{TN^{3/2}e^{\Delta_g/2}}{c_g\omega_-^2}
\log\frac{2}{c_g\varepsilon}
\right),
\]
which is \eqref{comp-gibbs-g}. The preparation cost remains
$O(1/c_g)$ calls to $A_g$ and its inverse.

\end{proof}

\begin{proof}[Bounds used in Remark~\ref{rem:cost-classical-comparison}]
Write $\pi_g=\widehat{\mathbb V}_{a,b}$.
By the bridge formula \eqref{m-br}, the positivity bound
\eqref{om}, and $P^T(a,b)\leq1$, every $x\in\Omega_{a,b}$ satisfies
\[
\mathbb P_{a,b}(x)
=\frac{\prod_{t=0}^{T-1}P(x_t,x_{t+1})}{P^T(a,b)}
\geq\left(\frac{\omega_-}{N}\right)^T.
\]
Conditioning the weighted reference law
\eqref{eq:definition-of-R}--\eqref{v-hat} on the endpoints
cancels the fixed factor $e^{-g(a)}$. Hence
\[
\pi_g(x)
=\frac{\mathbb P_{a,b}(x)e^{-\sum_{t=1}^{T-1}g(x_t)}}
{\mathbb E_{\mathbb P_{a,b}}
 [e^{-\sum_{t=1}^{T-1}g(X_t)}]}
\geq \mathbb P_{a,b}(x)e^{-(T-1)\Delta_g},
\]
where the inequality uses the cost bounds in \eqref{del-g}.
Consequently,
\[
\log\frac1{\min_x\pi_g(x)}
\leq T\log\frac N{\omega_-}+(T-1)\Delta_g.
\]
Proposition~\ref{prop-rev-g} ensures that $K^g$ is reversible
and ergodic. Applying Theorem~\ref{thm-mix} with this estimate
and the gap bound \eqref{eq:cost-gibbs-gap} proves the
fixed-path update bound in
Remark~\ref{rem:cost-classical-comparison}.
 \end{proof} 
 
\section*{Acknowledgements}
The authors would like to thank Yifan Jiang for helpful discussions that significantly improved the quality of this paper.

\bibliographystyle{abbrvnat}
\bibliography{references}

\appendix

\section{Numerical Implementation of Endpoint Sampling}
\label{app:endpoint-precision}

This appendix justifies the numerical implementation
of the endpoint sampler in
Section~\ref{rem:static-output}, under the assumptions
of Theorem~\ref{th:quantum-static-bridge}.
Fix the scaling vectors $(\bar x,\bar y)$ returned by
Algorithm~\ref{alg:quantum-static-bridge}.
They specify the approximate coupling
$\overline{\Q}_{0,T}$ in
\eqref{eq:approx-static-bridge}.

The ideal sampling procedure draws $X_0\sim\mu_0$
and, conditional on $X_0=i$, draws $X_T$ from
\begin{equation}\label{eq:appendix-conditional-row}
    q_i(j)
    =\frac{\P_{0,T}(i,j)e^{\bar y_j}}
           {\sum_k\P_{0,T}(i,k)e^{\bar y_k}}.
\end{equation}
Its joint law is $\widetilde{\Q}_{0,T}$, defined in
\eqref{eq:sampled-static-bridge}. We show how to
implement this procedure with controlled numerical
error under Assumption~\ref{assump-box}(iv).

\paragraph{Error introduced by normalisation.}
Let $v_j\ge0$ have total mass $V=\sum_jv_j>0$, and
let $\widehat v_j\ge0$ satisfy
\[
    E:=\sum_j|\widehat v_j-v_j|<V.
\]
Then $\widehat V:=\sum_j\widehat v_j>0$, and
\[
\begin{aligned}
    \sum_j\left|
        \frac{v_j}{V}-\frac{\widehat v_j}{\widehat V}
    \right|
    &\le\frac{E}{V}
       +\frac{|\widehat V-V|}{V}\\
    &\le\frac{2E}{V}.
\end{aligned}
\]
Consequently,
\begin{equation}\label{eq:normalization-precision-bound}
    d_{\mathrm{TV}}\!\left(
        (v_j/V)_j,(\widehat v_j/\widehat V)_j
    \right)\le\frac{E}{V}.
\end{equation}
Thus, controlling the total absolute error in the
weights controls the error after normalisation.

\paragraph{Approximating a conditional row.}
Choose $j_*\in\argmax_j\bar y_j$ and define
\begin{equation}\label{eq:appendix-rescaled-weights}
    v_j=\P_{0,T}(i,j)e^{\bar y_j-\bar y_{j_*}}.
\end{equation}
Normalising these weights gives exactly
\eqref{eq:appendix-conditional-row}.
The positivity assumptions in
Theorem~\ref{th:quantum-static-bridge} ensure that
\[
    0<v_j\le1,
    \qquad
    V:=\sum_jv_j
    \ge p_i^\circ:=\P_{0,T}(i,j_*)>0.
\]
The lower bound $p_i^\circ$ requires only one
entry query in the selected row.

Fix $0<\xi_1<1$. Compute nonnegative dyadic
approximations satisfying
\begin{equation}\label{eq:appendix-weight-tolerance}
    |\widehat v_j-v_j|
    \le\frac{\xi_1p_i^\circ}{2N}
    \qquad\text{for every }j.
\end{equation}
Then $E\le\xi_1p_i^\circ/2<V$, and
\eqref{eq:normalization-precision-bound} gives
\[
    d_{\mathrm{TV}}\!\left(
        q_i,(\widehat v_j/\widehat V)_j
    \right)\le\frac{\xi_1}{2}.
\]
The remaining $\xi_1/2$ can accommodate further
normalisation or sampling errors. Alternatively,
the normalised dyadic weights can be sampled exactly
by the procedure below. Weights below the tolerance
in \eqref{eq:appendix-weight-tolerance} may be
rounded to zero.

Set $p_{\min}=\min_{i,j}\P_{0,T}(i,j)$.
Since $p_i^\circ\ge p_{\min}$, the approximations
in \eqref{eq:appendix-weight-tolerance} can be
represented using
\[
    O\!\left(\log\frac{N}{p_{\min}\xi_1}\right)
\]
fractional bits. Assumption~\ref{assump-box}(i)
bounds $p_{\min}$ below by an inverse polynomial
in $N$, making this bit count logarithmic in $N$
and $1/\xi_1$. The value of $p_{\min}$ need not
be computed: the selected entry $p_i^\circ$
determines the tolerance.
The precision required to read the scaling vectors
and evaluate \eqref{eq:appendix-rescaled-weights}
is accounted for under
Assumption~\ref{assump-box}(iv).

\paragraph{Sampling the approximate distributions.}
For the first marginal, take nonnegative dyadic
approximations to $\mu_0(i)$ with absolute error
at most $\xi_0/(2N)$, where $0<\xi_0<1$.
Since $\sum_i\mu_0(i)=1$, equation
\eqref{eq:normalization-precision-bound} bounds
the total variation error after normalisation
by $\xi_0/2$.

To sample normalised dyadic weights exactly,
write them with a common denominator as $a_j/2^b$,
where the $a_j$ are nonnegative integers, and set
$A=\sum_j a_j>0$.
Draw an integer uniformly from $\{0,\ldots,A-1\}$
and locate it in the cumulative sums of the $a_j$.
For $A>1$, a uniform integer is obtained using
$\lceil\log_2 A\rceil$ unbiased random bits and
rejecting values at least $A$. Each attempt succeeds
with probability greater than $1/2$, so the expected
number of attempts is less than two.
The case $A=1$ is deterministic.

Finding $j_*$, evaluating
\eqref{eq:appendix-rescaled-weights}, and forming
the cumulative sums for the selected row and for
$\mu_0$ require $O(N)$ entry evaluations and
scalar operations. This establishes the additional
sampling cost stated in
Section~\ref{rem:static-output}, with precision
and random-bit costs accounted for under
Assumption~\ref{assump-box}(iv).

\paragraph{Total endpoint error.}
Let $\widehat\mu_0$ and $\widehat q_i$ be the
distributions actually sampled, satisfying
\[
    d_{\mathrm{TV}}(\widehat\mu_0,\mu_0)\le\xi_0,
    \qquad
    \sup_i d_{\mathrm{TV}}(\widehat q_i,q_i)\le\xi_1.
\]
The implemented endpoint law is
$\widehat{\Q}_{0,T}(i,j)=\widehat\mu_0(i)\widehat q_i(j)$.
Comparing it with
$\widetilde{\Q}_{0,T}$ in
\eqref{eq:sampled-static-bridge}, and inserting
the intermediate law $\mu_0(i)\widehat q_i(j)$,
gives
\[
\begin{aligned}
    d_{\mathrm{TV}}(
        \widehat{\Q}_{0,T},\widetilde{\Q}_{0,T})
    &\le d_{\mathrm{TV}}(\widehat\mu_0,\mu_0)
       +\sum_i\mu_0(i)
          d_{\mathrm{TV}}(\widehat q_i,q_i)\\
    &\le\xi_0+\xi_1.
\end{aligned}
\]
Let $\varepsilon_s$ denote the accuracy parameter
used in Theorem~\ref{th:quantum-static-bridge}.
For any scaling output satisfying
$d_{\mathrm{TV}}(\overline{\Q}_{0,T},\Q_{0,T}^*)
\le\varepsilon_s$, the triangle inequality and
\eqref{eq:sampled-static-bridge-TV} yield
\[
    d_{\mathrm{TV}}(\widehat{\Q}_{0,T},\Q_{0,T}^*)
    \le2\varepsilon_s+\xi_0+\xi_1.
\]
This proves
\eqref{eq:sampled-static-bridge-precision}.

On the stronger event established in
\eqref{eq:static-objective-error}--\eqref{eq:static-objective-KL-identity},
we have
$\operatorname{KL}(\Q_{0,T}^*\|\overline{\Q}_{0,T})
\le2\varepsilon_s^2$.
Then \eqref{eq:sampled-static-bridge-KL} and
Pinsker's inequality improve the endpoint bound to
\[
    d_{\mathrm{TV}}(\widehat{\Q}_{0,T},\Q_{0,T}^*)
    \le\varepsilon_s+\xi_0+\xi_1.
\]
Both guarantees apply to each fixed realisation
of the scaling vectors satisfying the corresponding
error bound.
\end{document}